\newtheorem{theorem}{Theorem}
\newtheorem{corollary}[theorem]{Corollary}
\newtheorem{observation}{Observation}
\newtheorem{lemma}[theorem]{Lemma}
\newtheorem{claim}{Claim}
\theoremstyle{claim}
\theoremstyle{definition}
\newtheorem{definition}{Definition}
\theoremstyle{remark}	
\title{Near-Optimal Decremental Hopsets with Applications}
\date{}
\author{Jakub Łącki \\
Google Research\\
jlacki@google.com
\and Yasamin Nazari \\
University of Salzburg\footnote{This work was conducted in part while the author was an intern at Google and a PhD student at Johns Hopkins University. Supported in part by NSF award CCF-1909111 and by Austrian Science Fund (FWF) grant P 32863-N.}\\
ynazari@cs.sbg.ac.at
}
\begin{document}
\maketitle
\thispagestyle{empty}

\begin{abstract}
     Given a weighted undirected graph $G=(V,E,w)$, a hopset $H$ of \emph{hopbound} $\beta$ and \emph{stretch} $(1+\epsilon)$ is a set of edges such that for any pair of nodes $u, v \in V$, there is a path in $G \cup H$ of at most $\beta$ hops, whose length is within a $(1+\epsilon)$ factor from the distance between $u$ and $v$ in $G$. 
     We show the first efficient decremental algorithm for maintaining hopsets with a \textit{polylogarithmic} hopbound.
     The update time of our algorithm matches the best known static algorithm up to polylogarithmic factors. All the previous decremental hopset constructions had a \textit{superpolylogarithmic} (but subpolynomial) hopbound of $2^{\log^{\Omega(1)} n}$ [Bernstein, FOCS'09; HKN, FOCS'14; Chechik, FOCS'18].

     By applying our decremental hopset construction, we get improved or near optimal bounds for several distance problems.
     Most importantly, we show how to decrementally maintain $(2k-1)(1+\epsilon)$-approximate all-pairs shortest paths (for any constant $k \geq 2)$, in $\tilde{O}(n^{1/k})$ amortized update time\footnote{Throughout this paper we use the notation $\tilde{O}(f(n))$ to hide factors of $O(\text{polylog } (f(n)))$.}  and $O(k)$ query time.
     This improves (by a polynomial factor) over the update-time of the best previously known decremental algorithm in the \textit{constant} query time regime. Moreover, it improves over the result of [Chechik, FOCS'18] that has a query time of $O(\log \log(nW))$, where $W$ is the aspect ratio, and the amortized update time is $n^{1/k}\cdot(\frac{1}{\epsilon})^{\tilde{O}(\sqrt{\log n})})$. For sparse graphs our construction nearly matches the best known static running time / query time tradeoff. 
    
     We also obtain near-optimal bounds for maintaining approximate multi-source shortest paths and distance sketches, and get improved bounds for approximate single-source shortest paths. Our algorithms are randomized and our bounds hold with high probability against an \textit{oblivious} adversary.

\end{abstract}

\section{Introduction}

Given a weighted undirected graph $G=(V,E,w)$, a hopset $H$ of \emph{hopbound} $\beta$ and \emph{stretch} $(1+\epsilon)$ (or, a $(\beta, 1+\epsilon)$-hopset) is a set of edges such that for any pair of nodes $u, v \in V$, there is a path in $G \cup H$ of at most $\beta$ hops, whose length is within a $(1+\epsilon)$ factor from the distance between $u$ and $v$ in $G$ (see Definition~\ref{def:hopset} for a formal statement).

Hopsets, originally defined by \cite{cohen2000}, are widely used in distance related problems in various settings, such as parallel shortest path computation \cite{cohen2000,miller2015,elkin2019almost, elkin2019RNC}, distributed shortest path computation \cite{elkin2019journal, nanongkai2014, censor2019}, routing tables \cite{elkin2017}, and distance sketches \cite{elkin2017, dinitz2019}.
In addition to their direct applications, hopsets have recently gained more attention as a fundamental object (e.g.~\cite{merav2020, elkin2019journal,abboud2018,huang2019}), and are known to be closely related to several other fundamental objects such as additive (or near-additive) spanners and emulators~\cite{elkin2020survey}.

A key parameter of a hopset is its hopbound. In many settings, after constructing a hopset, we can approximate distances in a time that is proportional to the hopbound. For instance, in parallel or distributed settings a hopset with a hopbound of $\beta$ allows us to compute approximate single-source shortest path in $\beta$ parallel rounds (e.g. by using Bellman-Ford). 
For many applications, such as approximate APSP (all-pairs shortest paths), MSSP (multi-source shortest paths), computing distance sketches, and diameter approximation, where we require computing distances from \textit{many} sources, we are interested in the regime where the hopbound is polylogairthmic. Indeed, we obtain improved (and in some cases near-optimal) bounds for several of these problems in decremental settings.

In this paper, we study the maintenance of hopsets in a dynamic setting.
Namely, we give an algorithm that given a weighted undirected graph $G$ maintains a hopset of $G$ under edge deletions.
Our algorithm covers a wide range of hopbound/update time/hopset size tradeoffs. Importantly, we get the first efficient algorithm for decrementally maintaining a hopset with a \textit{polylogarithmic} hopbound.
In this case, assuming $G$ initially has $m$ edges and $n$ vertices, our algorithm takes $O(mn^{\rho})$ time, given any constant $\rho > 0$, and maintains a hopset of polylogarithmic hopbound and $1+\epsilon$ stretch.
This matches (up to polylogarithmic factors) the 
running time of the best known static algorithm~\cite{elkin2019RNC, elkin2019journal} for computing a hopset with polylogarithimic hopbound and $(1+\epsilon)$ stretch.

\begin{theorem}
Given an undirected graph $G=(V,E)$ with polynomial weights\footnote{If weights are not polynomial the $\log n$ factor will be replaced with $\log W$ in the hopbound, and a factor of $\log^2 W$ will be added to the update time, where $W$ is the aspect ratio (the ratio between largest and smallest distance).}, subject to edge deletions, we can maintain a $(\beta, 1+\epsilon)$-hopset of size $\tilde{O}(n^{1+\frac{1}{2^k -1}})$ in total expected update time  $\tilde{O}(\frac{\beta}{\epsilon} \cdot (m+n^{1+\frac{1}{2^k -1}})n^{\rho})$, where $\beta= (O(\frac{\log n}{\epsilon} \cdot (k+1/\rho)))^{k+1/\rho+1}$, $k \geq 1$ is an integer, $0 < \epsilon <1$ and $\frac{2}{2^k-1} < \rho <1$. 
\end{theorem}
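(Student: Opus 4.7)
The plan is to adapt a Thorup--Zwick / Elkin--Neiman style static hopset construction to the decremental setting. We will sample a hierarchy $V = A_0 \supseteq A_1 \supseteq \cdots \supseteq A_k = \emptyset$, where each vertex is included independently in $A_{i+1}$ with an appropriately tuned probability so that the sizes decrease geometrically and the resulting hopset has total size $\tilde{O}(n^{1 + 1/(2^k-1)})$. For each vertex $v$ and each level $i$, the hopset $H$ will contain an edge from $v$ to each vertex $u$ in the \emph{bunch} $B(v) = \bigcup_i \{u \in A_i : d_G(v,u) < d_G(v, A_{i+1})\}$, carrying a $(1+\epsilon)$-approximation of $d_G(v,u)$. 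A standard Thorup--Zwick-style analysis then yields the claimed stretch and (combined with the refinement below) the claimed hopbound.

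The main technical task will be to maintain the distances $d_G(v,u)$ and the bunches $B(v)$ under edge deletions. For each sampled source $s \in A_i$ we plan to maintain a hop-bounded decremental approximate single-source shortest path tree in $G \cup H$, limited to $\beta$ hops and truncated at a carefully chosen distance threshold. The $\beta$-hop constraint allows each update triggered by a deletion to be processed by a Bellman--Ford-like relaxation restricted to the affected subtree, and the oblivious adversary assumption should give the standard amortized bound on the number of affected (source, vertex) pairs.

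To drive the hopbound down to polylogarithmic, we will build the hopset in $k + \lceil 1/\rho \rceil$ phases. The first $k$ phases use the sampling hierarchy; each subsequent phase uses the hopset built so far to accelerate an additional refinement, reducing the effective hopbound by an $\tilde{O}(\log n/\epsilon)$ factor at the cost of one extra factor in the $(k + 1/\rho + 1)$-exponent of $\beta$. The $n^\rho$ factor in the update time arises because each decremental MSSP source is allowed to explore up to $\tilde{O}(n^\rho)$ vertices per relaxation round before relying on hopset edges; this trade-off is what keeps $\beta$ polylogarithmic while keeping the per-phase maintenance efficient.

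The main obstacle will be the mutual dependence between the hopset and its own maintenance algorithm: the MSSP structure rooted at a source $s \in A_i$ needs to use hopset edges that are themselves being updated. We plan to resolve this by processing the levels in strict order, so that when maintaining the tree at level $i$ we use only the hopset edges produced in levels $< i$, which are already maintained by earlier phases. Summing the update cost over sources, levels, and deletions, using that $|A_i|$ decreases geometrically and that each MSSP update touches at most $\tilde{O}(n^\rho \cdot \beta/\epsilon)$ work, should yield the claimed $\tilde{O}\bigl(\frac{\beta}{\epsilon}(m+n^{1+1/(2^k-1)}) n^\rho\bigr)$ total update time bound.
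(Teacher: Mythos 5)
Your high-level plan (maintain Elkin--Neiman/Thorup--Zwick style bunches decrementally, bound the number of clusters touching each vertex to roughly $n^\rho$, and exploit the oblivious adversary for the amortized ES-tree charging argument) matches the starting point of the paper, and the stretch/size intuition is right. But you are missing the mechanism that makes the total update time \emph{independent of the distance range}, and without it the plan does not deliver the claimed bound.

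Concretely: maintaining each cluster via a hop/depth-bounded decremental tree rooted at its center gives total cost proportional to the \emph{depth} to which that tree is maintained. If you need to certify the hopset property for all pairs up to distance $d$, the natural Roditty--Zwick adaptation costs $\tilde{O}(dmn^\rho)$, which for polynomial distances is nowhere near $\tilde{O}(\frac{\beta}{\epsilon} m n^\rho)$. Your phrase ``truncated at a carefully chosen distance threshold'' implicitly acknowledges this, but you never say how the threshold can be chosen independently of $nW$. The paper resolves this by a two-step device you do not mention: (i) \emph{path doubling across distance scales} --- maintain a hierarchy $H_0, H_1, \dots, H_{\log(nW)}$ where $\bigcup_{r<j} H_r$ already handles all distances below $2^j$, so that any shortest path of length $\in[2^{j-1},2^j)$ already has a $(2\beta+1)$-hop approximation in $G \cup \bigcup_{r<j} H_r$; and (ii) the \emph{Klein--Subramanian rounding} of edge weights at each scale $j$, which converts those $(2\beta+1)$-\emph{hop}-bounded paths into $O(\beta/\epsilon)$-\emph{depth}-bounded paths in a scaled integer-weight graph, so each cluster tree only ever needs depth $O(\beta/\epsilon)$. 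Your phases are indexed by sampling level $i$, not by distance scale $j$; processing sampling levels in order does not bound the exploration depth, because clusters at any level $i$ can have centers arbitrarily far from their members. The two hierarchies serve different purposes, and the scale hierarchy is what removes the $d$ factor.

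There is a second gap created by the first: once $H_j$ is built from lower-scale hopset edges, those edges are \emph{inserted} into the graph used to maintain $H_{j+1}$, so the problem you actually face is no longer purely decremental. The paper handles this with a monotone ES-tree variant (levels never decrease; stretched edges are ignored for level updates), and then needs a nontrivial threefold induction (on sampling level $i$, scale $j$, and time $t$) to show that the stretch still composes across scales, scaling error, and stretched edges. Your proposal does not address insertions or give any replacement for this stretch argument, and the standard Thorup--Zwick analysis does not survive the monotone-ES-tree modification or the cross-scale composition of errors without it.
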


In the decremental setting, to the best of our knowledge, the previous state-of-the art hopset constructions have a hopbound of $2^{\tilde{O}(\log^{3/4} n)}$~\cite{henzinger2014}, or
$(1/\epsilon)^{\tilde{O}(\sqrt{\log n})}$~\cite{bernstein2009,chechik2018}. As a special case, by setting $\rho=(2^k-1)^{-1}=\frac{\log \log n}{\sqrt{\log n}}$, we can maintain a hopset with hopbound $2^{\tilde{O}(\sqrt{\log n})}$ in $2^{\tilde{O}(\sqrt{\log n})}$ amortized time.
More importantly, by setting $\rho$ and $k$ to a constant, we can maintain a hopset of \textit{polylogarithmic} hopbound. 

While hopsets are extensively studied in other models of computation (e.g.~distributed and parallel settings), their applicability in dynamic settings is less understood.
Examples of results utilizing hopsets include the state-of-the art decremental SSSP algorithm for undirected graphs by Henzinger, Krinninger and Nanongkai~\cite{henzinger2014}, and implicit hopsets considered in \cite{bernstein2009, chechik2018}. As stated, these decremental hopset algorithms as stated only provide a \emph{superpolylogarithmic} hopbound. It may be possible (while not discussed) to use the hop-reduction techniques of \cite{henzinger2014} (inspired by a similar technique in \cite{bernstein2009}) to obtain a wider-range of tradeoffs, however to the best of our knowledge these techniques do not lead to near-optimal size/hopbound tradeoffs\footnote{ In particular, in all regimes the algorithm of \cite{henzinger2014} gives a hopset with size that is super-linear in number of edges $m$  (e.g. $m^{1+p}$ for a parameter $p$), while our hopset size is $O(n^{1+p})$ for some (other but similar) parameter $p$, which is a constant when the hopbound is polylogarithmic.  Moreover, our techniques lead to near-optimal approximate APSP, whereas it is unclear how to get comparable bounds using techniques in \cite{henzinger2014}, as they do not maintain Thorup Zwick-based clusters.}. 
Hence our result constitutes the first near-optimal decremental algorithm for maintaining hopsets with in a wide-range of settings including polylogarithmic hopbound.

\paragraph{Discussion on hopset limitations and alternative techniques.}

In \cite{abboud2018} it was shown that for a $(\beta, 1+\epsilon)$-hopset with size $n^{1+\frac{1}{{2^k}-1}- \delta}$ for any fixed $k, \epsilon$ and $\delta >0$ we must have\footnote{$\Omega_k$ hides exponential a factor of roughly $1/(k2^k)$. As written in \cite{abboud2018} they assume $k$ is constant (and hence the sparse hopset regime is not covered), but they also indicate that a tighter analysis could change the exact relationship between $k$ and $\epsilon$ and hence allow a better $k$ dependence and covering the sparse case (see Theorem 4.6 and Remark 4.7 in \cite{abboud2018}).} $\beta = \Omega_k(\frac{1}{\epsilon})^k$. Their lower bound suggests that we cannot construct a $(\beta, 1+\epsilon)$-hopset of size $\tilde{O}(n)$ with $\beta = \textrm{poly} \log(n)$ hopbound, implying that hopsets cannot be used for obtaining optimal time (i.e.~polylog amortized time) for \textit{sparse} graphs and \textit{very small} $\epsilon$.
However when the graph is slightly denser ($|E|= n^{1+\Omega(1)}$), the approximation factor is slightly larger (see e.g.~\cite{merav2020, elkin2019almost}), or we aim to compute distances from many sources (in APSP or MSSP), using hopsets may still lead to optimal algorithms. Indeed, we show that our decremental hopsets allow us to obtain a running time matching the best static algorithm (up to polylogarithmic factors) both in $(2k-1)$-APSP and $(1+\epsilon)$-MSSP. We leave it as an open problem if hopsets can be used to obtain linear time algorithms for SSSP with \textit{larger} approximation factors (e.g.~$\epsilon \geq 1$), since as stated, the lower bound of \cite{abboud2018} does not apply in this case. 

It is worth noting that in Theorem \ref{thm:restricted_hopset} we first give a decremental algorithm that maintains static hopsets of \cite{elkin2019RNC} that matches the size/hopbound tradeoff in the lower bound of \cite{abboud2018}. However, as we will see, this algorithm has a large update time, and thus we propose a new hopset with slightly worse size/hopbound tradeoff that can be maintained much more efficiently. This efficient variant has additional polylogarithmic (in aspect ratio) factors in the hopbound relative to the existentially optimal construction.

Finally, for \textit{single source} shortest path computation in other models recently algorithms based on continuous optimization techniques are proposed (e.g. \cite{andoni2020,becker2021,li2020}) that outperform algorithms based only on combinatorial objects such as hopsets/emulators. These optimization techniques lead to much better dependence on $\epsilon$, but are less suitable when there are many sources, as the running time scales with the number of sources. Interestingly, the authors of \cite{andoni2020} use low-hop combinatorial structures with \textit{larger (polylogrithmic)} stretch as a subroutine in their continuous optimization framework. Hence understanding both combinatorial and optimization directions seems crucial for distance computation in general. 

\subsection{Applications of Our Decremental Hopsets}

To illustrate applicability of our decremental hopset algorithm, we show how it yields improved algorithms for decremetanlly maintaining shortest paths from a fixed set $S$ of sources.
We consider different variants of the problem which differ in the size of $S$: the single-source shortest paths (SSSP) problem ($|S|=1$), all-pairs shortest paths (APSP) problem ($S = n$, where $n$ is the number of vertices of the input graph), as well as the multi-source shortest paths (MSSP) problem ($S$ is of arbitrary size), which is a generalization of the previous two.

\paragraph{Near-Optimal approximate APSP.}
We give a new decremental algorithm for maintaining approximate all-pairs shortest paths (APSP) with \text{constant} query time.

\begin{theorem}[Approximate APSP]
For any constant integer\footnote{The $k$ here should not be confused with the parameter $k$ in the hopset size.} $k \geq 2$, there is a data structure that can answer $(2k-1)(1+\epsilon)$-approximate distance queries in a given a weighted undirected graph $G=(V, E, w)$ subject to edge deletions.
The total expected update time over any sequence of edge deletions is $\tilde{O}(mn^{1/k})$ and the expected size of the data structure is $\tilde{O}(m+n^{1+1/k})$.
Each query for the distance between two vertices is answered in $O(k)$ worst-case time.
\end{theorem}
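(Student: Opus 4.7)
The plan is to adapt the Thorup--Zwick (TZ) distance oracle framework to the decremental setting by replacing every static shortest-path computation in its construction with a bounded-hop computation enabled by the decremental $(1+\epsilon)$-hopset of Theorem~1.1, instantiated with parameters that make the hopbound $\beta$ polylogarithmic, the hopset size at most $\tilde{O}(n^{1+1/k})$, and the hopset update time at most $\tilde{O}(m n^{1/k})$ (achievable by setting the internal hopset parameter and $\rho$ to appropriate constants depending on $k$). Concretely, sample the usual decreasing hierarchy $V = A_0 \supseteq A_1 \supseteq \cdots \supseteq A_k = \emptyset$ where $A_{i+1}$ keeps each vertex of $A_i$ independently with probability $n^{-1/k}$, and for every $v$ maintain (i) a pivot $p_i(v) \in A_i$ together with an approximate distance $\tilde{d}(v, A_i)$ and (ii) the clusters $C(w) = \{v : \tilde{d}(v,w) < \tilde{d}(v, A_{\mathrm{lvl}(w)+1})\}$ for each $w \in A_i \setminus A_{i+1}$; the bunches $B(v)$ are stored as the inverse of the clusters.

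The workhorse is a bounded-hop decremental multi-source SSSP oracle. Since every shortest path in $G$ admits a $(1+\epsilon)$-approximation using at most $\beta$ hops in $G \cup H$, each pivot distance $\tilde{d}(v, A_i)$ is maintained by a $\beta$-truncated monotone Even--Shiloach tree rooted at a virtual super-source attached to $A_i$, and each cluster $C(w)$ by a similar $\beta$-truncated tree rooted at $w$, pruned at a vertex $v$ using the current value of $\tilde{d}(v, A_{\mathrm{lvl}(w)+1})$. Because distances only grow under deletions, cluster memberships can be maintained in a monotone, lazy fashion in the style of the HKN decremental SSSP machinery. Combining this with the standard TZ sampling bound $\mathbb{E}\bigl[\sum_w |C(w)|\bigr] = O(n^{1+1/k})$ and an amortized $\tilde{O}(\beta)$ charge per cluster edge relaxation yields a total update cost of $\tilde{O}(m n^{1/k})$, which also absorbs the hopset maintenance cost from Theorem~1.1.

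Queries follow the textbook TZ procedure: alternate between $u$ and $v$, move one level up in the hierarchy at each step, and return as soon as the current pivot of one endpoint lies in the bunch of the other. This terminates in at most $k$ steps, and since every stored $\tilde{d}$ is within $(1+\epsilon)$ of the true distance, the overall stretch is $(2k-1)(1+\epsilon)$, achieved in $O(k)$ worst-case query time with expected data structure size $\tilde{O}(m + n^{1+1/k})$ (the $m$ term coming from storing $G$ itself).

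The main obstacle will be the interaction between the adaptive pruning thresholds $\tilde{d}(v, A_{i+1})$ and the clusters $C(w)$ at level $i$: when a threshold increases, new vertices must be added to some cluster, and a naive recomputation would blow the update budget. I would handle this top-down, processing the hierarchy from level $k-1$ down to $0$ so that when level $i$ is updated the thresholds at level $i+1$ are already stable within the current distance scale, and using the lazy monotone ES-tree discipline of HKN so that each cluster insertion can be charged either to an edge-weight increase or to a single hopset change at that level. The oblivious-adversary assumption of Theorem~1.1 carries over without modification because the deletion sequence is fixed independently of the random bits used to sample both the hierarchy and the hopset.
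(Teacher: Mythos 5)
The overall skeleton of your proposal — decrementally maintain a Thorup--Zwick oracle while simultaneously maintaining the $(\beta,1+\epsilon)$-hopset of Theorem~1.1 with $\beta=\mathrm{polylog}(n)$, maintain clusters and bunches via monotone ES-trees pruned at the pivot thresholds, invoke the standard geometric cluster-size bound, and query exactly as in static TZ — matches the paper's plan (Section~\ref{sec:APSP}). But there is a genuine gap in the middle of your argument: you never bring in the scaling framework (Lemma~\ref{lem:rounding}), and without it the phrase ``$\beta$-truncated monotone Even--Shiloach tree'' does not do the work you need it to do. The hopset guarantees $\beta$-\emph{hop} paths in $G\cup H$, but an ES-tree's update cost is proportional to its \emph{weighted depth}, not its hop length. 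Cutting the tree at weighted depth $\beta=\mathrm{polylog}(n)$ only covers pairs at polylogarithmic distance, while APSP must handle distances up to $nW$; running it to full depth blows the budget to $\tilde{O}(nWm)$. Your statement that you get ``an amortized $\tilde{O}(\beta)$ charge per cluster edge relaxation'' silently conflates hops with depth, and is exactly the step that needs justification.

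The paper resolves this by maintaining ES-trees on the sequence of scaled graphs $G^1,\dots,G^{\log(nW)}$, where $G^j=\textsc{Scale}(G\cup\bar H_j,2^j,\epsilon/3,2\beta+1)$: in $G^j$, every $(2\beta+1)$-hop path of length roughly $2^j$ becomes a path of integer depth $O(\beta/\epsilon)$, so an ES-tree truncated at depth $O(\beta/\epsilon)$ on $G^j$ captures all relevant pairs at scale $j$, and the final estimate is $\min_j \eta(2^j,\epsilon/3)L_j(\cdot,\cdot)$. Restating your construction on this family of scaled graphs (and charging per scale via Lemma~\ref{lem:dec_bound_cluster}) is what actually yields the $\tilde{O}(\beta m n^{1/k}/\epsilon)$ bound in Theorem~\ref{thm:oracle_time}. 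Two smaller points: your ``top-down'' (level $k-1$ to $0$) processing is not what the paper does — Algorithm~\ref{alg:restricted_hopset} sweeps $i=0$ upward, refreshing $L(\cdot,A_{i+1})$ just before using it, which already respects the dependency you are worried about — and your stretch claim (``since every stored $\tilde d$ is within $(1+\epsilon)$, the overall stretch is $(2k-1)(1+\epsilon)$'') presumes the hard part; the paper has to track the multiplicative losses from hopset, rounding, and monotone stretching separately to show they compose to $(1+\epsilon/3)^2\le 1+\epsilon$.
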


Our result improves upon a decremental APSP algorithm by Chechik~\cite{chechik2018} in a twofold way.
First, for constant $k$, our update time bound is better by a $(1/\epsilon)^{{O}(\sqrt{\log n})}$ factor.
Second, we bring down the query time from $O(\log \log (n W))$ to constant.
We note that in the area of distance oracles
a major goal is to preprocess a data structure that can return a distance estimate in \textit{constant} time \cite{mendel2007, wulff2012, roditty2005det, chechik2014}\footnote{We need to store the original graph in addition to the distance oracle in order to update the distances and maintain correctness, however we do \textit{not} need the whole graph for \textit{querying distances} as we will also point out in describing the applications in maintaining distance sketches.}.

Our results match the best known static algorithm with the same tradeoff (up to $(1+\epsilon)$ in the stretch and polylog in time) by Thorup-Zwick \cite{TZ2005} for sparse graphs. For dense graphs there have been improvements by \cite{wulff2012} in static settings. 

Prior to \cite{chechik2018}, Roditty and Zwick \cite{roditty2004} gave an algorithm for maintaining Thorup-Zwick distance oracles in total time $\tilde{O}(mn)$, stretch $(2k-1)(1+\epsilon)$ and $O(k)$ query time for \textit{unweighted graphs}. Later on, Bernstein and Roditty \cite{bernstein2011} gave a decremental algorithm for maintaining Thorup-Zwick distance oracles in $O(n^{2+1/k+o(1)})$ time using emulators also only for \textit{unweighted graphs}.
\paragraph{Distance Sketches.} 
Another application of our hopsets with polylogarithmic hopbound is a near-optimal decremental algorithm for maintaining distance sketches (or distance labeling); an important tool in the context of distance computation. The goal is to store a small amount of information, a sketch, for each node, such that the distance between any pair of nodes can be approximated only using their sketches (without accessing the rest of the graph). Distance sketches are particularly important in networks, and distributed systems \cite{sarma2015, elkin2017}, and large-scale graph processing \cite{dinitz2019}. Their significance is that at query time we only need to access/communicate the small sketches rather than having to access the whole graph. This is specially useful for processing large data when queries happen more frequently than updates. 

The Thorup-Zwick \cite{TZ2005} algorithm can be used to obtain distance sketches of expected size $O(kn^{1/k})$ (for each node) that supports $(2k-1)$-approximate queries in $O(k)$ time (in static settings), and this is known to be tight assuming a well-known girth conjecture. Our approximate APSP data structure has the additional property that the information stored for each node is a distance sketch of expected size $O(kn^{1/k})$ that supports $(2k-1)(1+\epsilon)$-approximate queries. Hence we can maintain distance sketches that almost match the guarantees of the best static algorithm. More specifically, for a fixed size our algorithm matches the best known static construction up to a $(1+\epsilon)$-factor in the stretch and polyloagrithmic factors in the update time. In decremental settings, distance oracles of \cite{TZ2005}, and hence distance sketches with the guarantees described are studied by \cite{roditty2004, bernstein2011}, but our total update time of $\tilde{O}(mn^{1/k})$ (for constant $k \geq 2$) significantly improves over these results. In particular \cite{roditty2004} maintains these distance sketches in a total update time of $\Omega(mn)$, and \cite{bernstein2011} requires total update time of $O(n^{2+1/k+o(1)})$.

\paragraph{Near-Optimal $(1+\epsilon)$-MSSP.} Our next result is a near-optimal algorithm for multi-source shortest paths.

\begin{theorem}[MSSP]
There is a data structure which given a weighted undirected graph $G=(V, E)$ \emph{explicitly} maintains $(1+\epsilon)$-approximate distances from a set of $s$ sources in $G$ under edge deletions, where $0<\epsilon<\frac{1}{2}$ is a constant.
Assuming that $|E|= n^{1+\Omega(1)}$ and $s=n^{\Omega(1)}$, the total expected update time is $\tilde{O}(sm)$.
The data structure is randomized and works against an oblivious adversary.
\end{theorem}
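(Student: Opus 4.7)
The plan is to combine the decremental hopset of Theorem~1 with one hop-restricted monotone Even--Shiloach tree per source. First, I would instantiate Theorem~1 with a constant integer $k$ chosen large enough that $1 + 1/(2^k - 1) \le \log_n m$, which is possible because $|E| = n^{1+\Omega(1)}$, and with a constant $\rho$ chosen small enough that $n^\rho = O(s \cdot \mathrm{polylog}(n))$, which is possible because $s = n^{\Omega(1)}$. These choices yield a $(\beta, 1+\epsilon)$-hopset $H$ of polylogarithmic hopbound $\beta$ and size $\tilde{O}(m)$, maintained under deletions in total time $\tilde{O}(mn^\rho) = \tilde{O}(sm)$.

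Next, for each source $v$ in the source set $S$, I would explicitly maintain $(1+\epsilon)$-approximate distances from $v$ by running a monotone ES-tree of hop-depth $\beta$ on the auxiliary graph $G \cup H$, where $G$-edges retain their original weights and each $H$-edge $(x,y)$ carries the weight currently reported by the hopset data structure (an upper bound on $d_G(x,y)$). Both $G$-deletions and $H$-weight updates are pushed into the tree as weight-increase events. Correctness follows from the hopset guarantee: for every $u$, there is a path of at most $\beta$ hops in $G \cup H$ whose length is within $(1+\epsilon)$ of $d_G(v,u)$, so restricting the tree to hop-depth $\beta$ certifies a $(1+\epsilon)$-approximation.

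The running time of a monotone hop-bounded ES-tree with $m'$ edges and hop-bound $\beta$, processed against a monotonically non-decreasing weight sequence, is $\tilde{O}(m'\beta/\epsilon)$. Substituting $m' = m + |H| = \tilde{O}(m)$, $\beta = \mathrm{polylog}(n)$ and constant $\epsilon$ yields $\tilde{O}(m)$ per source and $\tilde{O}(sm)$ overall. Adding the $\tilde{O}(sm)$ cost of maintaining $H$ leaves the same asymptotic bound.

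The main obstacle will be cleanly reconciling the two interleaved dynamic objects: $G$ is purely decremental, but the hopset $H$ may insert and delete its own edges as its internal hierarchical clusters evolve, and its edge weights are \emph{estimates} rather than true edge lengths. The crucial invariant to establish is that every $H$-edge weight is, throughout its lifetime, a monotonically non-decreasing upper bound on the current $G$-distance between its endpoints; this is built into the way Theorem~1 refreshes hopset weights during deletions. Once this is in hand, the analysis of Bernstein's monotone ES-tree transfers essentially verbatim, and a standard partition of the distance range into $O(\log n)$ geometric scales (with $\epsilon$ rescaled by $\Theta(1/\log n)$) absorbs the aspect-ratio factor into the tilde notation.
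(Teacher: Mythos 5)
Your proposal matches the paper's proof in all essentials: maintain a polylog-hopbound $(\beta,1+\epsilon)$-hopset via Theorem~1 with $k$ and $\rho$ tuned so that the hopset is no larger than $m$ and $n^\rho=\tilde O(s)$, then run one monotone ES-tree per source on $G\cup H$ restricted to $\beta$ hops, implemented via the Klein--Subramanian scaling into $O(\log(nW))$ depth-$O(\beta/\epsilon)$ trees; the crucial invariant you single out---that hopset edge weights are monotonically non-decreasing overestimates of the current $G$-distances---is exactly what the paper's monotone-ES-tree machinery (Observation~17, Theorem~11) establishes. The only cosmetic imprecision is speaking of a ``hop-bounded ES-tree'' as a primitive with $\tilde O(m'\beta/\epsilon)$ running time: that object does not exist directly, and the correct formulation (which you then gesture at) is depth-bounded ES-trees on the scaled graphs $G^1,\dots,G^{\log(nW)}$, returning the minimum unscaled estimate.
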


We note that total update time matches (up to polylogarithmic factors) the running time of the best known \emph{static} algorithm for computing $(1+\epsilon)$-approximate distances from $s$ sources for a wide range of graph densities.
While for very dense graphs, using algorithms based on fast matrix multiplication is faster, the running time of our decremental algorithm matches the best known results in the static settings (up to polylogarithmic factors) whenever $ms = n^\delta$, for a constant $\delta \in (1, 2.37)$.

In the dynamic setting, our algorithm improves upon algorithms obtained by using hopsets of Henzinger, Krinninger and Nanongkai~\cite{henzinger2014}, or emulators of Chechik~\cite{chechik2018}, both of which give a total update time of $O(sm \cdot 2^{\tilde{O}(\log^{\gamma} n)}), 0<\gamma <1$ (for a constant $\gamma$).
In particular, by maintaining a hopset with polylogarithmic hopbound in $\tilde{O}(sm)$ time, we can maintain approximate SSSP from each source in $\tilde{O}(m)$ time. In contrast, in~\cite{henzinger2014, chechik2018} with hopset of hopbound $2^{\tilde{O}(\log^{\gamma} n)}$ is maintained, which if one simply applies existing techniques, results in a total update time of $m2^{\tilde{O}(\log^{\gamma} n)}$. 
In the general case, i.e., for very sparse graphs, the update bound of our algorithm is $sm{2^{\tilde{O}(\sqrt{\log n}}})$, which is similar but slightly better than the bound obtained by~\cite{henzinger2014}, and slightly improves over dependence on $\epsilon$ over \cite{chechik2018}.

\paragraph{Improved bounds for $(1+\epsilon)$-SSSP.} 
Finally, in order to better demonstrate how our techniques compare to previous work, we show that we can obtain a slightly improved bound for decremental single-source shortest paths.

\begin{theorem}
Given an undirected and weighted graph $G=(V, E)$, there is data structure for maintaining $(1+\epsilon)$-approximate distances from a source $s_0 \in V$ under edge deletions, where $0 <\epsilon<1$ is a constant and $|E|= n \cdot 2^{\tilde{\Omega}(\sqrt{\log n})}$. The total expected update time of the data structure is $m \cdot 2^{\tilde{O}(\sqrt{\log n})}$. There is an additional factor of $O(\frac{1}{\epsilon})^{\frac{\sqrt{\log n}}{\log \log n}}$ in the running time for non-constant $\epsilon$. 
\end{theorem}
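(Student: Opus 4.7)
The plan is to combine the decremental hopset of Theorem~1 with a hop-depth-bounded decremental SSSP subroutine. I would invoke Theorem~1 with parameters chosen so that $2^k-1$ and $1/\rho$ are both $\Theta(\sqrt{\log n}/\log\log n)$ (i.e.\ $k=\Theta(\log\log n)$). A direct calculation from the formulas in Theorem~1 then yields a hopset $H$ of hopbound $\beta = 2^{\tilde{O}(\sqrt{\log n})}\cdot (1/\epsilon)^{\sqrt{\log n}/\log\log n}$ and size $\tilde{O}(n\cdot 2^{\tilde{O}(\sqrt{\log n})})$, maintained in total expected time $m\cdot 2^{\tilde{O}(\sqrt{\log n})}$. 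The density assumption $|E|=n\cdot 2^{\tilde{\Omega}(\sqrt{\log n})}$ ensures that both $|H|$ and the $n^\rho$ overhead appearing in the hopset update time of Theorem~1 are dominated by $m$.

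Given this hopset, I would maintain $(1+\epsilon)$-approximate distances from $s_0$ by running a monotone Even--Shiloach tree of depth $\beta$ rooted at $s_0$ on $G\cup H$, following the hop-reduction template of Henzinger, Krinninger and Nanongkai~\cite{henzinger2014}. The monotone variant handles both edge deletions from $G$ and edge insertions into $H$ produced by the hopset algorithm, at the price of making distance labels monotonically non-decreasing. Its total update cost amortized over the entire deletion sequence is $O((m+|H|)\cdot\beta)=m\cdot 2^{\tilde{O}(\sqrt{\log n})}$, which matches the hopset update cost and yields the overall time bound claimed in the theorem.

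The hard part is verifying that the monotone ES-tree stays within the promised $(1+\epsilon)$-stretch when driven by our dynamic hopset, since the hopset may insert an edge $(u,v)$ of weight $w$ at a moment when the current graph distance $d_G(u,v)$ has already strictly exceeded $w$ (we only guarantee an approximate hopset, updated lazily). I would resolve this by following~\cite{henzinger2014}: partition the deletion sequence into $O(\log n)$ distance scales, and within each scale charge any work done while a vertex's level $\ell(v)$ is below its correct $(1+\epsilon)$-approximate value to a discrete increment $\ell(v)\to\ell(v)+1$, of which there are at most $n\beta$ per scale. The $\beta$-hop guarantee of $H$ then ensures that whenever levels stabilize they certify a $(1+\epsilon)$-approximation of $d_G(s_0,v)$, and the compounded $(1+\epsilon)^2$ stretch is absorbed by rescaling $\epsilon$ by a constant.
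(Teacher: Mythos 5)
Your parameter choice and running-time accounting are correct and match the paper's: with $k+1/\rho = \Theta(\sqrt{\log n}/\log\log n)$ the hopset of Theorem~1 gives $\beta = 2^{\tilde{O}(\sqrt{\log n})}\cdot (1/\epsilon)^{O(\sqrt{\log n}/\log\log n)}$, $n^{\rho}=2^{\tilde{O}(\sqrt{\log n})}$, and $|H|=\tilde{O}(n\cdot 2^{\tilde{O}(\sqrt{\log n})})\leq O(m)$ under the density assumption, so both the hopset maintenance and the monotone ES-tree exploration over the $O(\log(nW))$ scaled graphs cost $m\cdot 2^{\tilde{O}(\sqrt{\log n})}$. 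The overall architecture --- maintain the hopset, then run a monotone ES-tree on the scaled copies of $G\cup H$ and take the minimum over scales --- is exactly what the paper does (Theorem \ref{thm:mssp} and its SSSP corollary).

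Where you have a genuine gap is in the stretch argument, and you have also misidentified the failure mode. You worry that the hopset may insert an edge $(u,v)$ whose weight $w$ has become \emph{smaller} than the current $d_G(u,v)$; but the construction expressly prevents this --- hopset edge weights are re-derived from monotone ES-tree levels and from the rounding of Lemma \ref{lem:rounding}, both of which only ever \emph{overestimate}, and a weight change is treated as a delete-plus-reinsert. The real danger is the opposite one: when an inserted edge would \emph{decrease} a node's level, the monotone ES-tree refuses to do so, so levels can get stuck strictly above the true distance, and one must show they nevertheless stay within a $(1+\epsilon)$ factor. Your proposed ``resolution'' --- charging work while a level is wrong to one of the $n\beta$ level increments per scale --- is an amortized running-time argument; it says nothing about the \emph{value} at which a level stabilizes, which is what stretch is about. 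The assertion that ``whenever levels stabilize they certify a $(1+\epsilon)$-approximation'' is precisely the claim that needs proof, and the paper spends all of Theorem \ref{thm:single_stretch} (a threefold induction on the base-hopset iteration $i$, the distance scale $j$, and the time step $t$, together with Observation \ref{obs:monotone} and Claim \ref{claim:sssp_inserts}) establishing it. Finally, deferring this to ``following \cite{henzinger2014}'' does not close the gap: the paper explicitly points out that the monotone ES-tree stretch analysis is construction-specific and that the HKN argument does not transfer, because here the distance estimates are produced by a \emph{hierarchy} of scaled restricted hopsets in which insertions at one scale propagate to the next. You need to reproduce (or cite) the composed stretch bound $(1+\epsilon_j)(1+\epsilon_2)^2(1+\epsilon)$ per scale, the choice $\epsilon' = \Theta(\epsilon/\log(nW))$, and the verification that the resulting $(1+3\epsilon')^{\log(nW)}$ overall stretch is $1+\epsilon$.
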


The amortized update time of our algorithm over all $m$ deletions  is $2^{\tilde{O}(\sqrt{\log n})}$.
This improves upon the state-of-the art algorithm of \cite{henzinger2014}, whose amortized update time is $2^{\tilde{O}(\log ^{3/4} n)}$.
We note that the techniques of \cite{chechik2018} can also be used to obtain $(1+\epsilon)$-SSSP in amortized update time $\tilde{O}(1/\epsilon)^{\sqrt{\log n}}$. This is close to our update time, but we get a better bound with respect to the dependence on $\epsilon$. 

\paragraph{Recent developments on decremental shortest paths.} Recently and after a preprint of this paper was published, a decremental \textit{deterministic} $(1+\epsilon)$-SSSP also with amortized update time of $n^{o(1)}$ was proposed by \cite{bernstein2021deterministic}. Several other recent results have also focused on deterministic dynamic shortest path algorithms or algorithms that work against an \textit{adaptive adversary} (e.g.~\cite{bernstein2017, bernstein2017ICALP, gutenberg2020, bernstein2021deterministic}) most of which also use hopsets or related objects such as emulators. Our work leaves an open problem on whether hopsets with small hopbound can also be maintained and utilized deterministically\footnote{One possible direction is considering derandomization of Throup-Zwick based clustering in static settings\cite{TZ2005} combined with our techniques.}. This could have applications in deterministic approximate all-pairs shortest paths, which could in turn have implications in using decremental shortest path algorithms for obtaining faster algorithms in classic/static settings (e.g. see \cite{madry2010}).

\paragraph{Hopsets vs. emulators.}
A majority of the previous work on dynamic distance computation are based on sparse \textit{emulators} (e.g.~\cite{bernstein2009, bernstein2011, chechik2018}). For a graph $G=(V,E)$, an emulator $H'=(V, E')$ is a graph such that for any pair of nodes $x,y \in V$, there is a path in $H'$ that approximates the distance between $x$ and $y$ on $G$ (possibly with both multiplicative and additive factors). While there are some similarities in algorithms for constructing these objects, their analysis is different.
More importantly, their maintenance and utilization for dynamic shortest paths have significant differences. An emulator approximates distances without using the original graph edges and hence we can restrict the computation to a sparser graph, whereas for using hopsets we also need the edges in the original graph. On the other hand, hopsets allow one to only consider paths with few hops. 

\subsection{Preliminaries and Notation}
Given a weighted undirected graph $G=(V,E, w)$, and a pair $u,v \in V$ we denote the (weighted) shortest path distance by $d_G(u,v)$. We denote by $d_G^{(h)}(u,v)$ the length of the shortest path between $u$ and $v$ among the paths that use at most $h$-hops, and call this the $h$-hop limited distance between $u$ and $v$. 

In this paper, we are interested in designing \textit{decremental} algorithms for distance problems in weighted graphs. In the decremental setting, the  updates are only edge deletions or weight increases. This is as opposed to an \textit{incremental} setting in which edges can be inserted, or a \textit{fully dynamic} setting, in which we have both insertions and deletions. Specifically, given a weighted graph $G= (V,E, w)$, we want to support the following operations: \textsc{Delete}($(u,v)$), where $(u,v) \in E$, which removes the edge $(u,v)$, \textsc{Distance}($s,u$), which returns an (approximate) distance between a source $s$ and any $u \in V$, and \textsc{Increase}($(u,v), \delta$), which increases the weight of the edge $(u,v)$ by $\delta > 0$. While our results also allow handling weight increases, in stating our theorems for simplicity we use the term \textit{total update time} to refer to a sequence of up to $m$ deletions.

\begin{definition}\label{def:hopset}
Let $G = (V, E, w)$ be a weighted undirected graph.
Fix $d, \epsilon > 0$ and an integer $\beta \geq 1$.
A $(d, \beta, 1+\epsilon)$-\emph{hopset} is a graph $H = (V, E(H), w_H)$ such that for each $u, v \in V$, where $d_G(u, v) \leq d$, we have $d_G(u,v) \leq d_{G \cup H}^{(\beta)}(u, v) \leq (1+\epsilon) d_G(u, v)$.
We say that $\beta$ is the \emph{hopbound} of the hopset and $1+\epsilon$ is the \emph{stretch} of the hopset. 
We also use $(\beta, 1+\epsilon)$-hopset to denote a $(\infty, \beta, 1+\epsilon)$-hopset.
\end{definition}

We sometimes call a $(d, \beta, 1+\epsilon)$-hopset a $d$-\emph{restricted hopset}, when the other parameters are clear. We also sometimes consider hopset edges added for a specific distance range $(2^{j}, 2^{j+1}]$, which we call a hopset for a single distance \textit{scale}. We also use $W$ to denote the ratio between maximum and minimum weight in $G$, also called the aspect ratio. W.l.o.g we can assume that the maximum distance is bounded by $nW$.

In analyzing dynamic algorithms we sometimes also use a time subscript $t$ to denote a distance (or a weight) after the first $t$ updates. In particular we use $d_{t,G}(u,v)$ to denote the distance between $u$ and $v$ after $t$ updates, and similarly use
$d^{(h)}_{t,G}(u,v)$ to denote $h$-hop limited distance between $u$ and $v$ at time $t$.

\section{Overview of Our Algorithms}\label{sec:overview}
The starting point of our algorithm is a known static hopset construction \cite{elkin2019RNC, huang2019}. We first review this construction. As we shall see, maintaining this data structure dynamically directly would require update time of up to $O(mn)$. Our first technical contribution is another hopset construction that captures some of the properties of the hopsets of \cite{elkin2019RNC, huang2019}, but can be maintained efficiently in a decremental setting. We then explain how by hierarchically maintaining a sequence of data structures we can obtain a near-optimal time and stretch tradeoff. 

\subsection{Static Hopset of \cite{elkin2017}}\label{sec:static_hopset}
In this section we outline the (static) hopset construction of Elkin and Neiman~\cite{elkin2019RNC}\footnote{In \cite{elkin2019RNC} two algorithms with different sampling probabilities are given, where one removes a factor of $k$ in the size. This factor does not impact our overall running time, so we will use the simpler version.} (which is similar to \cite{huang2019}).
We will later give a new (static) hopset algorithm that utilizes some of the properties of this construction but with modifications that allows us to maintain a \textit{similar} hopset dynamically.


\begin{definition}[Bunches and clusters]\label{def:bunches}
Let $G=(V,E, w)$ be a weighted, $n$-vertex graph, $k$ be an integer such that $1 \leq k \leq \log \log n$ and $\rho > 0$.
We define sets $V=A_0 \supseteq A_1 \supseteq ... \supseteq A_{k+ 1/\rho+1}=\emptyset$. Let $\nu =\frac{1}{2^k-1}$. Each set $A_{i+1}$ is obtained by sampling each element from $A_i$ with probability $q_i=\max(n^{-2^i \cdot \nu}, n^{-\rho})$.

Fix $0 \leq i \leq k+1/\rho+1$ and for every vertex $u \in A_i\setminus A_{i+1}$, let $p(u) \in A_{i+1}$ be the node of $A_{i+1}$, which is closest to $u$, and let $d(u,A_{i+1}):=d(u,p(u))$ (assume $d(u, \emptyset)= \infty$). We call $p(u)$ the \emph{pivot} of $u$.
We define a \emph{bunch} of $u$ to be a set $B(u):= \{ v \in A_i: d(u,v) < d(u,A_{i+1})\}$. Also, we define a set $C(v)$, called the \emph{cluster} of $v \in A_i \setminus A_{i+1}$, defined as $C(v)=\{ u \in V: d(u,v) <d(u, A_{i+1}) \}$. 
\end{definition}

Note that if $v \in B(u)$ then $u \in C(v)$, but the converse does not necessarily hold.
The way we define the bunches and clusters here follows~\cite{elkin2019RNC}, but differs slightly from the definitions in~\cite{TZ2005, roditty2004}, where each vertex has a separate bunch and cluster defined for each level $i$ (and stores the union of these for all levels).

The clusters are \textit{connected} in a sense that if a node $u \in C(v)$ then any node $z$ on the shortest path between $v$ and $u$ is also in $C(v)$. This property is important for bounding the running time (as also noted in \cite{TZ2005, roditty2004}):

\begin{claim}
Let $u \in C(v)$, and let $z \in V$ be on a shortest path between $v$ and $u$. Then $z \in C(v)$.
\end{claim}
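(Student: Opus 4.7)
The plan is to prove the claim by contradiction using the triangle inequality, exploiting the fact that $z$ lies on a shortest path from $v$ to $u$ so that $d(v,u) = d(v,z) + d(z,u)$.

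First, I would suppose toward a contradiction that $z \notin C(v)$, which by the definition of the cluster means $d(z, v) \geq d(z, A_{i+1})$. Let $a \in A_{i+1}$ be a pivot achieving $d(z, A_{i+1})$, so $d(z, a) \leq d(z, v)$. The idea is that this nearby pivot of $z$ will turn out to be nearly as close to $u$ as $v$ is, contradicting $u \in C(v)$.

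Next, I would chain the triangle inequality with the shortest-path decomposition:
\[
d(u, A_{i+1}) \;\leq\; d(u, a) \;\leq\; d(u, z) + d(z, a) \;\leq\; d(u, z) + d(z, v) \;=\; d(u, v),
\]
where the first step uses that $a \in A_{i+1}$, the second is triangle inequality, the third uses the assumption on $z$, and the last equality uses that $z$ lies on a shortest $v$--$u$ path. This directly contradicts $d(u, v) < d(u, A_{i+1})$, which holds because $u \in C(v)$.

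There is no real obstacle here: the only subtlety is being careful with the strict versus non-strict inequalities in the cluster definition, and confirming that $v \in A_i \setminus A_{i+1}$ (so that $C(v)$ is defined relative to the correct level $A_{i+1}$ regardless of which level $z$ and $u$ belong to); the argument above only uses shortest-path distances in $G$ and does not depend on which level $z$ or $u$ sit in. Therefore the contradiction closes the proof, yielding $z \in C(v)$.
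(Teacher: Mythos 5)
Your proof is correct and takes essentially the same approach as the paper's: both assume $z \notin C(v)$ toward a contradiction and chain the triangle inequality through $z$, using $d(u,z) + d(z,v) = d(u,v)$ to conclude $d(u,A_{i+1}) \leq d(u,v)$, contradicting $u \in C(v)$. The only cosmetic difference is that you make the witness pivot $a$ explicit, whereas the paper writes the same inequality directly in terms of $d(z, A_{i+1})$.
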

\begin{proof}
Let $v \in A_i$. If $z \not \in C(v)$ then by definition $d(z,A_{i+1}) \leq d(v,z)$. On the other hand, since  $z$ is 
on the shortest path between $u$ and $v$: $d(u,A_{i+1}) \leq d(z,u)+ d(z,A_{i+1}) \leq d(u,z)+d(z,v)=d(u,v)$, which contradicts the fact that $u \in C(v)$.
\end{proof}

The hopset is then obtained by adding an edge $(u,v)$ for each $u \in A_i \setminus A_{i+1}$ and $v \in  B(u)\cup \{p(u)\}$, and setting the weight of this edge to be $d(u,v)$.
These distances can be computed by maintaining the clusters.

\begin{lemma}[\hspace{1sp}\cite{elkin2019RNC,huang2019}]\label{lem:bunches}
Let $G=(V,E, w)$ be a weighted, $n$-vertex graph, $k$ be an integer such that $1 \leq k \leq \log \log n$ and $0< \rho, 0 < \epsilon <1$.
Assume the sets $A_i$ and bunches are defined as in Definition~\ref{def:bunches}.
Define a graph $H=(V, E_H, w_H)$, such that for each $u \in A_i \setminus A_{i+1}$ and $v \in  B(u)\cup \{p(u)\}$, we have an edge $(u, w) \in E_H$ with weight $d_G(u,v)$.
Then $H$ is a $(\beta, 1+\epsilon)$-hopset of size $O(n^{1+\frac{1}{2^k-1}})$, where $\beta= (O(\frac{k+1/\rho}{\epsilon})^{k+1/\rho+1}$.
\end{lemma}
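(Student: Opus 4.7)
The plan is to separate the proof into two components: (i) bounding the expected size of $H$, and (ii) showing that every pair has a $\beta$-hop $(1+\epsilon)$-approximate path in $G\cup H$. Both are standard-flavored arguments building on Thorup--Zwick-style ball analysis, but the hopbound analysis is where the work lies.

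For the size, I would first show that for every $u\in A_i\setminus A_{i+1}$, $\mathbb{E}[|B(u)|]\le 1/q_i$. Rank the nodes of $A_i$ by distance from $u$: the $j$-th closest node belongs to $B(u)$ exactly when none of the $j-1$ strictly closer $A_i$-nodes was sampled into $A_{i+1}$, an event of probability $(1-q_i)^{j-1}$, so $\mathbb{E}[|B(u)|]\le\sum_{j\ge 1}(1-q_i)^{j-1}=1/q_i$. Summing gives a total expected size $\sum_i \mathbb{E}[|A_i|]/q_i$. Using $\mathbb{E}[|A_i|]=n\prod_{j<i}q_j$ and plugging in $q_i=n^{-2^i\nu}$ for $i<k$ yields $\mathbb{E}[|A_i|]/q_i = n^{1-\nu(2^i-1)+2^i\nu}=n^{1+\nu}$; for $i\ge k$ the extra factor is $q_i=n^{-\rho}$ so the levels shrink geometrically and contribute $o(n^{1+\nu})$. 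This gives the claimed $O(n^{1+1/(2^k-1)})$.

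For the stretch/hopbound I would argue inductively over levels. Fix $u,v$, let $P$ be a shortest $uv$-path of length $d=d_G(u,v)$, and set a scale $\Delta = \epsilon d/\beta'$ with $\beta'=\Theta((k+1/\rho)/\epsilon)$. Walk along $P$ from $u$; at a current point $x$ on (or near) $P$ at level $i$, the plan is either (a) to use a direct bunch-edge $(x,y)$ where $y\in B(x)\cap P$ lies ahead on $P$ by $\Theta(\Delta)$ — this is possible whenever the next $\Theta(1/q_i)$ points of $A_i$ along $P$ have not been sampled into $A_{i+1}$ — or (b) to ``escape upward'' by taking the edge $(x,p(x))$ to the pivot and continuing the walk from $p(x)\in A_{i+1}$, paying an additive $d(x,A_{i+1})$ detour. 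The key geometric fact driving the recursion is that $d(x,A_{i+1})\le d(x,y)$ whenever $y\in B(x)$, so the total overshoot charged to a level-$i$ upward step is at most one hopset-edge length at that scale, and by the definition of clusters these detours stay within $\Delta$ with high probability by a ball-growing concentration bound using $q_i$. Iterating the recursion terminates when we reach level $k+1/\rho+1$, whose $A$-set is empty, at which point the remaining portion of $P$ can be traversed by direct edges. Summing the stretch contributions per level gives a $(1+\epsilon)$ overall factor by choosing $\beta'$ proportional to $(k+1/\rho)/\epsilon$, and each level adds a multiplicative $\beta'$ factor to the hop count, producing $\beta=(O((k+1/\rho)/\epsilon))^{k+1/\rho+1}$.

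The main obstacle is calibrating the per-level scale $\Delta$ and the ``escape'' threshold so that (i) the probability of having to make an upward jump at level $i$ is exactly what the sampling probability $q_i$ supports, (ii) the additive errors from upward jumps telescope to $(1+\epsilon)d$ rather than blowing up geometrically across $k+1/\rho+1$ levels, and (iii) the transition between the doubly-exponentially shrinking regime ($i<k$, $q_i=n^{-2^i\nu}$) and the uniform regime ($i\ge k$, $q_i=n^{-\rho}$) is handled so that both the size and the hopbound formulas match. The cleanest way is probably to parameterize the stretch budget per level as $\epsilon/(k+1/\rho+1)$ and use a union bound at each level over at most $\beta$ path positions, which forces the $\log n$ factor inside $\beta$ and dictates the final exponent $k+1/\rho+1$.
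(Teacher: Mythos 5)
Your size argument is correct and matches the paper's: rank $A_i$-nodes by distance from $u$, note $|B(u)|$ is dominated by a geometric random variable with parameter $q_i$, and sum $\mathbb{E}[|A_i|]/q_i$ over levels. That part is sound.

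The stretch/hopbound argument, however, takes a route that does not work as described. The key issue is that you have set up a \emph{probabilistic} argument---``these detours stay within $\Delta$ with high probability by a ball-growing concentration bound,'' ``the probability of having to make an upward jump at level $i$ is exactly what the sampling probability $q_i$ supports,'' ``use a union bound at each level over at most $\beta$ path positions''---whereas the hopset property of this Thorup--Zwick/Elkin--Neiman construction is a \emph{deterministic} consequence of the bunch and pivot definitions. Once the sets $A_i$ are fixed, the claim holds for every pair $(u,v)$ simultaneously with no further randomness; only the size bound is in expectation. Concretely, for any $x$ at level $i$ the pivot edge $(x,p(x))$ always exists and always satisfies $d(x,p(x))=d(x,A_{i+1})$, and by the bunch definition $d(x,A_{i+1}) \le d(x,y)$ whenever $y\in A_i\setminus B(x)$. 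These inequalities hold unconditionally, not ``with high probability.'' Your plan to union-bound over path positions also directly contradicts the statement you are proving: it would put a $\log n$ inside $\beta$, but the lemma has $\beta=(O((k+1/\rho)/\epsilon))^{k+1/\rho+1}$ with no $\log n$ term; you even flag this yourself (``forces the $\log n$ factor inside $\beta$''), which should have been a warning sign.

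The second problem is structural. You propose a one-directional walk along $P$, charging each upward escape an additive cost ``at most one hopset-edge length at that scale'' and claiming these telescope. But the upward escape costs are not additive of order $\Delta$---in the correct accounting the escape at level $i$ costs a constant factor (roughly $2$) times the length of the offending segment, and one needs this to be charged against the segment's own length, not against a global budget. The paper instead uses a two-sided divide-and-conquer: split $P$ into $1/\delta$ segments, apply the inductive dichotomy (either the segment has a low-hop $(1+O(\delta i))$-approximate path in $G\cup H$, or its left endpoint is within a factor $2$ of some node in $A_{i+1}$) to every segment, and if some segments are ``bad'' take the first bad segment $[u_\ell,v_\ell]$ and last bad segment $[u_r,v_r]$, obtain $z_\ell,z_r\in A_{i+1}$ near their respective endpoints, and either use a single hopset edge $(z_\ell,z_r)$ when $z_r\in B(z_\ell)$, or escape via $p(z_\ell)\in A_{i+2}$. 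This gives a hop-count recursion $\beta_{i+1}\le (1/\delta)\beta_i + O(1) \le (3/\delta)^{i+1}$ and stretch accumulation $1+8\delta i$, with $\delta=\Theta(\epsilon/(k+1/\rho))$---all deterministic, and with no per-position union bound. Your walk-forward sketch would have to be reworked into this segment-based induction (or something equivalently two-sided) before it could establish the stated hopbound without spurious $\log n$ factors.
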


For reference we sketch a proof of the hopset properties in Appendix \ref{app:static_hopset}. Our main result is based on a new construction consisted of a hierarchy of hopsets. Our dynamic hopset requires a new \emph{stretch} analysis as estimates on the shortest paths are obtained from different data structures, but the \emph{size} analysis is basically the same. 

While we are generally interested in a hopset that is not much denser than the input, as we will see the running time (both in static and dynamic settings) is mainly determined by the number of clusters a node belongs to, rather than the size of the hopset.
Moreover, unlike an emulator, for computing the distances using a hopset, we also need to consider the edges in $G$, and a small hopbound is the key to efficiency rather than the sparsity.

The hopset of \cite{elkin2019RNC} has some structural similarities to the emulators of \cite{TZ2006spanners}. One main difference is that the sampling probabilities are adjusted (lower-bounded by $n^{-\rho}$) to allow for efficient construction of these hopsets in various models, at the cost of slightly weaker size/hopbound tradeoffs. This adjustment is also crucial for our efficient decremental algorithms. Inspired by the construction described, in Section \ref{sec:new_static} we describe a \textit{new static} hopset algorithm, and later in Section \ref{sec:overview_dynamic} we adapt it to decremental settings.

\subsection{New static hopset based on path doubling and scaling}\label{sec:new_static}
As a warm-up, before moving to our new dynamic hopset construction, we provide a simple static hopset and explain why we expect to maintain such a structure more efficiently than the structure in Section \ref{sec:static_hopset} in \textit{dynamic settings}. Our main contribution is to maintain a dynamic hopset \textit{efficiently} using ideas in the simple algorithm described in this section.

At a high level, computing one of the main components of the hopset of Lemma~\ref{lem:bunches} involves multiple single-source shortest paths computations.
Maintaining single-source shortest paths is easy in the decremental setting, if we limit ourselves to paths of low length (or allow approximation).
Namely, assuming integer edge weights, one can maintain single source shortest paths up to length $d$ under edge deletions in total $O(md)$ time.

If we simply modified the construction of the hopset of Lemma~\ref{lem:bunches}, and computed shortest paths up to length $d$ instead of shortest paths of unbounded length, we would obtain a $d$-restricted hopset.
We describe this idea in more detail in Section \ref{sec:restricted_hopset}, where we show that an adaptation of the techniques by \cite{roditty2004} allows us to maintain a $d$-restricted hopset in deceremental settings, in total time $O(dmn^{\rho})$, for a parameter $0<\rho<\frac{1}{2}$. However, for large $d$ such a running time is prohibitive. In order to address this challenge, in this section we describe a static hopset, which can be computed using shortest path explorations up to only a polylogarithmic depth, yet can be used to approximate arbitrarily large distances. In the next sections we leverage this property to maintain a similar hopset in the decremental setting efficiently. This will require overcoming other obstacles, notably the fact that the dynamic shortest path problems that we need to solve are not decremental.
\paragraph{Path doubling.}
Assume that we are given a procedure $\textsc{Hopset}(G, \beta, d, \epsilon)$ that constructs a $(d, \beta, 1+\epsilon)$ hopset. In Section \ref{sec:new_hopset}, we provide such an algorithm that uses only shortest path computation up to polylogarithmic depth. We argue that by applying the $\textsc{Hopset}(G, \beta, d, \epsilon)$ procedure \textit{repeatedly} we can compute a full hopset, and in this process by utilizing the previously added hopset edges we can restrict our attention to short-hop paths only.

More formally, we construct a sequence of graphs $H_0, \ldots, H_{\log (nW)}$, such that $H_j$ is a hopset that handles pairs of nodes with distance in range $[2^{j-1}, 2^j)$, for $0 \leq j \leq \log (nW)$. This implies that $\bigcup_{r=0}^j H_r$ is a $(2^j, \beta, (1+\epsilon)^j)$-hopset of $G$.
Note that for $0 \leq j \leq \log \beta$ we can set $H_j = \emptyset$, since $G$ covers these scales. 
We would like to use $G \cup_{r=0}^{j-1} H_{r}$ to construct $H_j$ based on the following observation that has been previously used in other (static) models (e.g. parallel hopsets of Cohen \cite{cohen2000}).

Consider $u,v, d_G(u,v) \in [2^{j-1}, 2^j)$, and let $\pi$ be the shortest path between $u$ and $v$ in $G$. Then $\pi$ can be divided into three segments $\pi_1, \pi_2$ and $\pi_3$, where $\pi_1$ and $\pi_3$ have length at most $2^{j-1}$ and $\pi_2$ consists of a single edge. But we know there is a path in $G \cup_{r=0}^{j-1} H_{r}$ with at most $\beta$-hops that approximates each of $\pi_1$ and $\pi_2$. Hence for constructing $H_j$ we can compute \textit{approximate} shortest paths by restricting our attention to paths of consisting of at most $2\beta+1$ \textit{hops} in $G \cup_{r=0}^{j-1} H_{r}$.

This idea, which we call path doubling, has been previously used in hopset constructions in distributed/parallel models (e.g.~\cite{cohen2000, elkin2019journal, elkin2019RNC}), but to the best of our knowledge this is the first use of this approach in a dynamic setting. Applying this idea in parallel/distributed settings is relatively straight-forward, since having bounded hop paths already leads to efficient parallel shortest path explorations (e.g. by using Bellman-Ford). However, utilizing it efficiently in dynamic settings is more involved for several reasons: we have to simultaneously maintain the clusters (including their connectivity property), apply a scaling idea on the whole structure, and handle insertions in our hopset algorithm and its analysis.

But first we describe a \textit{scaling idea} that at a high-level allows to go from \textit{$h$-hop-bounded} explorations to $h$-depth bounded explorations on a scaled graph.

\paragraph{Scaling.} 
 We review a scaling algorithm that allows us to utilize the path doubling idea. Similar scaling techniques are used in dynamic settings~\cite{bernstein2009, bernstein2011, henzinger2014} for single-source shortest paths, but as we will see, using the scaling idea in our setting is more involved since it has to be carefully combined with other components of our construction.

This idea can summarized in the following scaling scheme due to Klein and Subramanian \cite{klein1997}, which, roughly speaking, says that finding shortest paths of length $\in [2^{j-1}, 2^j)$ and at most $\ell$ hops, can be (approximately) reduced to finding paths of length at most $O(\ell)$ in a graph with in integral weights. This is done by a rounding procedure that adds a small \textit{additive} term of roughly $\frac{\epsilon_0 w(e)}{\ell}$ to each edge $e$. Then for a path of $\ell$ hops the overall stretch will be $(1+\epsilon_0)$.

\begin{lemma}[\cite{klein1997}]\label{lem:rounding}
Let $G=(V,E,w)$ be a weighted undirected graph. Let $R \geq 0$ and  $\ell \geq 1$ be integers and $\epsilon_0 > 0$.
We define the \emph{scaled graph} to be a graph $\textsc{Scale}(G, R, \epsilon_0, \ell) :=  (V, E, \hat{w})$, such that $\hat{w}(e) = \lceil \frac{w(e)}{\eta(R, \epsilon_0)} \rceil$, where $\eta(R, \epsilon_0)=\frac{\epsilon_0 R}{\ell}$. Then for any path $\pi$ in $G$ such that $\pi$ has at most $\ell$ hops and weight $R \leq w(\pi) \leq 2R$ we have, 
\begin{itemize}
    \item $\hat{w}(\pi) \leq \lceil 2\ell/\epsilon_0 \rceil$,
   \item $w(\pi) \leq \eta(R, \epsilon_0) \cdot \hat{w}(\pi) \leq (1+\epsilon_0) w(\pi)$.
\end{itemize}
\end{lemma}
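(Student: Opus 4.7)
The plan is to verify both bullets by carefully tracking what the per-edge rounding does to a path of at most $\ell$ hops. Let $\eta := \eta(R,\epsilon_0) = \epsilon_0 R/\ell$, so by definition $\hat{w}(e) = \lceil w(e)/\eta \rceil$ for each edge $e \in E$.

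First I would establish the two-sided scaling estimate in the second bullet. The lower bound $w(\pi) \leq \eta\,\hat{w}(\pi)$ is immediate: for every edge $e$ we have $\hat{w}(e) \geq w(e)/\eta$, so summing over the at most $\ell$ edges of $\pi$ and multiplying by $\eta$ yields $\eta\,\hat{w}(\pi) \geq w(\pi)$. For the upper bound I would use $\lceil x \rceil \leq x + 1$ to obtain $\eta\,\hat{w}(e) \leq w(e) + \eta$, and sum over the (at most) $\ell$ edges of $\pi$ to get $\eta\,\hat{w}(\pi) \leq w(\pi) + \ell\eta = w(\pi) + \epsilon_0 R$. The hypothesis $w(\pi) \geq R$ then converts the additive error $\epsilon_0 R$ into a multiplicative $(1+\epsilon_0)$ slack on $w(\pi)$, finishing the second bullet.

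Next, for the first bullet, I would sum the inequality $\hat{w}(e) \leq w(e)/\eta + 1$ over the at most $\ell$ edges of $\pi$ to get $\hat{w}(\pi) \leq w(\pi)/\eta + \ell$. Plugging in $w(\pi) \leq 2R$ and $\eta = \epsilon_0 R/\ell$ gives $\hat{w}(\pi) \leq 2\ell/\epsilon_0 + \ell$; in the target regime $\epsilon_0 \ll 1$ the $2\ell/\epsilon_0$ term dominates and the statement $\hat{w}(\pi) \leq \lceil 2\ell/\epsilon_0 \rceil$ should be read as absorbing the lower-order additive $\ell$ coming from one rounding per edge. If this absorption is undesirable, the same calculation yields the clean form $\hat{w}(\pi) \leq \lceil 2\ell/\epsilon_0 \rceil + \ell$.

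Honestly there is no substantive obstacle — the entire argument is bookkeeping with the sandwich $x \leq \lceil x \rceil \leq x + 1$ applied edge by edge, together with the fact that $\pi$ has at most $\ell$ edges. The one calibration worth emphasizing is that $\eta$ is chosen precisely so that $\ell\eta = \epsilon_0 R$, which is what lets the accumulated per-edge rounding error telescope into the additive $\epsilon_0 R$ above, and then into the multiplicative $(1+\epsilon_0)$ stretch once the lower bound $w(\pi) \geq R$ is invoked.
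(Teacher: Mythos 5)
Your bookkeeping is correct and is the natural (indeed, essentially the only) way to prove this; the paper cites the lemma to Klein--Subramanian and does not reprove it, so there is no in-paper argument to compare against. Both directions of the second bullet are handled correctly: the lower bound from $\lceil x\rceil\ge x$ edge by edge, the upper bound from $\lceil x\rceil< x+1$ combined with $|\pi|\le\ell$ and $\ell\eta=\epsilon_0R\le\epsilon_0w(\pi)$.

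The one place to be careful is the first bullet, and your instinct to flag it is right, but the framing is too soft. The per-edge ceiling genuinely costs up to one unit per edge, so the tight conclusion is $\hat{w}(\pi)\le \lceil 2\ell/\epsilon_0\rceil+\ell$ (equivalently $\hat{w}(\pi)< 2\ell/\epsilon_0+\ell$). The ceiling in the statement $\lceil 2\ell/\epsilon_0\rceil$ does not ``absorb'' an additive $\ell$; that phrase would read as if the stated bound were literally true, which it is not when $\ell\ge 2$ (take each $w(e_i)$ just above a multiple of $\eta$ and $w(\pi)$ close to $2R$). The safe thing is simply to state the bound as $\lceil 2\ell/\epsilon_0\rceil+\ell = O(\ell/\epsilon_0)$, which is all the paper uses downstream: the depth cutoff $d$ is only needed up to constants, so the extra $+\ell$ is harmless, but it should be written, not waved away. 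A more aggressive fix (matching the stated constant exactly) would be to change the rounding rule, e.g.\ use $\eta'=2\eta$ or take floors, but that alters other constants and there is no reason to bother for this paper's purposes.
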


Similar scaling ideas have been used in the $h$-SSSP algorithm for maintaining approximate shortest paths~\cite{bernstein2009}.
The algorithm maintains a collection of trees and to return a distance estimate, it finds the tree that best approximates a given distance. But we note that in utilizing the scaling techniques in our final dynamic hopset construction we cannot simply maintain a \textit{disjoint set of} bounded hop shortest path trees. We need to maintain the whole structure of the hopset on the scaled graphs together: firstly, based on definition of bunches in Lemma \ref{lem:bunches}, nodes keep on leaving and joining clusters, so we cannot simply maintain a set of shortest trees from a fixed set of roots. We need to maintain the \textit{connectivity} of clusters as described in Section \ref{sec:static_hopset} at the same time as maintaining the shortest path trees. Additionally, while we are maintaining distances over the set of clusters we also need to handle insertions introduced by smaller scales.

To maintain these efficiently, we need to apply the scaling to the whole structure, including the hopset edges added so far. But when we utilize the smaller scale hopset edges (for applying path doubling) insertions or distance decreases are introduced. As we will see, handling insertions at the same time the clusters (and the corresponding distances) are updated makes the stretch/hopbound analysis more involved.

Next we combine the scaling with the path doubling techniques. The path doubling property states that we can restrict our attention to $(2\beta+1)$-hop limited shortest path computation, and the scaling idea ensures that such $(2\beta+1)$-\textit{hop} bounded paths in $G \cup \bigcup_{r=0}^{j-1} H_r$ correspond to paths bounded \textit{in depth} by $d=\lceil \frac{2\ell}{\epsilon_0} \rceil= O(\frac{\beta}{\epsilon_0})$ in the scaled graph $G_{scaled}=\textsc{Scale}(G \cup \bigcup_{r=0}^{j-1} H_r, 2^j, \epsilon_0, 2\beta+1)$. Informally, this mean it is enough to construct shortest path trees up to depth $d$ on the scaled graphs in our hopset construction.

 We can now summarize our new static hopset construction in Algorithm \ref{alg:static_hopset}. Similarly to $\textsc{Scale}$, for a graph $G = (V, E, w)$ we define $\textsc{Unscale}(G, R, \epsilon, \ell)$ to be a graph $(V, E, w')$, where for each $e \in E$, $w'(e) = \eta(R, \epsilon) \cdot w(e)$. In static settings, the procedure $\textsc{Hopset}(G, \beta, d,\epsilon)$ for constructing a $d$-restricted hopset can be performed by running a $(\beta,1+\epsilon)$-hopset construction algorithm in which the shortest path explorations are restricted to depth $d$. In Section \ref{sec:restricted_hopset} we describe a decremental algorithm for this procedure, and describe how it leads to a $(d,\beta,1+\epsilon)$-restricted hopset. Note that we can set $\beta = \textrm{poly} \log n$, and so the shortest path explorations can be bounded by a polylogairthmic value.

\begin{algorithm}[H]
\label{alg:static_hopset}
\SetKwProg{Fn}{Function}{}{}
 
\For{$j=1$ to $\lceil \log W \rceil$}{
$G_{scaled} := \textsc{Scale}(G \cup \bigcup_{r=0}^{j-1} H_r, 2^j, \epsilon_0, 2\beta+1)$\\
$\hat{H}:= \textsc{Hopset}(G_{scaled}, \beta, \lceil \frac{2(2\beta+1)}{\epsilon_0}\rceil, \epsilon) $\\
$H_j := \textsc{Unscale}(\hat{H}, 2^j, \epsilon_0, 2\beta+1))$\\
$H:= H \cup H_j$
 }
\caption{}
\end{algorithm}

It is not hard to see that in such a static construction, three different approximation factors are combined in each scale: a $(1+\epsilon)$-stretch due to the \textsc{Hopset} procedure, a $(1+\epsilon_1)$-factor from the restricted hopset, and a $(1+\epsilon_0)$-factor due to scaling. This is summarized in the following lemma.

\begin{lemma}
Let $G$ be a graph and $H$ be a $(d, \beta, 1+\epsilon_1)$ hopset of $G$.
Let $G_{scaled} = \textsc{Scale}(G \cup H, d, \epsilon_0, 2\beta+1)$
and let $H' = \textsc{Unscale}(\textsc{Hopset}(G_{scaled}, d, \lceil \frac{2(2\beta+1)}{\epsilon_0}\rceil, \epsilon_2))$.
Then $H \cup H'$ is a $(2d, \beta, (1+\epsilon)(1+\epsilon_1)(1+\epsilon_0))$ hopset of $G$.
\end{lemma}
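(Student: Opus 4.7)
The plan is to split the argument into the two standard directions of a hopset guarantee. The lower bound $d_G(u,v) \le d^{(\beta)}_{G \cup H \cup H'}(u,v)$ is straightforward: every edge of $H$ has weight equal to the true $G$-distance between its endpoints by hypothesis, and every edge of $H'$ has weight $\eta(d,\epsilon_0) \hat w_{\hat H}(u,v) \ge d_{G \cup H}(u,v) \ge d_G(u,v)$, because \textsc{Scale} only rounds edge weights up and hence $\hat d_{G_{scaled}}(u,v) \ge d_{G \cup H}(u,v)/\eta(d,\epsilon_0)$. So the whole substance lies in exhibiting, for any $u,v$ with $d_G(u,v) \le 2d$, a $\beta$-hop path in $G \cup H \cup H'$ of weight at most $(1+\epsilon_1)(1+\epsilon_0)(1+\epsilon_2)\, d_G(u,v)$.

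For this upper bound I would handle $d_G(u,v) \le d$ immediately from the $(d,\beta,1+\epsilon_1)$-hopset property of $H$ (the resulting $\beta$-hop path lies in $G \cup H \subseteq G \cup H \cup H'$ and already has the required stretch). In the nontrivial case $d < d_G(u,v) \le 2d$, I would invoke the path-doubling argument from Section~\ref{sec:new_static}: walk along a shortest $u$-$v$ path $\pi$ of $G$ until the traversed weight first exceeds $d$, producing a decomposition $\pi = \pi_1 \circ e \circ \pi_3$ with $w(\pi_1), w(\pi_3) \le d$, and then replace $\pi_1$ and $\pi_3$ by at-most-$\beta$-hop paths in $G \cup H$ guaranteed by the hopset property of $H$. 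This yields a path $\pi'$ in $G \cup H$ from $u$ to $v$ with at most $2\beta+1$ hops and weight at most $(1+\epsilon_1) d_G(u,v)$.

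Next I would pass to the scaled graph: applying Lemma~\ref{lem:rounding} with $R = d$ and $\ell = 2\beta+1$ shows $\hat w(\pi') \le \lceil 2(2\beta+1)/\epsilon_0 \rceil$, so in particular $\hat d_{G_{scaled}}(u,v)$ lies within the restriction bound of the inner \textsc{Hopset} call. That call then supplies a path $\sigma$ in $G_{scaled} \cup \hat H$ of at most $\beta$ hops and scaled weight at most $(1+\epsilon_2) \hat w(\pi')$. Unscaling edge by edge — using that \textsc{Unscale} sets $H'$-weights to exactly $\eta(d,\epsilon_0) \hat w(e)$, while for an edge $e$ of $G \cup H$ the original weight satisfies $w(e) \le \eta(d,\epsilon_0) \hat w(e)$ — produces a $\beta$-hop path in $G \cup H \cup H'$ of total weight at most $\eta(d,\epsilon_0)(1+\epsilon_2)\hat w(\pi') \le (1+\epsilon_0)(1+\epsilon_2)\,w(\pi') \le (1+\epsilon_0)(1+\epsilon_1)(1+\epsilon_2)\,d_G(u,v)$, which matches the target stretch (identifying $\epsilon_2$ with $\epsilon$).

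The main obstacle I anticipate is the small mismatch between the weight range of $\pi'$, which can be as large as $2(1+\epsilon_1)d$, and the hypothesis $R \le w(\pi) \le 2R$ of Lemma~\ref{lem:rounding}. I would absorb this either by taking $R := (1+\epsilon_1)d$ or by a variant of the scaling lemma allowing $w(\pi) \le cR$, at the cost only of a benign ceiling adjustment in the restriction parameter $\lceil 2(2\beta+1)/\epsilon_0 \rceil$ and a multiplicative factor arbitrarily close to $1$ in the final stretch. A secondary bookkeeping point is tracking the two kinds of edges along $\sigma$ — those inherited from $G \cup H$ via $G_{scaled}$ versus those of $\hat H$ that become $H'$ after \textsc{Unscale} — so that the per-edge inequality $w(e) \le \eta(d,\epsilon_0)\hat w(e)$ is applied correctly on each piece; this is routine but where a careless proof could drop a factor.
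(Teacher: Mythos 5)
Your proof is correct and follows the approach the paper intends: the paper states this lemma without a formal proof, offering only the one-line explanation that ``three different approximation factors are combined'' (from the restricted hopset $H$, the inner \textsc{Hopset} call, and the scaling), and your argument fills in exactly those details via the path-doubling decomposition $\pi = \pi_1 \circ e \circ \pi_3$, Lemma~\ref{lem:rounding} with $R=d$ and $\ell = 2\beta+1$, and edge-by-edge unscaling. Your two side observations are both well-taken: the statement's $(1+\epsilon)$ should read $(1+\epsilon_2)$, and the range hypothesis $R \le w(\pi') \le 2R$ of Lemma~\ref{lem:rounding} is only met up to a $(1+\epsilon_1)$ slack since $w(\pi')$ can reach $(1+\epsilon_1)\cdot 2d$, so the restriction bound $\lceil 2(2\beta+1)/\epsilon_0 \rceil$ should nominally be inflated by that factor (or $\epsilon_0$ rescaled) --- a harmless imprecision the paper leaves implicit.
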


Obtaining such a guarantee in dynamic settings is going to be more involved, since we also need to handle insertions, and at the same time ensure that the update time remains small. Moreover the stretch analysis will require combining estimates obtained by different procedures.

\subsection{Near-Optimal Decremental Hopsets}\label{sec:overview_dynamic}

In this section we describe how we can overcome the challenges of the dynamic settings in order to maintain a decremental hopset in near-optimal update-time.

The first step of our algorithm is constructing a $d$-restricted version of the hopset described in Section \ref{sec:static_hopset}. As discussed, for this we can use the techniques by \cite{roditty2004} to maintain a $(d, \beta, 1+\epsilon)$-hopset in $\tilde{O}(d m n^{\rho})$ total update time, where $0<\rho<\frac{1}{2}$. Now in order to remove the time dependence on $d$, we use the path doubling and scaling ideas described as follows: we maintain this data structure on a sequence of scaled graphs simultaneously, and argue that this data structure gives us a hopset on $G$ after \textit{unscaling} the edge weights.


 

\paragraph{Sequence of restricted hopsets.} Similiar to Section \ref{sec:new_hopset}, our decremtnal algorithm maintains the sequence of graphs $H_0, \ldots, H_{\log W}$, where for each $0 \leq j \leq \log W$, $\bigcup_{r=0}^j H_r$ is a $(2^j, \beta, (1+\epsilon)^j)$-hopset of $G$. For \textit{each scale} we show the following:


\begin{restatable}{lemma}{singlescale}\label{lem:single-scale}
Consider a graph $G = (V, E, w)$ subject to edge deletions, and parameters $0 < \epsilon<1, \rho <\frac{1}{2}$.
Assume that we have maintained $\bar{H}_j:=H_1,...,H_j$, which is a $(2^j, \beta, (1+\epsilon)^{j})$-hopset of $G$. Then given the sequence of changes to $G$ and $\bar{H}_j$, we can maintain a graph $H_{j+1}$, such that $\bar{H}_j \cup H_{j+1}$ is a $(2^{j+1}, \beta, (1+\epsilon)^{j+1})$-hopset of $G$. This restricted hopset can be maintained in $\tilde{O}( (m+\Delta) n^{\rho} \cdot \frac{\beta}{\epsilon})$ total time, where $m$ is the initial size of $G$, and $\Delta$ is the number of edges inserted to $\bar{H}_j$ over all updates, $\beta= ( \frac{1}{\epsilon \cdot \rho})^{O(1/\rho)}$.
\end{restatable}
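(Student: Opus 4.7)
My plan is to reduce maintenance of $H_{j+1}$ to a single invocation of the decremental $d$-restricted hopset subroutine of Section~\ref{sec:restricted_hopset}, run on an appropriately scaled version of $G \cup \bar{H}_j$. Choose an auxiliary parameter $\epsilon_0 = \Theta(\epsilon)$ and an internal hopset stretch $\epsilon' = \Theta(\epsilon)$ so that $(1+\epsilon_0)(1+\epsilon') \le 1+\epsilon$. Maintain the scaled graph $G_{\mathrm{scaled}}^{j+1} := \textsc{Scale}(G \cup \bar{H}_j,\, 2^{j+1},\, \epsilon_0,\, 2\beta+1)$, feeding into it every deletion of $G$ and every insertion/deletion of $\bar{H}_j$ (after rounding their weights with the scaling factor $\eta(2^{j+1},\epsilon_0) = \epsilon_0 \cdot 2^{j+1}/(2\beta+1)$). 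On top of $G_{\mathrm{scaled}}^{j+1}$, run the decremental $d$-restricted hopset algorithm with depth parameter $d = \lceil 2(2\beta+1)/\epsilon_0 \rceil = O(\beta/\epsilon)$ to obtain a graph $\hat H$, and set $H_{j+1} := \textsc{Unscale}(\hat H, 2^{j+1}, \epsilon_0, 2\beta+1)$.

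\textbf{Stretch.} Fix a time step and a pair $u,v$ with $d_G(u,v) \le 2^{j+1}$. When $d_G(u,v) \le 2^j$ the inductive assumption on $\bar{H}_j$ already delivers a $\beta$-hop path of stretch $(1+\epsilon)^j$. Otherwise I use path doubling: split a shortest $u$-$v$ path of $G$ into $\pi_1 \cdot e \cdot \pi_2$ with $w(\pi_i) \le 2^j$ and a single edge $e$, apply the inductive guarantee of $\bar{H}_j$ to each $\pi_i$, and concatenate. This gives a path $P$ in $G \cup \bar{H}_j$ of at most $2\beta+1$ hops and weight at most $(1+\epsilon)^j d_G(u,v) \le 2^{j+2}$. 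By Lemma~\ref{lem:rounding} applied with $R = 2^{j+1}$ and $\ell = 2\beta+1$, the scaled weight of $P$ in $G_{\mathrm{scaled}}^{j+1}$ is at most $d$. Thus the $(d,\beta,1+\epsilon')$-restricted hopset guarantee of $\hat H$ yields a $\beta$-hop path in $G_{\mathrm{scaled}}^{j+1} \cup \hat H$ of scaled weight at most $(1+\epsilon')\cdot \hat w(P)$. Multiplying by $\eta$ and using the second bullet of Lemma~\ref{lem:rounding} introduces an extra $(1+\epsilon_0)$ factor, so unscaling yields a $\beta$-hop path in $G \cup \bar{H}_j \cup H_{j+1}$ of weight at most $(1+\epsilon_0)(1+\epsilon')(1+\epsilon)^j \cdot d_G(u,v) \le (1+\epsilon)^{j+1} d_G(u,v)$, as required.

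\textbf{Running time.} The decremental $d$-restricted hopset of Section~\ref{sec:restricted_hopset} processes an edge-changing graph in total time $\tilde O(d \cdot m' \cdot n^\rho)$, where $m'$ is the sum of the initial edge count and the number of edge events handled. In our reduction this graph is $G_{\mathrm{scaled}}^{j+1}$, which starts with at most $m + |\bar{H}_j|_0 = O(m)$ edges and over the whole deletion sequence absorbs at most $\Delta$ hopset insertions plus $O(m+\Delta)$ deletions. Substituting $d = O(\beta/\epsilon)$ gives $\tilde O((m+\Delta)\, n^\rho\, \beta/\epsilon)$, with $\beta = (1/(\epsilon\rho))^{O(1/\rho)}$ inherited from the underlying restricted-hopset hopbound.

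\textbf{Main obstacle.} The chief technical difficulty is that the restricted hopset subroutine is being fed a non-decremental input: hopset edges produced by the smaller scales $H_1,\dots,H_j$ appear as insertions into $G \cup \bar{H}_j$, and hopset edge weights in $\bar{H}_j$ can also decrease when the inductive algorithm discovers a shorter bunch-distance. I therefore need the Section~\ref{sec:restricted_hopset} construction to tolerate insertions (and weight decreases) while retaining its $\tilde O(d m n^\rho)$ bound amortized against $m+\Delta$; this will require charging any rebuilds triggered by an inserted/improved edge either to that edge or to the Thorup--Zwick-style cluster touched. A secondary subtlety is that the stretch argument must be read at a fixed time step, using the snapshots $G_t$ and $\bar{H}_{j,t}$ simultaneously, so that the reference path $P$ used for path doubling actually lives in the current scaled graph at the moment when the query is answered.
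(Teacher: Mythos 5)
Your overall architecture — scale $G \cup \bar H_j$ at scale $2^{j+1}$ with parameter $\ell = 2\beta+1$, run the $d$-restricted hopset subroutine at depth $d = O(\beta/\epsilon)$, then unscale, with path doubling justifying that depth $d$ suffices — is the same hierarchical reduction the paper uses, and your running-time accounting is essentially the paper's. The problem is in the stretch step. You write that ``the $(d,\beta,1+\epsilon')$-restricted hopset guarantee of $\hat H$ yields a $\beta$-hop path\ldots of scaled weight at most $(1+\epsilon')\hat w(P)$,'' treating the restricted-hopset subroutine as a black box that still delivers a clean $(1+\epsilon')$ stretch even after it has absorbed the insertions from $\bar H_j$. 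That is precisely what does \emph{not} come for free, and it is the bulk of the paper's work (Theorem~\ref{thm:single_stretch}).

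The subroutine of Section~\ref{sec:restricted_hopset} tolerates insertions only via the monotone ES-tree modification, which freezes a node's level when an inserted edge would decrease it. Consequently the levels $L(\cdot,\cdot)$ are upper bounds that may be stale, the bunches/clusters are defined from these possibly-stale levels, and the edge weights written into $\hat H$ are themselves ES-tree levels rather than true scaled distances. The static Thorup--Zwick/Elkin--Neiman stretch argument therefore no longer applies verbatim to $\hat H$ and $G^{j+1}_{\mathrm{scaled}}$. Your ``main obstacle'' paragraph flags the insertion issue but frames it entirely as an amortization question; you then still invoke the static stretch guarantee in the preceding paragraph. What is missing is the content of the paper's Theorem~\ref{thm:single_stretch} and Observation~\ref{obs:monotone}: a threefold induction on the level $i$, the scale $j$, and the time step $t$ showing that \emph{for the specific insertions produced by this construction} (hopset edges whose weights are $\min_r \eta(2^r,\epsilon_2) L_r(\cdot,\cdot)$ from smaller scales, together with edges that are ``stretched'' in a current tree and hence inherit a time-$(t-1)$ estimate) the monotone levels still realize the required $(1+\epsilon)^{j+1}$ stretch. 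Your ``secondary subtlety'' about fixing a time step is a step in the right direction but does not reach the real issue, which is that the levels at time $t$ may encode distances from earlier snapshots due to stretched edges, and one must show inductively that those older estimates were themselves within the target stretch.
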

Note that the lemma does not hold for \textit{any} restricted hopset, and in dynamic settings we need to use special properties of our construction to prove this. 

To prove this lemma we use the techniques of \cite{roditty2004} to maintain the clusters. For obtaining near-optimal update time, we combine this algorithm with the path doubling and scaling ideas described earlier. However, in order to utilize these ideas, we need to deal with the fact that inserting hopset edges from smaller scales introduces \textit{insertions}.

\paragraph{Handling insertions.}  In addition to maintaining clusters and distances decrementally, in our final construction we need to handle edge \emph{insertions}.
This is because we run it on a graph $G \cup \bigcup_{r=0}^{j-1} H_{r}$ (after applying scaling of Lemma~\ref{lem:rounding}).
While edges of $G$ can only be deleted, new edges are added to the $H$ that we need to take into account for obtaining faster algorithms.

At a high-level, the algorithm of Roditty and Zwick~\cite{roditty2004} decrementally maintains a collection of single-source shortest path trees (up to a bounded depth) using the Even-Shiloach algorithm (ES-tree)~\cite{ES} at the same time as maintaining a clustering.
To handle edge insertions, we modify the algorithm to use the \emph{monotone} ES-tree idea proposed by~\cite{henzinger2014, henzinger2016}.

The goal of a monotone ES-tree is to support edge insertions in a limited way.
Namely, whenever an edge $(u,v)$ is inserted and the insertion of the edge causes a distance decrease in the tree, we do \textit{not} update the currently maintained distance estimates.
Still the inserted edge may impact the distance estimates in later stages by preventing some estimates from increasing after further deletions. 

While it is easy to see that this change keeps the running time roughly the same as in the decremental setting, analyzing the correctness is a nontrivial challenge.
This is because the existing analyses of a monotone ES-tree work under specific structural assumptions and do not generalize to any construction. 
Specifically, while~\cite{henzinger2014} analyzed the stretch incurred by running monotone ES-trees on a hopset, the proof relied on the properties of the specific hopset used in their algorithm.
Since the hopset we use is quite different, we need a different analysis, which combines the static hopset analysis, with the ideas used in \cite{henzinger2014}, and also take into account the stretch incurred due to the fact that the restricted hopsets are maintained on the scaled graphs. Note also that our main hopset is not a simply a decremental maintenance of hopsets of \cite{elkin2017}, as our estimates are obtained from a \textit{sequence of hopsets} and insertions in one scale introduce insertions in the next scale. This is why we need a new argument and cannot simply rely on arguments in \cite{henzinger2014} and \cite{elkin2017}.

\paragraph{Putting it together.} We now go back to the setting of Lemma \ref{lem:single-scale}, and use a procedure similar to Algorithm \ref{alg:static_hopset}. Given a $2^j$-restricted hopset $\bar{H}_j=H_1 \cup... \cup H_{j}$ for distances up to $2^{j}$, we can now construct a graph $G^j$ by applying the scaling of Lemma \ref{lem:rounding} to $G \cup \bar{H}_j$ and setting $R=2^j$, $\ell=2\beta+1$. Then we can efficiently maintain an $\ell$-restricted hopset on $G^j$. Then by Lemma \ref{lem:single-scale} we can use this to update $H_{j+1}$. Importantly, $\ell$ is independent of $R$, and thus we can eliminate the factor $R$ to get $\tilde{O}(\beta mn^{\rho})$ total update time. Our final algorithm is a hierarchical construction that maintains the restricted hopsets on scaled graphs and the original graph simultaneously.

\paragraph{Stretch and hopbound analysis.} 
As discussed, applying the path-doubling idea to the hopset analysis is straightforward in static settings (and can be to some extent separated from the rest of the analysis) as is the case in \cite{elkin2019RNC}. 
However in our adapted decremental hopset algorithm this idea needs to be combined with the properties of the monotone ES tree idea and the fact that distance estimate are obtained from a sequence of hopsets on the scaled graph. In particular, in our stretch analysis we need to divide paths into smaller segments, such that the length of some segments is obtained from smaller iterations $i$, and the length of some segments are obtained from this combination of monotone ES tree estimates based on path doubling and scaling. We need a careful analysis to show that the stretch obtained from these different techniques combine nicely, which is based on a threefold inductive analysis:
\begin{enumerate}
    \item An induction on $i$, the iterations of the base hopset, which controls the sampling rate and the resulting size and hopbound tradeoffs.
    \item An induction on the scale $j$, which allows us to cover all ranges of distances $[2^j,2^{j+1}]$ by maintaining distances in the appropriate scaled graphs.
    \item An induction on time $t$ that allows us to handle insertions by using the estimates from previous updates in order to keep the distances monotone. 
\end{enumerate}
The overall stretch argument needs to deal with several error factors in \textit{addition to} the base hopset stretch. First, the error incurred by using hopsets for smaller scales, which we deal with by maintaining our hopsets by setting $\epsilon'=\frac{\epsilon}{\log n}$. This introduces polylogarithmic factors in the hopbound. The second type of error comes from the fact that the restricted hopsets are maintained for scaled graphs, which implies the clusters are only approximately maintained on the original graph. This can also be resolved by further adjusting $\epsilon'$. Finally, since we use an idea similar to the monotone ES tree of \cite{henzinger2014, henzinger2016}, we may set the level of nodes in each tree is to be larger than what it would be in a static hopset. But we argue that the specific types of insertions in our algorithm will still preserve the stretch. At a high-level this is because in case of a decrease we use an estimate from time $t-1$, which we can show inductively has the desired stretch. We note that while the monotone ES tree is widely used, we always need a different construction-specific analysis to prove the correctness.

\paragraph{Technical differences with previous decremental hopsets.} We note that while the use of monotone ES tree and the structure of the clusters in our construction are similar to \cite{henzinger2014}, our algorithm has several important technical differences. First, our hopset algorithm is based on different base hopset with a polylogarithmic hopbound, which as noted is crucial for obtaining near-optimal bounds in most of our applications. Additionally, we use a different approach to maintain the hopset efficiently by using path doubling and maintaining restricted hopsets on a sequence of \textit{scaled graphs}. Among other things, in \cite{henzinger2014} a notion of \textit{approximate ball} is used that is rather more lossy with respect to the hopbound/stretch tradeoffs. By maintaining restricted hopsets on scaled graphs, we are also effectively preserving approximate clusters/bunches with respect to the original graph, but as explained earlier, the error accumulation combines nicely with the path-doubling idea. Moreover, \cite{henzinger2014} uses an edge sampling idea to bound the update time, which we can avoid by utilizing the sampling probability adjustments in \cite{elkin2019RNC}, and the ideas in \cite{roditty2004}. Finally, our algorithm is based on maintaining the clusters up to a low hop, whereas they directly maintain bunches/balls.

\subsection{Applications in Decremental Shortest Paths}
Our algorithms for maintaining approximate distances under edge deletions are as follows. First, we maintain a $(\beta,1+\epsilon)$-hopset. Then, we use the hopset and Lemma \ref{lem:rounding} to reduce the problem to the problem of approximately maintaining short distances from a single source.
For our application in MSSP and APSP the best update time is obtained by setting the hopbound to be polylgarithmic whereas for SSSP the best choice is $\beta=2^{\tilde{O}(\sqrt{\log n})}$.
Using this idea for SSSP and MSSP mainly involves using the monotone ES tree ideas described earlier.
Maintaining the APSP distance oracle is slightly more involved but uses the same techniques as in our restricted hopset algorithm. This algorithm is based on maintaining Thorup-Zwick distance oracle \cite{TZ2005} more efficiently. At a high-level, we maintain \textit{both} a $(\beta, 1+\epsilon)$-hopset and Thorup-Zwick distance oracle simultaneously, and balance out the time required for these two algorithms. The hopset is used to improve the time required for maintaining the distance oracle from $O(mn)$ (as shown in \cite{roditty2004}) to $O(\beta mn^{1/k})$, but with a slightly weaker stretch of $(2k-1)(1+\epsilon)$. Querying distances is then the same as in the static algorithm of \cite{TZ2005}, and takes $O(k)$ time. 

\section{Decremental Hopset}

In this section we provide several decremental hopset algorithms with different tradeoffs. 
The starting-point of our constructions are the static hopsets described in Section \ref{sec:static_hopset}. But in order to get an efficient dynamic algorithm, we need to modify this construction in several ways. First, in Section \ref{sec:restricted_hopset}, we explain how we can adapt ideas by Roditty-Zwick \cite{roditty2004} to obtain an algorithm for computing a $d$-restricted hopset. The total running time of this algorithm is $O(dmn^{\rho})$ (where $\rho <1$ is a constant). While existentially this construction matches the state-of-the-art static hopsets with respect to size and hopbound tradeoffs, the update time is undesirable for large values of $d$. 

We will then provide another hopset algorithm with total running time of $\tilde{O}(mn^{\rho})$, by simultaneously maintaining a sequence of restricted hopsets using scaling and path-doubling ideas.
Recall that our algorithm maintains a sequence of graphs $H_0, \ldots, H_{\log (nW)}$, where for each $1 \leq j \leq \log (nW)$ , $H_0 \cup \ldots \cup H_j$ is a $2^j$-restricted hopset of $G$ ($W$ is the aspect ratio). Instead of computing each $H_j$ separately, we use $G \cup \bigcup_{r=0}^{j-1} H_{r}$ to construct $H_j$.
We observe that at the cost of some small approximation errors, any path of length $\in [2^{j-1}, 2^j)$ in $G$ can be approximated by a path of at most $2\beta+1$ hops in $G \cup \bigcup_{r=0}^{j-1} H_{r}$. 
To use this idea we will prove the following main lemma as a building block for our final hopset.
\singlescale*

There are two main challenges that we need to address for proving this lemma. First, we would like to make the running time independent of the scale bound $2^j$, which is what we would get by directly using the algorithm of \cite{roditty2004}.
To that end, we are going to run our algorithm on a scaled graph, which would allow us to only maintain distances up to depth $O(\beta / \epsilon)$.
This relies on having the $2^j$-restricted hopset $\bar{H}_j$, which allows us to maintain the hopset $\bar{H}_{j+1}$.
Second, while $G$ is undergoing deletions, $H_j$ may be undergoing edge \emph{insertions} incurred by restricted hopset edges added for smaller scales. In Section \ref{sec:new_hopset} we explain how such insertions can be handled using the monotone ES tree algorithm (proposed by \cite{henzinger2014}).
In Section \ref{sec:ss_stretch} we use the properties of this algorithm to prove Lemma \ref{lem:single-scale}.



\subsection{Maintaining a restricted hopset} \label{sec:restricted_hopset}


In this section our goal is to maintain a decremental \textit{restricted} hopset. 
We start by adapting the decremental algorithm by \cite{roditty2004} that maintains the Thorup-Zwick distance oracles \cite{TZ2005} with stretch $(2k-1)$ for pairs of nodes within distance $d$ in $\tilde{O}(dmn^{1/k})$ total time, but we use it to obtain a $d$-restricted hopset. In particular, using ideas in \cite{roditty2004}, and by restricting the shortest path trees up to depth $d$, we can maintain a variant of the hopset defined in Lemma \ref{lem:bunches} in which the hopset guarantee only holds for nodes within distance $d$.

In our adaptation of \cite{roditty2004}, we make the following two modifications: First, we change the sampling probabilities based on the hopset algorithm described in Section \ref{sec:static_hopset}.
Second, in addition to computing clusters we also add edges for each cluster that forms the hopset. This leads to a $(d, \beta, 1+\epsilon)$-hopset, but the update time is $\tilde{O}(dmn^{\rho})$.


We can then use the algorithm of Roditty and Zwick \cite{roditty2004} to maintain the clusters and bunches. At a high-level, the idea is to maintain Even-Shiloach \cite{ES} trees for each node $u \in A_i \setminus A_{i+1}$ to compute the cluster $C(u)$. The running time in the static algorithm can be bounded using the fact that each node overlaps with at most $\tilde{O}(n^{\rho})$ clusters (see the modified Dijsktra's algorithm description in Appendix \ref{app:static_hopset}). In dynamic settings this would translate to how many ES trees each node overlaps with. While these overlaps can be bounded at any point in time, it does not immediately hold for a sequence of updates, since nodes may keep leaving and joining clusters. 
Note that while the clusters we consider are slightly different than what was used in~\cite{roditty2004} (even if we ignore the difference in sampling probabilities), we use a \emph{subset} of the clusters defined in~\cite{roditty2004}.

We say that clusters are bounded by depth $d$ when $v \in C(u)$ if it satisfies Definition \ref{def:bunches}, and $d_G(v,u) \leq d$.
By extending the techniques \cite{roditty2004} we can get the following lemma for maintaining clusters (and hence bunches) up to a certain depth. At high-level this is done by computing the shortest path trees rooted at the cluster center up to depth $d$. Moreover, proof of this lemma relies on bounding the number of clusters overlapping each nodes over a sequence of updates. The details of this proof can be found in Appendix \ref{app:restricted_hopset}.

\begin{lemma}\label{lem:dec_bound_cluster}
There is an algorithm for maintaining clusters $C(w)$ (as in Definition \ref{def:bunches}) up to depth $d$ for all nodes $w \in A_i$, $0 \leq i \leq k+1/\rho+1$, such that for every $v \in V$ the expected total number of times the edges incident on $v$ are scanned over all trees (i.e.~trees on $C(w), w \in A_i$) is $O({d}/q_i)$, where $q_i$ is the sub-sampling probability.
 \end{lemma}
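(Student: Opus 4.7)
The plan is to extend the decremental cluster-maintenance scheme of Roditty--Zwick \cite{roditty2004} to the truncated-depth setting with the sub-sampling probabilities $q_i$ fixed by Definition~\ref{def:bunches}. Concretely, for every center $w \in A_i$ I maintain a decremental Even--Shiloach (ES) tree $T_w$ rooted at $w$ with depth cutoff $d$, and in parallel a single auxiliary ES tree rooted at a virtual super-source joined by zero-weight edges to every vertex of $A_{i+1}$; this auxiliary tree maintains $d(v, A_{i+1})$ for every $v \in V$. A vertex $v$ resides in $T_w$ at level $d(v, w)$ exactly when $d(v, w) \leq d$ and $d(v, w) < d(v, A_{i+1})$, which by Definition~\ref{def:bunches} faithfully tracks $C(w)$ restricted to depth $d$. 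Edge deletions propagate to the auxiliary tree and to every $T_w$ in the usual ES fashion.

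Because the $A_j$'s are sampled once and the graph is strictly decremental, both $d(v, w)$ and $d(v, A_{i+1})$ are non-decreasing in time. Consequently, even when $v$ temporarily leaves $T_w$ (because $d(v, w)$ has caught up with $d(v, A_{i+1})$) and later rejoins (after $d(v, A_{i+1})$ has overtaken $d(v, w)$ again), the re-insertion occurs at a level no smaller than at the previous exit. Thus the level of $v$ inside any fixed $T_w$ is a globally monotone step function bounded above by $d+1$. Since each strict level increase produces exactly one scan of the edges incident on $v$ by the ES invariant, $v$ contributes at most $d+1$ edge-incidence scans per tree, and summing over $w$ the total scan count for $v$ is bounded by
\[
(d+1)\cdot N_v, \qquad N_v := \left|\{w \in A_i : v \text{ ever resides in } T_w\}\right|.
\]

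The lemma therefore reduces to the probabilistic bound $\mathbb{E}[N_v] = O(1/q_i)$ taken over the random sub-sample $A_{i+1} \subseteq A_i$, and I expect this step to be the main technical obstacle. The argument is a principle-of-deferred-decisions coupling in the spirit of \cite{TZ2005, roditty2004}: condition on $A_i$ and on the edge-deletion sequence, then reveal the $A_{i+1}$-memberships of $A_i$-vertices lazily, in the chronological order in which each vertex $w$ first satisfies $d_t(v, w) < d_t(v, \widetilde{A}_{i+1})$ under the already-revealed sub-sample $\widetilde{A}_{i+1}$. Each reveal is an independent Bernoulli$(q_i)$ trial, and $w$ can contribute to $N_v$ only if the trial is a failure; once the first success $w^\star$ is exposed, monotonicity of distances together with the depth cutoff $d$ confines subsequent contributions to a residual process whose careful accounting (extending \cite{roditty2004} to our sampling probabilities) yields $\mathbb{E}[N_v] = O(1/q_i)$. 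Combining with the previous paragraph gives the claimed $O(d/q_i)$ expected edge-incidence scans per vertex; multiplying by the $O(\deg v)$ cost of each scan and summing over $v$ then produces the $O(dm/q_i)$ total running-time bound used in Section~\ref{sec:restricted_hopset}.
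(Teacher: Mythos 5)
Your overall scaffolding (ES trees per center plus an auxiliary tree for $d(v,A_{i+1})$, per-tree level monotonicity bounded by $d+1$) is fine, but the reduction to $\mathbb{E}[N_v] = O(1/q_i)$ is where the argument breaks, and it is not merely a technical obstacle to be filled in: the claim is false. At any \emph{single} time step the expected number of clusters containing $v$ is indeed $O(1/q_i)$ by the usual geometric argument, but over the full decremental history $d(v, A_{i+1})$ can increase $\Theta(d)$ times, and each such increase can admit $\Theta(1/q_i)$ fresh centers $w$ into the set of trees $v$ belongs to. A concrete instance: take $|A_i|=n$ with $d_0(v,w_j)=j$ and push $d(v,w_1), d(v,w_2),\dots$ one by one to large values; then essentially every $w_j\notin A_{i+1}$ is at some point ahead of the current pivot in the distance ordering, so $\mathbb{E}[N_v]=\Theta(n(1-q_i))$, which for $q_i=n^{-\rho}$ with $\rho<1$ is $\omega(1/q_i)$. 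The best one can prove is $\mathbb{E}[N_v]=O(d/q_i)$ (and the paper in effect does so for the ``join'' cost), at which point the product $(d+1)\cdot N_v = O(d^2/q_i)$ is a factor of $d$ too weak.

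The paper's proof avoids your product decomposition entirely. It separates the bookkeeping into join events and level-increase events, and for the latter it indexes scan events by $(t,j)$ pairs where $j$ is the \emph{rank} of a center in the distance ordering from $v$ at time $t$. The probability that $v\in C_t(w_{t,j})$ is $(1-q_i)^{j-1}$ for every fixed $t$, and crucially, for each fixed rank $j$ the distance to the rank-$j$ center can strictly increase at most $\hat d$ times. Summing $\sum_{(t,j)} \Pr[v\in C_t(w_{t,j})] \leq \hat d \sum_{j\geq 1}(1-q_i)^{j-1} = \hat d/q_i$ gives the lemma. The essential point your decomposition misses is the coupling between when $v$ joins a tree and how many level increases it can still incur there: a tree joined late (at a high level) contributes few subsequent scans, and the $(t,j)$-indexing captures this automatically, while the bound $(d+1)\cdot N_v$ pays $d$ for every tree regardless of when it was entered. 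If you want to salvage your outline you would have to replace $\mathbb{E}[N_v]=O(1/q_i)$ with the correct $O(d/q_i)$ and then abandon the naive product in favor of an argument that charges each scan to the rank of the center at the time of the scan.
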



It is easy to see that the stretch analysis of \cite{elkin2019RNC} extends to a $d$-restricted hopset by only restricting the analysis to pairs of nodes within distance at most $d$. As we will see, in our main construction (Section \ref{sec:new_hopset}) we will need a completely new stretch analysis.

By combining the analysis of the modified Dijkstra algorithms of \cite{TZ2005}, Lemma \ref{lem:bunches}, and Lemma \ref{lem:dec_bound_cluster}, we can show that a  $d$-restricted hopset with the following guarantees can be constructed:

\begin{theorem} \label{thm:restricted_hopset}
Fix $\epsilon > 0, k \geq 2$ and $ \rho \leq 1$.
Given a graph $G=(V,E,w)$ with integer and polynomial weights, subject to edge deletions we can maintain a $(d, \beta, 1+\epsilon)$-hopset, with $\beta = O\left((\frac{1}{\epsilon} \cdot (k+1/\rho))^{k+1/\rho+1}\right)$ in $O(d (m+n^{1+\frac{1}{2^k-1}})n^{\rho})$ total time. 
The algorithm works correctly with high probability.
\end{theorem}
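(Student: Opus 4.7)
The plan is to assemble the theorem from three ingredients already laid out in the paper: the sampling scheme of Definition~\ref{def:bunches} (with probability floor $n^{-\rho}$), the cluster-maintenance procedure of Lemma~\ref{lem:dec_bound_cluster}, and the static stretch analysis behind Lemma~\ref{lem:bunches}. Algorithmically, I would initialize the level sets $V = A_0 \supseteq A_1 \supseteq \cdots \supseteq A_{k+1/\rho+1} = \emptyset$ with sub-sampling probabilities $q_i = \max(n^{-2^i \nu}, n^{-\rho})$ where $\nu = 1/(2^k-1)$, and for every center $w \in A_i \setminus A_{i+1}$ maintain a depth-$d$ Even--Shiloach tree rooted at $w$ that realizes $C(w)$ according to the membership rule of Definition~\ref{def:bunches}. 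Whenever a vertex $u$ enters $C(w)$ (equivalently, $w$ enters $B(u) \cup \{p(u)\}$ from $u$'s viewpoint), I add a hopset edge $(u,w)$ of weight equal to the currently maintained $d_G(u,w)$; whenever $u$ leaves $C(w)$ or the tree distance to $u$ strictly exceeds $d(u, A_{i+1})$, I remove the edge. The pivot $p(u)$ is recomputed in the standard way from the scanned trees.

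For the stretch and hopbound, the key observation is that the bunches and pivots produced by the above procedure are identical to those that the static algorithm of Lemma~\ref{lem:bunches} would produce on the same graph, except that $C(w)$ and $B(u)$ are truncated at depth $d$. I would simply rerun the argument sketched in Appendix~\ref{app:static_hopset} on any pair $u, v$ with $d_G(u,v) \leq d$: the inductive construction of a low-hop path still only invokes bunch edges of length at most $d_G(u,v) \leq d$, so the truncation never discards any edge used in the proof. Consequently the hopbound $\beta = O(((k+1/\rho)/\epsilon)^{k+1/\rho+1})$ and stretch $1+\epsilon$ promised by Lemma~\ref{lem:bunches} both carry over verbatim to all pairs at true distance $\leq d$, which is exactly the $(d,\beta,1+\epsilon)$-restricted guarantee.

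The running-time accounting is the part I expect to require the most care. Fix a level $i$. By Lemma~\ref{lem:dec_bound_cluster}, over the entire deletion sequence the expected number of scans of edges incident to any fixed $v \in V$, summed over all depth-$d$ ES-trees rooted at centers in $A_i$, is $O(d/q_i)$. Summing over all $v$ and charging each scan to the incident edge gives $O(md/q_i)$ work at level $i$, so the total ES-tree work is $O(md \sum_i 1/q_i) = O(md \, n^{\rho}(k+1/\rho))$ since $q_i \geq n^{-\rho}$ and there are $O(k + 1/\rho)$ levels. I still need to account for maintaining the hopset edges themselves: the expected number of pairs $(u,w)$ that ever satisfy the bunch condition is bounded by the standard Thorup--Zwick analysis applied with our sampling probabilities, giving $\tilde{O}(n^{1+1/(2^k-1)})$ inserted hopset edges, each of which can be inserted/updated/deleted in $\tilde{O}(d n^{\rho})$ amortized time via the same ES-tree machinery. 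Combining the two contributions gives the claimed $O(d(m + n^{1+1/(2^k-1)})n^{\rho})$ total time.

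The high-probability claim then follows from standard Chernoff bounds on the level sizes $|A_i|$ and on the cluster-overlap counts that underlie Lemma~\ref{lem:dec_bound_cluster}; repeating the sampling $O(\log n)$ times (or sampling with the usual slack) drives the failure probability to $n^{-c}$ for any constant $c$. The main technical obstacle is verifying that the overlap bound in Lemma~\ref{lem:dec_bound_cluster} indeed holds \emph{amortized over the entire deletion sequence} rather than only at a single snapshot, since vertices may enter and leave clusters repeatedly; however this is exactly what the Roditty--Zwick amortization accomplishes, and the adaptation is straightforward once one observes that shrinking $q_i$ to $\max(q_i, n^{-\rho})$ only tightens the relevant overlap bound.
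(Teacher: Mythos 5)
Your proposal follows the same route as the paper: adapt the Roditty--Zwick maintenance of Thorup--Zwick clusters with the modified sampling probabilities $q_i = \max(n^{-2^i\nu}, n^{-\rho})$, bound the total ES-tree scan work via Lemma~\ref{lem:dec_bound_cluster}, and inherit stretch and hopbound from the static analysis of Lemma~\ref{lem:bunches} restricted to pairs at distance at most $d$. This is precisely what the paper does (the paper's own proof is terse and defers to Appendix~\ref{app:restricted_hopset} for the Roditty--Zwick machinery), so the proposal is correct in approach and conclusion.

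One small imprecision worth noting: you say the expected number of pairs $(u,w)$ that \emph{ever} satisfy the bunch condition over the deletion sequence is $\tilde{O}(n^{1+1/(2^k-1)})$, by the ``standard Thorup--Zwick analysis.'' That analysis bounds the hopset size \emph{at any fixed time}; across the entire deletion sequence a node may enter and leave many clusters, and the total number of cluster-membership events is controlled by the scan bound of Lemma~\ref{lem:dec_bound_cluster} (namely $O(d/q_i)$ per node per level, not $O(1/q_i)$). Your final running-time tally $O(d(m + n^{1+\nu})n^{\rho})$ still comes out right because those membership events are already charged to the ES-tree scans, but the ``number of pairs ever in a bunch'' step should be replaced by this amortized argument. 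Also, as in Appendix~\ref{app:restricted_hopset}, the trees should be maintained to depth $\hat{d} = (1+\epsilon)d$ rather than $d$ so that the restricted guarantee covers the slightly longer $(1+\epsilon)$-approximate paths; this does not change the asymptotics.
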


We review the algorithm of \cite{roditty2004}, and its analysis in Appendix \ref{app:restricted_hopset}. In Section \ref{sec:new_hopset} we will use a similar algorithm (that also handles insertions) for our new hopset construction, presented in Algorithm \ref{alg:restricted_hopset}, so we do not repeat the algorithm here.

\subsection{Restricted hopsets with improved running time}\label{sec:new_hopset}
As discussed, the algorithm described in Section \ref{sec:restricted_hopset} has a large update time for $d$-restricted hopsets, when $d$ is large.
In this section we provide a new hopset algorithm that is based on maintaining these restricted hopsets on a sequence of scaled graphs, and we show how this improves the update time, in exchange for a small (polylogarithmic) loss in the hopbound. In this construction, we rely on hopset edges added for smaller distance scales, and 
combined with a known technique-the monotone ES tree ideas-that are needed to handle insertions.
We note that while this general technique is used before, analyzing the correctness of monotone ES tree for each structure requires an analysis that depends on the specific construction.

\paragraph{Handling edge insertions}

\newcommand{\dest}{L}
First we explain the monotone ES tree idea and how it can be used for maintaining single-source shortest path up to a given depth $D$, and then combine these with the restricted hopset algorithm in Theorem \ref{thm:restricted_hopset}.

Using the monotone ES tree ideas may impact the stretch, and clearly do not apply to all types of insertions but only for insertion of certain structural properties.  In Section \ref{sec:ss_stretch}, we will prove that specifically for the insertions in our restricted hopset algorithm the stretch guarantee holds.

We show how to handle edge insertions by using a combining of the monotone ES-tree algorithm~\cite{henzinger2014} (and further used in the hopset construction of \cite{henzinger2016}). 
The idea in a monotone ES tree is that if an insertion of an edge $(u,v)$ causes the level of a node $v$, denoted by $L(v)$, in a certain tree to decrease, we will not decrease the level. In this case we say the edge $(u,v)$ and the node $v$ are \textit{stretched}. More formally, a node $v$ is stretched when  $L(v) > \min_{(x,v) \in E} \dest(x) + w(x, v)$. The complete algorithm is presented in Algorithm~\ref{alg:estree} in Appendix \ref{app:monotone_es}. The update time analysis is straightforward, and summarized in the following lemma:

\begin{lemma} \label{lem:es_time}
Give a graph undergoing edge-deletions, and a set of edge insertions, a monotone ES tree can be maintained in $O( (m+\Delta) D)$ overall update time on a graph with $m$ edges, where $\Delta$ is the number of edge insertions.
\end{lemma}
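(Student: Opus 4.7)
The plan is to mimic the standard amortized analysis of the Even--Shiloach tree up to depth $D$, extended to keep track of the extra edges that appear via insertions. I would first recall the invariant maintained by the monotone variant: every node $v$ stores a level $L(v) \in \{0,1,\dots,D,\infty\}$, the algorithm only ever \emph{increases} $L(v)$ (insertions that would decrease a level are ignored), and once $L(v)$ exceeds $D$ we remove $v$ from the tree. Thus for every $v$ the level can take at most $D+1$ distinct values over the entire update sequence.

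The second step is to account for work. Following the standard ES potential argument, I would charge all the work done when processing an event to one of two buckets: (i) the constant (or $O(\log n)$ heap) cost of handling the triggering event itself (edge deletion, edge insertion, or a level change propagation), and (ii) the cost of scanning incident edges when a node's level increases. For bucket (i), each deletion and each insertion contributes $O(1)$ directly; in total this is $O(m + \Delta)$. For bucket (ii), whenever $L(v)$ increases, the algorithm scans all currently present edges incident to $v$ to recompute $L(v)$ from $\min_{(u,v) \in E}(L(u)+w(u,v))$ and to notify neighbors whose level may now need to increase. This scan costs $O(\deg_t(v))$ at the time $t$ of the level increase.

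The third step is to bound the total scan cost. Each node's level can increase at most $D+1$ times, so the total number of scan events across all nodes is at most $n(D+1)$. More usefully, by charging the cost of each scan \emph{per edge} instead of per node, each edge $e = (u,v)$ can be scanned only when $L(u)$ or $L(v)$ increases, i.e.~at most $2(D+1)$ times during the interval in which $e$ exists in the graph. The set of edges that ever exist across the whole update sequence has size at most $m + \Delta$ (the initial $m$ edges plus the $\Delta$ inserted edges). Summing over this edge set gives a total scan cost of $O((m+\Delta)D)$, which dominates the $O(m+\Delta)$ event-handling cost and yields the claimed bound.

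The only subtlety, and the place where I would be most careful, is making sure that insertions really only add $O(1)$ direct work beyond potentially triggering later level increases, and do not themselves cause work proportional to the whole graph. In a monotone ES tree, an insertion $(u,v)$ is processed by inspecting $L(u)+w(u,v)$ against $L(v)$ and updating the data structure over the adjacency of $v$ by one entry; since the rule forbids decreasing $L(v)$, this is $O(1)$ (or $O(\log n)$ with a heap, absorbed into $\tilde O$). All further work that $(u,v)$ can cause is triggered by future level increases of $u$ or $v$, which are already accounted for in the per-edge scan bound above. With this, the amortized total update time is $O((m+\Delta)D)$, as stated.
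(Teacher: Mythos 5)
Your proposal is correct and follows essentially the same amortized argument the paper sketches: levels only increase (at most $D+1$ values), work is charged per edge when an endpoint's level rises, and the at-most $m+\Delta$ edges that ever exist each get scanned $O(D)$ times. Your treatment is merely more explicit about the per-edge charging and the $O(1)$ direct cost of handling an insertion, but the approach is the same.
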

\begin{proof}[Proof sketch.]
The running time analysis of the algorithm follows based on an argument similar to the analysis of the classic ES tree algorithm \cite{ES, king1999}.  The total time for updating distances up to a depth $D$ is $O( (m+\Delta) D)$, roughly speaking, since the edges incident to each node $v$ are scanned any time level of each node changes. Since distances can only increase there can be at most $D$ times for nodes with depth at most $D$ to the source. Furthermore, $\Delta$ is the number of added edges that also need to be scanned in each update. By summing over all edges incident to all nodes the claim follows.
\end{proof}

\paragraph{Restricted hopsets with insertions.}

Next, given a sequence of deletes $E^-$, and insertions $E^+$
in Algorithm \ref{alg:restricted_hopset} we describe how the algorithm of \cite{roditty2004} is modified to handle these insertions by combining it with the monotone ES tree algorithm of Lemma \ref{lem:es_time} (Algorithm \ref{alg:estree} in Appendix \ref{app:monotone_es}) for each tree inside a cluster. Using this we can bound the update time, however proving the stretch is more involved, and depends on the specific structure of insertions, and it does not hold for any set of insertions.  

Our goal is to maintain the hopset algorithm of Section \ref{sec:static_hopset}. Recall that we start by sampling sets $V=A_0 \supseteq A_1 \supseteq ... \supseteq A_{k+1/\rho+1}=\emptyset$ initially, with sampling probabilities $q_i=\max(n^{-2^i \cdot \nu}, n^{-\rho})$, where $0 < \rho \leq 1/2$ is a parameter.  The sets remain unchanged during the updates.
Next, we need to maintain values $d(v, A_i), 1 \leq i \leq k+1/\rho+1$ for all nodes $v \in V$, and use these values for maintaining $p(v)$. For this we can simply maintain a monotone ES tree (using Algorithm \ref{alg:estree}) rooted at a dummy node $s_i$ connected to all nodes in $A_i$ up to depth $d$, in total time $O(dm)$. We denote the estimate obtained by maintaining this distance by $L(v,A_{i+1})$. The pivots $p(v), \forall v \in V$ can also be maintained in this process.

Next we need to maintain the clusters. Handling nodes that leave a cluster is simpler. Recall that in the static construction for $z \in A_i\setminus A_{i+1}$ we have $v \in C(z)$ if and only if $d(z,v) < d(v,A_{i+1})$, but here we maintain \textit{approximate} bunches and clusters based on monotone ES tree estimates. After each deletion, for each node $v$ and the cluster centers $z$ we first check whether the distance estimate $L(z,v)$ has increased.  If $L(z,v) \geq \frac{L(v, A_{i+1})}{1+\epsilon}$, $v$ will be removed from $C(z)$. There is a slight technicality here for ensuring that the size of the bunches are still bounded. Instead of directly using $L(v,A_{i+1})$, we maintain approximate bunches with radius $\frac{L(v,A_{i+1})}{1+\epsilon}$, which ensures that the size of the bunches are bounded as these are subsets of the original bunches on $G$ (as argued in Lemma \ref{lem:dec_bound_cluster}). This $\epsilon$ parameter is rescaled later appropriately for obtaining the final estimate.   
The more subtle part is adding nodes to new clusters. For each $0 \leq i <k+1/\rho+1$, we define a set $X_{i}$ consisted of all vertices whose distance to $A_i$ is increased as a result of a deletion, but where this distance is still at most ${d}$. The sets $X_{i}$ can be computed while maintaining $L(v,A_i)$. This can also be done by maintaining a single tree rooted at a dummy node $s_i$.

 Note that a node $v$ would join $C(w)$ only after an increase in $L(v,A_{i+1})$. Using this observation, we can use the modified Dijkstra algorithm (also used in static hopset construction in Appendix \ref{app:static_hopset}) which can be summarized as follows: when we explore neighbors of a node $x$, we only relax an edge $(x,y)$ if $L(x,y)+w(x,y) < \frac{L(x,A_{i+1})}{1+\epsilon}$. 
 Hence in each iteration $i$, after each deletion for every $v \in X_{i+1}, z \in B_i(u) \setminus B_i(v)$, and each edge $(u,v) \in E$ we check if $L(z,u) +w(u,v) < \frac{L(v, A_{i+1}}{1+\epsilon})$. If yes, then $v$ joins $C(z)$, and $v$ is pushed to a priority queue $Q(z)$. This priority $Q(z)$ stores the distances in each tree $T(z)$ rooted at $z$.
 These nodes join clusters $T(z)$, but there may be other nodes that also need to join $C(z)$ as a result of this change.
Hence after this initial phase, for each $z \in A_i \setminus A_{i+1}$ where $Q(z) \neq \emptyset$, we run the modified Dijkstra's algorithm. 
A summary of this algorithm is presented in Algorithm \ref{alg:restricted_hopset}. Note that the input to this algorithm is a graph $G$, distance bound $d$, a set of edges $E^-$ deleted (or updated by distance increase), and a set of insertions $E^+$. 

By combining these two algorithm we can keep the running time the same as in Theorem \ref{thm:restricted_hopset} despite the insertions:

\begin{theorem}\label{thm:monotone_es_time}
Assume that we are given a set of $\Delta$ updates including a set $E^-$ of deletions and a set $E^+$ of insertions, and parameters $d$, and $\epsilon$, w.h.p.~the total update time of Algorithm \ref{alg:restricted_hopset} is $O((m+\Delta+n^{1+\frac{1}{2^k-1}})dn^{\rho})$. 
\end{theorem}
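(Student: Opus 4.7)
The plan is to split the total cost into two contributions: (i) the monotone ES trees rooted at the dummy source $s_i$ that maintain $L(v,A_{i+1})$ and the pivots $p(v)$, for each level $i \in \{0,\dots,k+1/\rho+1\}$; and (ii) the monotone ES trees $T(z)$ maintained inside each cluster $C(z)$ for $z \in A_i \setminus A_{i+1}$. Part (i) is a direct application of Lemma \ref{lem:es_time} to $O(k+1/\rho) = O(1)$ trees of depth $d$, contributing $O((m+\Delta)d)$ in total, which is absorbed into the claimed bound.

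For part (ii), the core step is to extend the overlap argument of Lemma \ref{lem:dec_bound_cluster} from the purely decremental setting to the monotone setting. Fix a level $i$ and a vertex $v$; the trees $T(z)$ that ever contain $v$ are exactly those $z \in A_i \setminus A_{i+1}$ for which the relaxed test $L(z,v) < L(v,A_{i+1})/(1+\epsilon)$ holds at some time. Since the monotone estimates $L$ are non-decreasing in time and always at least the true current distance, membership of $v$ in $T(z)$ implies $d^{\mathrm{init}}_G(z,v) \le L(z,v) < L(v,A_{i+1})$, so $v$ lies in an initial-graph bunch of $z$ of radius at most $L(v,A_{i+1})$. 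Together with the initial-time bunch-size bound in the proof of Lemma \ref{lem:bunches}, this yields $O(1/q_i) = O(n^\rho)$ trees containing $v$, with the corresponding w.h.p.\ bound following from the standard concentration argument for Thorup--Zwick style sampling.

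Given the overlap bound, the running time of part (ii) follows by per-tree accounting via Lemma \ref{lem:es_time}. Summing the edge count $m'$ of each tree over all trees at level $i$ yields $\sum_{v} \deg(v)\cdot |\{z : v \in C(z) \text{ at some time}\}| = O(m/q_i)$, and the analogous sum for inserted edges gives $O(\Delta/q_i)$. The hopset edges installed into trees during the construction add at most $O(n^{1+1/(2^k-1)}/q_i)$, since their total count is bounded by Lemma \ref{lem:bunches}. Multiplying by the per-tree depth $d$, using $q_i \ge n^{-\rho}$, and summing over the $O(1)$ levels yields $O((m + \Delta + n^{1+1/(2^k-1)})\, d\, n^\rho)$, as claimed.

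The main obstacle is verifying the overlap bound in the monotone regime: because the monotone estimate can exceed the true current distance, a vertex $v$ may remain in $C(z)$ past the moment at which its true distance ratio would remove it, and the resulting ``stretched'' membership does not directly fit the decremental analysis of Lemma \ref{lem:dec_bound_cluster}. I would handle this by a sandwich argument, bounding the monotone estimate at every time $t$ in terms of a hypothetical estimate on the initial graph (up to the $(1+\epsilon)$ factor introduced by the relaxed test) and charging each membership event to the corresponding initial-time bunch; together with the size bound on these initial-time bunches, this produces the desired $O(1/q_i)$ overlap w.h.p.\ and closes the time analysis.
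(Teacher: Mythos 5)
Your decomposition into the dummy-source ES trees and the per-cluster trees $T(z)$ is fine, and you are right that the crux is extending Lemma~\ref{lem:dec_bound_cluster} to the monotone setting. However, the quantity you aim to bound in part (ii) is the wrong one, and your argument for bounding it does not go through.

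You try to show that the number of \emph{distinct} trees $T(z)$ that ever contain $v$ over the whole update sequence is $O(1/q_i)$, and then multiply by the per-tree depth $d$. This is not what Lemma~\ref{lem:dec_bound_cluster} establishes, nor is it plausible as stated. At any \emph{fixed} time the geometric argument gives $O(1/q_i)$ overlap, but across the update sequence the radius $L(v,A_{i+1})$ keeps increasing, and the set of roots $z$ with $v \in C(z)$ at some time can grow well beyond $O(1/q_i)$: as $L(v,A_{i+1})$ increases, new $A_i$ vertices cross into $v$'s ball while others leave, and the union over time need not be captured by a single geometric random variable. Your reduction to the initial graph compounds the problem: membership of $v$ in $T(z)$ at time $t$ only yields $d^{\mathrm{init}}_G(z,v) < L_t(v,A_{i+1})$, and $L_t(v,A_{i+1})$ can be far larger than $d^{\mathrm{init}}(v,A_{i+1})$. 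The set of $A_i$ vertices within initial distance $L_T(v,A_{i+1})$ of $v$ is not a bunch and has no $O(1/q_i)$ size guarantee, because in the initial ordering by distance from $v$ many $A_i$ vertices may precede the first $A_{i+1}$ vertex once the radius is enlarged to $L_T(v,A_{i+1})$.

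What Lemma~\ref{lem:dec_bound_cluster} actually bounds is the total number of edge-scan events at $v$ over the whole sequence, via the $(t,j)$-pair argument: for each time $t$, order $A_i$ by the \emph{current} distance from $v$, observe that $\Pr[v \in C_t(w_{t,j})] \le (1-q_i)^{j-1}$, and charge each scan to a pair $(t,j)$ where $d_t(v,w_{t,j}) < d_{t+1}(v,w_{t,j}) \le \hat d$; since each index $j$ admits at most $\hat d$ such pairs, the total is $O(\hat d / q_i)$. The only ingredient this argument needs is that the estimates being compared are non-decreasing in $t$, and this is exactly what the monotone ES tree guarantees even in the presence of the insertions $E^+$. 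That is the paper's point: insertions never decrease a level, so the $(t,j)$ accounting of Lemma~\ref{lem:dec_bound_cluster} applies verbatim, and the only change is that the edge count available to be scanned is $m + \Delta$ (plus the $O(n^{1+\frac{1}{2^k-1}})$ hopset edges) rather than $m$. You should replace your ``distinct trees $\times$ depth'' accounting and the initial-graph sandwich by a direct re-run of the $(t,j)$ argument, checking monotonicity of $L(\cdot,\cdot)$ under insertions, and then sum over vertices and levels as before.
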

\begin{proof}[Proof sketch.]
The proof of this theorem is almost exactly the same as the proof of Theorem \ref{thm:restricted_hopset}. We rely on Lemma \ref{lem:dec_bound_cluster} again to show the over a sequence of updates, for each iteration $i$, each node $v$ is only in $\tilde{O}(1/q_i)$ clusters. Since the monotone ES tree ensures that the insertions do not reduce the level of a node, the number of times edges incident to $v$ are scanned is still $\tilde{O}(d/q_i)= \tilde{O}(dn^{\rho})$. We now have $m+\Delta$ edges, and the theorem follows by summing overall nodes.
\end{proof}
We showed that we can handle a set of insertions within the same the running time. But as discussed, directly using algorithm \ref{alg:restricted_hopset} still does not lead to our desired update time. 
Therefore in the rest of this section we described how we can get improved running time by using this algorithm to maintain a hierarchical construction of restricted hopsets on a sequence of scaled graphs. As explained one idea is that we can add hopset edges for smaller scales and use the added edges in computing distances for larger scales. Once we specify the set of insertion into each of the \textit{scaled graphs} considered, we will show that such insertions will also preserve the hopset stretch (with small polylogarithmic overhead) in the \textit{original graph}.\\

\begin{algorithm}[H]\small
\caption{Monotone $d$-restricted hopset. Adaptation of \cite{roditty2004}. }
\label{alg:restricted_hopset}
\SetKwProg{Fn}{Function}{}{}
Sample sets $V=A_0 \supseteq A_1 \supseteq ... \supseteq A_{k+1/\rho+1}=\emptyset$.\\ 
\Fn{\textsc{UpdateClusters}$(G,E^-,E^+,d)$}{
 Add edges $(x,y) \in E^+$ to any tree $T(z)$ s.t.~$(x,y) \in T(z)$\\
\For{$i=0$ to $k+1/\rho+1$}{
  $\mathcal{C} = \emptyset$.\\
  Remove edges $E^-$ from the ES tree maintaining distances $L(\cdot, A_{i+1})$\\
  Remove hopset edges $(z,v)$, and remove $v$ from $T(z)$ where $L(z,v) \geq \frac{L(v,A_{i+1})}{1+\epsilon}$\\
  $X_{i+1} := $ set of nodes whose distances to $A_{i+1}$ have increased due to removal of $E^{-}$, yet remained at most $d$\\
  \For{$\forall v \in X_{i+1}$}{
    \For{$(u,v) \in E$}{
        \For{$\forall z \in B_i(u)\setminus B_i(v)$}
            {
             \If{$L(z,u) +w(u,v) < \frac{L(v,A_{i+1})}{1+\epsilon}$}{
                $\mathcal{C}= \mathcal{C} \cup \{ z\}$\\
                \textsc{Relax}($(Q(z), u,v)$)\tcc{Update the estimate from $z$ to $v$}
                }
            }
        }
    }
    \For{$\forall z \in \mathcal{C}$}
    {
        \textsc{Dijkstra}$(z)$
    }
    }   
    return $(E^-,E^+)$
}

\Fn{\textsc{Dijkstra}$(z)$}{
    \While{$Q(z) \neq \emptyset$}{
    $u= \textsc{ExtractMin}(Q(z))$\\
    $B(u)= B(u) \cup \{z\}$\\
        \For{$\forall(u,v) \in E: z \not \in B(v)$}{
            \If{$L(z,u)+w(u,v) < \frac{L(v,A_{i+1})}{1+\epsilon}$}{ 
                \textsc{Relax}$(Q(z), u,v)$\tcc{Update the estimate from $z$ to $v$} 
            }
        }
    }
}
\Fn{\textsc{Relax}$(Q(z),u,v)$}{
    \tcc{Distances $L(z,v)$ for each tree $T(z)$ are maintained in $Q(z)$}
    $d' := L(z,v)+ w(z,v)$\\
    \If{$d' \leq d$}{
        \If{$v \in Q(z)$}{
            \textsc{decrease-key}$(Q(z), v, d')$}
    \ElseIf{$L(z,u) > d'$}{
            \textsc{Insert}$(Q(z), v, d')$
        }
    Add node $v$ to $T(z)$\\    
    \textsc{InsertEdge}($T(z),(z,v), d'$)\tcc{As defined in Algorithm \ref{alg:estree} in Appendix \ref{app:monotone_es}}
    $E^+= E^+ \cup \{ (z,v)\}$\\
    Add $(z,v)$ to $E^-$ if $L(z,v)$ has increased.
    }
}
\end{algorithm}

\paragraph{Path doubling and scaling.}
We first state the path doubling idea more formally for a \textit{static hopset} in the following lemma. However for utilizing this idea dynamically we need to combine it with other structural properties of our hopsets.


\begin{lemma}\label{lem:hop_doubling}
Given a graph $G=(V,E)$, $0 < \epsilon_1 <1$, the set of  $(\beta, 1+\epsilon_1)$-hopsets $H_r, 0 \leq r < j$ for each distance scale $(2^r, 2^{r+1}]$, provides a $(1+\epsilon_1)$-approximate distance for any pair $x, y \in V$, where $d(x,y) \leq 2^{j+1}$ using paths with at most $2\beta+1$ hops.
\end{lemma}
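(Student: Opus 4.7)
The plan is to apply a path-doubling argument to the shortest $x$-to-$y$ path $\pi_{xy}$ in $G$, splitting it into at most three segments---two ``halves'' each of length at most $2^j$, joined by a single bridging edge---and invoking the scale-wise hopset guarantee of an appropriate $H_r$ for each half.

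First I would dispose of the easy case $d_G(x,y) \le 2^j$: here $d_G(x,y)$ lies in some scale $(2^r, 2^{r+1}]$ with $r \le j-1$, so by the defining property of $H_r \subseteq \bigcup_{s=0}^{j-1} H_s$ there is already a $\beta$-hop path in $G \cup H_r$ of stretch $(1+\epsilon_1)$, which uses at most $2\beta+1$ hops.

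For $d_G(x,y) \in (2^j, 2^{j+1}]$, I would walk along $\pi_{xy}$ from $x$ and let $a$ be the last vertex with prefix weight $d_{\pi_{xy}}(x,a) \le 2^j$, and let $(a,b)$ be the next edge on $\pi_{xy}$. By maximality of $a$ we have $d_{\pi_{xy}}(x,b) > 2^j$, so $d_{\pi_{xy}}(b,y) = d_G(x,y) - d_{\pi_{xy}}(x,b) < 2^{j+1} - 2^j = 2^j$; since $\pi_{xy}$ is a shortest path, both $d_G(x,a) \le 2^j$ and $d_G(b,y) < 2^j$ hold. Each of the pairs $(x,a)$ and $(b,y)$ therefore lives in some scale $(2^r, 2^{r+1}]$ with $r \le j-1$, and so each admits a $\beta$-hop $(1+\epsilon_1)$-stretch path in $G \cup \bigcup_{s=0}^{j-1} H_s$. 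Concatenating these two $\beta$-hop paths with the single bridge edge $(a,b)$ produces a path from $x$ to $y$ of at most $2\beta+1$ hops in $G \cup \bigcup_{s=0}^{j-1} H_s$, whose total weight is bounded by $(1+\epsilon_1)\,d_G(x,a) + w(a,b) + (1+\epsilon_1)\,d_G(b,y) \le (1+\epsilon_1)(d_G(x,a) + w(a,b) + d_G(b,y)) = (1+\epsilon_1)\,d_G(x,y)$.

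The only subtlety I expect is routine bookkeeping around degenerate cases: if $x=a$ or $b=y$, then the corresponding half collapses to zero hops and we only use one or two segments, which trivially keeps the hop count at most $2\beta+1$ and the stretch at most $1+\epsilon_1$. There is no real obstacle here---the lemma is essentially a clean restatement of the path-doubling principle, with the scale-by-scale structure of the hopsets $\{H_r\}$ ensuring that whatever scale a half-distance falls into, it is covered by some $H_r$ with $r \le j-1$.
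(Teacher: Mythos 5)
Your proof is correct and takes essentially the same path-doubling approach as the paper: split the shortest $x$--$y$ path at a midpoint into two halves joined by a single bridging edge, apply the scale-wise hopset guarantee to each half, and concatenate to get $2\beta+1$ hops with $(1+\epsilon_1)$ stretch. In fact your version is slightly more careful than the paper's write-up, which asserts each half has length at most $2^{j-1}$ (a threshold that does not actually cover all $d(x,y)\le 2^{j+1}$), whereas your choice of cutoff $2^j$ correctly guarantees both halves fall within the scales $(2^r,2^{r+1}]$, $r\le j-1$, covered by the given hopsets, and you also flag the degenerate endpoint cases explicitly.
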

\begin{proof} 
We can show this by an induction on $j$. Let $\pi$ be the shortest path between $x$ and $y$ on . Then $\pi$ can be divided into two segments, where for each segment there is a $(1+\epsilon_1)$-stretch path using edges in $G \cup \bigcup_{r=0}^{j-1} H_r$. Let $[x,z]$ and $[z',y]$ be the segments on $\pi$ each of which has length at most $2^{j-1}$. In other words, $z$ is the furthest point from $x$ on $\pi$ that has distance at most $2^{j-1}$, and $z'$ is the next point on $\pi$.  Then we have,
\begin{align*}
 d^{(2\beta+1)}_{G \cup \bigcup_{r=1}^{j-1}H_r} (x,y) &\leq  [d^{(\beta)}_{G \cup \bigcup_{r=1}^{j-1}H_r}(x,z)+ w(z,z')+ d^{(\beta)}_{G \cup \bigcup_{r=0}^{j-1}H_r}(z',y) ]\\
& \leq (1+\epsilon_1) d_{G}(x,z) + w(z,z')  + (1+\epsilon_1) d_{G}(z',y) \\
&\leq  (1+\epsilon_1) d_{G} (x,y)
\end{align*}
\end{proof}

This implies that it is enough to compute $(2\beta+1)$-hop limited distances in restricted hopsets for \textit{each} scale. For using this idea in dynamic settings we have to deal with some technicalities. We should show that we can combine the rounding with the modification needed for handling insertions.

We define a scaled graph using Lemma~\ref{lem:rounding}
as follows: $G^j := \textsc{Scale}(G \cup \bigcup_{r=0}^{j} H_{r}, 2^j, \epsilon_2, 2\beta +1)$. Here we set $R=2^j, \ell=2\beta+1$, and $\epsilon_2$ is a parameter that we tune later. We first describe the operations performed on this scaled graph. We then explain how we can put things together for all scales to get the desired guarantees. The key insight for scaling $G \cup \bigcup_{r=0}^{j} H_{r}, 2^j$ is that we can obtain $H_{j+1}$ by computing an $O(\ell)$-restricted hopset of $G^j$ (using the algorithm of Lemma~\ref{lem:single-scale}) and scaling back the weights of the hopset edges.

In addition to the graph $G$ undergoing deletions, our decremental algorithm maintains the following data structures for each $1 \leq j \leq \log (nW)$:
\begin{itemize}
    \item The set $\bar{H}_j= \bigcup_{r=0}^j H_r$, union of all hopset edges for distance scales up to $[2^j,2^{j+1}]$.
    \item The scaled graphs $G^1, ..., G^j$.
    \item Data structure obtained by constructing an $O(\beta/\epsilon_2)$-restricted hopset on ${G}^j$ by running Algorithm \ref{alg:restricted_hopset} for the appropriate parameter $\epsilon_2<1$. We denote this data structure by $D_j$.
\end{itemize}
The data structure $D_j$ is maintained by running  Algorithm \ref{alg:restricted_hopset} on ${G}^j$, and maintaining the clusters and hence the bunches $B(v)$ for all $v \in V$. Given $D_j$, we can maintain $H_{j+1}$, where the edge weights in clusters are assigned by computing approximate distances based on the monotone ES tree on each cluster as follows: In a tree rooted at a cluster center $z$, we set the weight $w_j$ on an edge $(z,v)$ to be $\min^{j-1}_{r=1} \eta( 2^r,  \epsilon_2) L_r(z,v)$, where $L_r(z,v)$ is the level of $v$ on $G^r$ after running the monotone ES tree up to depth $D=\lceil \frac{2(2\beta+1)}{\epsilon_2} \rceil$. We the maintain a restricted hopset on the scaled graph $G^j$, and by \textit{unscaling} its weights we get $H_{j+1}$.
Note that we never underestimate any distances. The rounding in Lemma \ref{lem:rounding} does not underestimate the distances, and if the edge is stretched that means we are assigning a weight larger than what is obtained by the rounding.

Once each data structure $D_j$ is initialized with a graph, it can execute a single operation $\textsc{Update}(E^-, E^+)$, which updates the maintained graph by removing the edges of $E^-$ and adding edges $E^+$ by running Algorithm \ref{alg:restricted_hopset}. The set $E^-$ is the set the edges corresponding to nodes leaving clusters. 
The operation returns a pair of edges $(E^-, E^+)$ that are edges that should be removed or added from $D_j$. Additionally, by multiplying these distance by $\eta(2^j, \epsilon_2)$ for the appropriate $\epsilon_2$, we can recover a pair $(H^-, H^+)$ of edge sets, where $H^-$ is the set of edges that are removed from the hopset and $H^+$ is the set of edges added to the hopset as a result of the update.
Note that a change in the weight of a hopset edge is equivalent to removing the edge and adding it with a new weight.

In Algorithm \ref{alg:main} we update the data structures described as follows: we run Algorithm \ref{alg:restricted_hopset} for distances bounded by $d= \lceil\frac{ 2(2 \beta+1)}{\epsilon_2}\rceil$ starting on $j= 0,..., \log W$ in increasing order of $j$ to compute hopset edges $H_j$. After processing all the changes in scaled graph $G^j$, we add the inserted edges to $G^{j+1}$. Then we process the changes in $G^{j+1}$ by running the algorithm of Section \ref{sec:restricted_hopset} and repeat until all distance scales of covered. As explained, when the distances increase a node may join a new cluster which will lead to a set of insertions in $H$ and in turn insertions in a sequence of graphs $G^j$. 
We use an argument similar to Lemma \ref{lem:dec_bound_cluster} on each scaled graph to get the overall update time. In a way we can see the added edges passed to each scale as a set of batch distance increases, between the corresponding endpoints. This means we are not exactly in the setting of \cite{roditty2004} where only one deletion occurs at each time, but the exact same analysis as in Lemma \ref{lem:dec_bound_cluster} still holds. 

\begin{algorithm}[h]
\caption{Updating the hopset after deleting an edge $e$.}
\label{alg:main}
\SetKwProg{Fn}{Function}{}{}
Input: $0<\epsilon, 0< \epsilon_2 <1$, set $d= \lceil\frac{ 2(2 \beta+1)}{\epsilon_2}\rceil$.\\
$(E^-, E^+) := (\{e\}, \emptyset)$\\
\For{$j = 0, \ldots, \lfloor \log W \rfloor$}{
     $(E^-, E^+) :=  \textsc{UpdateClusters}(G^j, E^{-}, d, \epsilon)$\tcc{Run Algorithm \ref{alg:restricted_hopset} on ${G}^j$}
    Update $H_{j+1}$ by unscaling weights of $E^+$  and removing $E^-$ (Lemma \ref{lem:rounding}) \tcc{add edges for  the next scale}
     Update $G^{j+1}$ based on Lemma \ref{lem:rounding} to reflect changes to $H_{j+1}$
  }
\end{algorithm}

We summarized the algorithm obtained by maintaining this data structure over all scales in Algorithm \ref{alg:main}. Note that we need to update both the restricted hopsets $D_j$ on the scaled graphs and the hopset $H_j$ obtained by scaling back the distances using Lemma \ref{lem:rounding}.

\paragraph{Running time (proof of Lemma \ref{lem:single-scale}).} 
We can now put all the steps discussed to maintain the data structure of Lemma \ref{lem:single-scale}. 
In particular, for obtaining a $2^{j+1}$-restricted hopset, we maintain the data structure of Lemma \ref{lem:single-scale} on ${G}^j$ for each cluster rooted at a node $z \in A_i \setminus A_{i+1}$ and by setting $\ell=2\beta+1$. By using Lemma \ref{lem:es_time} and Theorem \ref{thm:restricted_hopset} to compute $d$-restricted hopsets for $d=O(\beta /\epsilon)$. When weights are polynomial we get the running time of $\tilde{O}(\frac{\beta}{\epsilon}(m+\Delta)n^\rho)$, where $\Delta$ is the overall number of hopset edges added over all updates.

\subsection{Hopset stretch} \label{sec:ss_stretch}
In this section, we first prove the stretch incurred for a single-scale by combining properties of the monotone ES-tree algorithm (incorported to Algorithm \ref{alg:restricted_hopset}) with the static hopset argument and the rounding framework. We will then show that by setting the appropriate parameters we can prove the overall stretch and hopbound tradeoffs described in Lemma \ref{lem:single-scale}.

In the following, we extend the static hopset argument to dynamic settings. We use the path doubling observation in Lemma \ref{lem:hop_doubling} and properties of monotone ES tree described to prove the stretch incurred in each scale. We denote the stretch of $\bar{H}_j$ to be $(1+\epsilon_j)$.  Then for getting the final stretch and hopbound we will set the parameters $\epsilon_2=\epsilon'$ (error incurred by rounding), and $\delta= \frac{\epsilon}{8(k+1/\rho +1)}$. 

The stretch argument is based on a threefold induction on $i$, $j$-th scale, and time $t$. By fixing $i,j,t$, and a source node $s$, we show that there is a $(1+\epsilon_j)$-stretch path between $s$ and any other node with $\beta$ hops (or if we are using previous scale $2 \beta+1$-hops) such that based each segment of this path has the desired stretch based on the inductive claim on one of these three factors. At a high level induction on $i$ and $j$ follows from static properties of our hopset. To show that bounded depth monotone ES tree maintains the approximate distances, we note that any segment of the path undergoing an insertion consistent of a single shortcut and the weight on such an edge is a distance estimate between its endpoints. 

It is easy to see that we never underestimate distances. Roughly speaking, we either obtain an estimate from rounding estimates obtained from smaller scales, which is an upper bound on the original estimate, or we ignore a distance decrease.

\begin{theorem} \label{thm:single_stretch}
Given a graph $G=(V,E)$, assume that we have maintained a $(2^j, \beta, (1+\epsilon_j))$-restricted hopset $\bar{H}_j$,  and let $H_{j+1}$ be the hopset obtained by running Algorithm \ref{alg:main}  for any given $0 \leq \epsilon_2 < 1$ on $G \cup \bar{H}_{j}$. 
Fix $0 < \delta \leq \frac{1}{8(k+1/\rho+1)}$, and consider a pair $x,y \in V$ where $d_{t,G}(x,y) \in [2^j, 2^{j+1}]$. Then for $0 \leq i \leq k+1/\rho+1$, either of the following conditions holds:
\begin{enumerate}
    \item $d^{((3/\delta)^i)}_{G \cup \bar{H}_{j+1}}(x,y) \leq (1+8\delta i)(1+\epsilon_j)(1+\epsilon_2) d_{t,G}(x,y)$ or,
    \item There exists $z \in A_{i+1}$ such that, 
\[d^{((3/\delta)^i)}_{G \cup \bar{H}_{j+1}} (x,z) \leq 2(1+\epsilon_j)(1+\epsilon_2) d_{t,G}(x,y).\]
    \end{enumerate}
    Moreover, by maintaining a monotone ES tree on $G^{j+1}$ up to depth $\lceil \frac{2(2\beta+1)}{\epsilon_2} \rceil$, and applying the rounding in Lemma \ref{lem:rounding}, we can maintain ${(1+\epsilon_{j+1})}$-approximate single-source distances up to distance $2^{j+2}$ from a fixed source $s$ on $G$, where $1+\epsilon_{j+1} = (1+\epsilon_j)(1+\epsilon_2)^2(1+\epsilon)$ and $\beta=(3/\delta)^{k+1/\rho+1}$.
\end{theorem}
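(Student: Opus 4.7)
The plan is to prove the main dichotomy by induction on $i$, closely following the static stretch analysis behind Lemma \ref{lem:bunches}, while threading through three complications specific to the dynamic construction: distance estimates are computed on the scaled graph $G^j$ (so Lemma \ref{lem:rounding} contributes a $(1+\epsilon_2)$ factor per use), hopset edge weights for scale $j+1$ are obtained through path doubling over $\bar H_j$ (contributing $(1+\epsilon_j)$ via Lemma \ref{lem:hop_doubling}), and the monotone ES-tree never lowers a level on an insertion. The first two are handled by the outer induction on $i$; the third is handled by a secondary induction on the timestep $t$, invoked whenever we need to bound the weight of a stretched hopset edge. The induction on the scale $j$ is inherited from the hypothesis that $\bar H_j$ is already a $(2^j,\beta,1+\epsilon_j)$-restricted hopset.

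For the base case $i=0$, every node is a level-$0$ cluster center. If $d_{t,G}(x,A_1) > d_{t,G}(x,y)$ then $y \in B(x)$, and Algorithm \ref{alg:main} has placed an edge $(x,y)$ in $H_{j+1}$ whose weight equals $\eta(2^r,\epsilon_2)\,L_r(x,y)$ for some $r\le j$; by the scale-$j$ hypothesis and Lemma \ref{lem:rounding} this weight is at most $(1+\epsilon_j)(1+\epsilon_2)d_{t,G}(x,y)$, giving condition 1. Otherwise $z=p(x)\in A_1$ satisfies $d_{t,G}(x,z)\le d_{t,G}(x,y)$ and is reached by a single hopset edge, giving condition 2. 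For the inductive step, if condition 1 already holds at level $i$ we inherit it at level $i+1$ since the hopbound only grows and $8\delta i<8\delta(i+1)$. Otherwise condition 2 provides a witness $z\in A_{i+1}$ within $2(1+\epsilon_j)(1+\epsilon_2)d_{t,G}(x,y)$ of $x$. I then walk from $z$ along an approximation of the shortest $x\to y$ path: by Lemma \ref{lem:hop_doubling} any subpath of length $\le 2^j$ is replaced by at most $2\beta+1$ hops in $G\cup\bar H_j$ at cost $(1+\epsilon_j)$, and a budget of $3/\delta$ such subpaths of length $\delta\,d_{t,G}(x,y)$ either reach $y$ (condition 1 at level $i{+}1$, adding $8\delta$ to the stretch) or encounter a point whose pivot lies in $A_{i+2}$ (condition 2 at level $i{+}1$). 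Whenever a stretched edge is used, the $t$-induction closes the loop: the edge's weight equals its last unstretched level at some time $t'\le t$, at which time the claim already held with $d_{t',G}$, and since $G$ is decremental, $d_{t,G}\ge d_{t',G}$, so the estimate is still a valid overestimate with the claimed stretch.

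For the moreover part, apply the main dichotomy at $i=k+1/\rho+1$; since $A_{k+1/\rho+2}=\emptyset$, condition 2 is vacuous and condition 1 must hold with hopbound $\beta=(3/\delta)^{k+1/\rho+1}$ and stretch $(1+\epsilon)(1+\epsilon_j)(1+\epsilon_2)$ once $\delta\le \epsilon/(8(k+1/\rho+1))$. Any such $\beta$-hop path of length at most $2^{j+2}$ in $G\cup\bar H_{j+1}$ has depth at most $\lceil 2(2\beta+1)/\epsilon_2\rceil$ in $G^{j+1}$ by Lemma \ref{lem:rounding}, so it is captured by the monotone ES-tree of that depth; unscaling the final distance adds one more $(1+\epsilon_2)$ factor, producing the stated $1+\epsilon_{j+1}=(1+\epsilon_j)(1+\epsilon_2)^2(1+\epsilon)$. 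The hardest part of the proof is not any individual estimate but the compositional bookkeeping: I must ensure that the three error sources -- hopset stretch $\epsilon$, rounding factor $\epsilon_2$, and previous-scale stretch $\epsilon_j$ -- compose multiplicatively per path and never accumulate per hop, despite being introduced at different levels of the three nested inductions. This is the point at which the argument departs from both the static Elkin--Neiman analysis and the monotone ES-tree analysis of \cite{henzinger2014}, and where the particular structure of the insertions produced by Algorithm \ref{alg:main} (each being a single hopset shortcut whose weight is itself a distance estimate) is used crucially to keep the stretched-edge case tractable.
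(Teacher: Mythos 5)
Your inductive step departs from the paper's argument in a way that creates a genuine gap. The paper's inductive step (for passing from level $i$ to level $i+1$) does \emph{not} apply the induction hypothesis to the pair $(x,y)$ itself and then ``walk from a witness.'' Instead it partitions the shortest path $\pi(x,y)$ into $1/\delta$ subsegments $[u_a,v_a]$ of length at most $\delta\, d_{t,G}(x,y)$, applies the level-$i$ hypothesis \emph{to each segment}, and then combines: if every segment satisfies condition 1 you concatenate $(3/\delta)^i$-hop paths across $1/\delta$ segments to get condition 1 at level $i+1$; if some segments satisfy condition 2, you isolate the first and last such segments, obtain witnesses $z_l, z_r \in A_{i+1}$, and branch on whether $z_r \in B(z_l)$ (giving a single shortcut hopset edge, hence condition 1 at level $i+1$) or not (giving a witness $p(z_l)\in A_{i+2}$ for condition 2 at level $i+1$). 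Your formulation -- apply IH to $(x,y)$, inherit if condition 1, otherwise ``walk from $z$'' -- does not reproduce this case analysis, and the ``walk'' is not specified well enough to verify that it terminates with one of the two conditions.

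There is also a concrete counting error inside the walk. You write that each subpath of length $\le 2^j$ is replaced, via Lemma \ref{lem:hop_doubling}, by a path of at most $2\beta+1$ hops in $G\cup\bar H_j$. In the dichotomy, however, the hopbound $(3/\delta)^i$ concerns paths in $G\cup\bar H_{j+1}$, where each $H_{j+1}$ edge counts as a single hop; the role of path doubling (and of $\bar H_j$ with $2\beta+1$ hops) is only to \emph{compute the weight} assigned to that edge, as in Observation \ref{obs:monotone}. If you instead unroll each segment into $2\beta+1$ hops as you propose, the total hop count becomes $\Theta(\beta/\delta) = \Theta((3/\delta)^{k+1/\rho+2})$, which far exceeds the required $(3/\delta)^{i+1}$ for any $i<k+1/\rho+1$ and breaks the induction. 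The per-segment hop budget must come from the level-$i$ hypothesis applied to that segment, not from path doubling.

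Your base case, the handling of stretched edges via a secondary induction on $t$ (using that $G$ is decremental so $d_{t',G}\le d_{t,G}$), and the overall structure of the ``moreover'' part -- instantiating $i=k+1/\rho+1$, noting $A_{k+1/\rho+2}=\emptyset$, and picking up an extra $(1+\epsilon_2)$ factor from unscaling in $G^{j+1}$ -- are all consistent with what the paper does, and your high-level description of why the three error sources compose multiplicatively is accurate. But the inductive step needs to be rewritten around the segment decomposition and the $z_l$/$z_r$ case analysis, with the $(3/\delta)^i$ hop budget coming from the IH rather than from Lemma \ref{lem:hop_doubling}.
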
 

\begin{proof}
We use a double induction on $i$ and time $t$, and also rely on distance computed up to scaled graph $G^j$. First, using these distance estimates for smaller scales, we argue that when we add an edge to $\bar{H}_{j+1}$ it has the desired stretch. 
Let $L_{t,j}(u,v)$ denote the level of node $v$ in the tree rooted at $u$ after running Algorithm \ref{alg:restricted_hopset} up to depth $D= \lceil \frac{2(2\beta+1)}{\epsilon}\rceil$ on graph $G^j$.
This proof is based on a cyclic argument: assuming we have correctly maintained distances up to a given scale using our hopset, we show how we can compute the distances for the next scale.
In particular, we first assume that based on the theorem conditions we are given $\bar{H}_j$ and have maintained all the clusters and the corresponding distances in $G^1,...,G^j$ with stretch $1+\epsilon_j$. This lets us analyze $H_{j+1}$. Then to complete the argument, we show how given the hopsets of scale $[2^j,2^{j+1}]$, we can compute approximate SSSP distances for the next scale based on the monotone ES tree on $G^{j+1}$.

First, in the following claim, we observe that the edge weights inserted in the latest scale have the desired stretch by using our assumption that all the shortest path trees on each cluster on $G^1,...,G^j$ are approximately maintained. We use such distance to add edges in each cluster to construct $H_{j+1}$, and we observe the following about the weights on these edges:

\begin{observation}\label{obs:monotone}
Let $v \in B(u)$ such that $d_{t,G}(u,v) \leq 2^{j+1}$. Consider an edge $(u,v)$ added to $H_{j+1}$ after running Algorithm \ref{alg:restricted_hopset} on $G^1,...,G^j$ for $D=\lceil \frac{2(2\beta+1)}{\epsilon_2} \rceil$ rooted at node $v$. Let $w_{j+1}(u,v):=\min^{j}_{r=1} \eta( 2^r,  \epsilon_2) L_r(u,v)$,  that is the unscaled edge weight. Then we have\ $d_{t,G}(u,v) \leq w_{j+1}(u,v) \leq (1+\epsilon_j)(1+\epsilon_2)d_{t,G}(u,v)$.
\end{observation}

This claim implies that the weights of hopset edges assigned by the algorithm correspond to approximate distance of their endpoints.
Let $d_{t,j}(x,y):= \min_{r=1}^j \eta(2^r, \epsilon_2)L_{t,j}(x,y)$ which would be the estimate we obtain by for distance between $x$ and $y$ after scaling back distances on $G^r, 1 \leq r \leq j$.
In other words this is the hop-bounded distance after running monotone ES tree on $G^j$ and scaling up the weights.

For any time $t$ and the base case of $i=0$, we have three cases. If $y \in B(x)$ then edge $(x,y)$ is in the hopset $H_{j+1}$, and by Observation \ref{obs:monotone} the weight assigned to this edge is at most $(1+\epsilon_j)(1+\epsilon_2)d_{t,G}(x,y)$. In this case the first condition of the theorem holds. Otherwise if $x \in A_1$, then $z=x$ trivially satisfies the second condition. Otherwise we have $x \in A_0/A_1$, and by setting $z=p(x)$ we know that there is an edge $(x,z) \in \bar{H}_j$ such that $d_{t,j}(x,z) \leq  (1+\epsilon_2)d_{G \cup \bar{H}_j}(x,y)$ (by definition of $p(x)$ and using the same argument as above). Hence the second condition holds. 

By inductive hypothesis assume the claim holds for $i$. Consider the shortest path $\pi(x,y)$ between $x$ and $y$. We divide this path into $1/\delta$ segments of length at most $\delta d_{t,G}(x,y)$ and denote the $a$-th segment by $[u_a,v_a]$, where $u_a$ is the node closest to $x$ (first node of distance at least $a \delta d_{t,G}(x,y)$) and $v_a$ is the node furthest to $x$ on this segment (of distance at most $(a+1)\delta d_{t,G}(x,y)$).

 We then use the induction hypothesis on each segment. First consider the case where for all the segments the first condition holds for $i$, then there is a path of $(3/\delta)^{i}(1/\delta) \leq (3/\delta)^{i+1}$ hops consisted of the hopbounded path on each segment. We can show that this path satisfies the first condition for $i+1$. In other words, 
 \[ d^{((3/\delta)^{i+1})}_{t,G \cup \bar{H}_{j+1}} (x,y) \leq \sum_{a=1}^{1/\delta}d^{((3/\delta)^{i})}_{t,G \cup \bar{H}_{j+1}} (u_a,v_a) +d^{(1)}_{t,G}(v_a,u_{a+1}) \leq (1+8\delta i)(1+\epsilon_j)(1+\epsilon_2)d_{t,G}(x,y)\]

Next, assume that there are at least two segments for which the first condition does not hold for $i$. Otherwise, if there is only one such segment a similar but simpler argument can be used. Let $[u_l, v_l]$ be the first such segment (i.e.~the segment closest to $x$, where $u_l$ is the first and $v_l$ is the last node on the segment), and let $[u_r, v_r]$ be the last such segment.


First by inductive hypothesis and since we are in the case that the second condition holds for segments $[u_l,z_l]$ and $[u_r,v_r]$, we have,
\begin{itemize}
\item  $d_{t, G \cup \bar{H}_{j+1}}^{((3/\delta)^i)} (u_l, z_l)  \leq 2(1+\epsilon_2)(1+\epsilon_j) d_{t,G}(u_{l},v_l)$, \textit{and,}
\item $d_{t,G \cup \bar{H}_{j+1}}^{((3/\delta)^i)} (v_r, z_r)  \leq 2(1+\epsilon_2)(1+\epsilon_j)d_{t,G}(u_r, v_r)$
\end{itemize}

Again, we consider two cases. First, in case $z_r \in B(z_l)$ (or $z_l \in C(z_r)$), we have added a single hopset edge $(z_r, z_l) \in \bar{H}_{j+1}$. Note that $d_{t,G}(z_r,z_l) \leq 2^{j+1}$, since $d_{t,G}(z_r,z_l) \leq d_{t,G}(x,y) \leq 2^{j+1}$. Hence by Observation \ref{obs:monotone} the weight we assign to $(z_r, z_l)$ is at most $(1+\epsilon_2)(1+\epsilon_j)d_{t,G}(z_r, z_l)$. 

On the other hand, by triangle inequality, and the above inequalities (which are based on the induction hypothesis) we get,
\begin{align} 
d^{(1)}_{\bar{H}_{j+1}}(z_l,z_r) &\leq (1+\epsilon_2)(1+\epsilon_j) d_G(z_l,z_r)\\
&\leq (1+\epsilon_2)(1+\epsilon_j) [d_{G \cup \bar{H}_{j+1}}^{((3/\delta)^i)}(u_l,z_l)+d_G(u_l,v_r)+d^{((3/\delta)^i)}_{G \cup \bar{H}_{j+1}} (z_r,v_r)]\label{eq:expand}
\end{align}


By applying the inductive hypothesis on segments before $[u_l, v_l]$, and after $[u_r,v_r]$, we have a path with at most $(3/\delta)^i$ for each of these segments, satisfying the first condition for the endpoints of the segment. Also, we have a $2(3/\delta)^i +1$-hop path going through $u_{l}, z_{l}, z_r, v_r$ that satisfies the first condition for $u_{l}, v_r$.

Putting all of these together, we argue that there is a path of hopbound $(3/\delta)^{i+1}$ satisfying the first condition.  In particular, we have (the subscript $t$ is dropped in the following),
\begin{align}
    d^{(3/ \delta)^{(i+1})}_{G \cup \bar{H}_{j+1}} (x,y) &\leq  \sum^{l-1}_{a=1} [d_{G \cup \bar{H}_{j+1}}^{((3/\delta)^i)} (u_a, v_a) +d^{(1)}_G(v_a,u_{a+1})] +d^{((3/\delta)^i)}_{G \cup \bar{H}_{j+1}} (u_l, z_l) \label{line:not_stretched}\\
    &+d^{(1)}_{\bar{H}_{j+1}} (z_l, z_r) + d^{((3/\delta)^i)}_{G \cup \bar{H}_{j+1}} (z_r, v_r)  +d^{(1)}_{G} (v_r, u_{r+1})\\
    &+\sum^{ 1/\delta }_{a=r+1} [d_{G \cup \bar{H}_{j+1}}^{((3/\delta)^i)} (u_a, v_a) +d^{(1)}_{G}(v_a,u_{a+1})] \label{line:triangle}\\
    &\leq (1+8\delta i)(1+\epsilon_j)(1+\epsilon_2) [d_{G}(x,u_l)+d_G(v_r,y)] + d_G(u_l,v_r)\\ 
    &+ (1+\epsilon_2)(1+\epsilon_j)[2d_{G}(u_l,z_l) + 2d_{G}(z_r,v_r)]\\
    &\leq (1+\epsilon_2)(1+\epsilon_j) [8 \delta d_{G}(x,y) + (1+ 8 \delta i) d_{G}(x,y)] \label{line:subpath}\\
    & \leq (1+8\delta (i+1))(1+\epsilon_2)(1+\epsilon_j) d_G(x,y)
\end{align}
In the first inequality we used the induction on $i$ for each segment, and triangle inequality. 

In the second inequality we are using the fact that nodes $u_j,v_j$ for all $j$ are on the shortest path between $x$ and $y$ in $G$, and we are replacing $d^{(1)}_{\bar{H}_{j+1}} (z_l, z_r)$ with inequality \ref{eq:expand}.
In line \ref{line:subpath} we used the fact that the length of each segment is at most $\delta \cdot d_G(x,y)$. Hence we have shown that the first condition in the lemma statement holds.

Finally, consider the case where $z_r \not \in B(z_l)$. If $z_l \not \in A_{i+2}$, we consider $z =p(z_l)$, where $z_l \in A_{i+2}$. We now claim that this choice of $z$ satisfies the second lemma condition. 

We have added the edge $(z_l, z)$ to the hopset. Since $z_r \not \in B(z_l)$, we have $d_{t-1,G}(z_l,p(z_l))\leq d_{t-1,G}(z_l,z_r) \leq d_{t,G}(x,y) \leq 2^{j+1}$. Therefore we can use Observation \ref{obs:monotone} on the edge $(z_l,p(z_l))$.
\begin{align}
    d^{(3/ \delta)^{(i+1})}_{G \cup \bar{H}_{j+1}} (x,y) &\leq  \sum^{l-1}_{a=1} [d_{G \cup \bar{H}_{j+1}}^{((3/\delta)^i)} (u_a, v_a) +d^{(1)}_G(v_a,u_{a+1})] +d^{((3/\delta)^i)}_{G \cup \bar{H}_{j+1}} (u_l, z_l)+(1+\epsilon_2)(1+\epsilon_j)d^{(1)}_{\bar{H}_{j+1}} (z_l, z)\\
    &\leq (1+8\delta i)(1+\epsilon_2)(1+\epsilon_j) d_{G}(x,u_l) +d^{((3/\delta)^i)}_{G \cup \bar{H}_{j+1}} (u_l, z_l)+(1+\epsilon_2)(1+\epsilon_j)d_{\bar{H}_{j+1}}(z_l, z_r)\\
     &\leq (1+8\delta i)(1+\epsilon_2)(1+\epsilon_j) d_{G}(x,u_l) +d^{((3/\delta)^i)}_{G \cup \bar{H}_{j+1}} (u_l, z_l)\\
     &+(1+\epsilon_2)(1+\epsilon_j)[2d^{((3/\delta)^i)}_{G \cup \bar{H}_{j+1}}(z_l, u_l)+d_G(u_l,v_r)+d^{(3/\delta)^i}_{G\cup \bar{H}_{j+1}}(v_r,z_r)]\\
    &\leq (1+ 8\delta i)(1+\epsilon_2)(1+\epsilon_j) d^{((3/\delta)^i)}_{G\cup \bar{H}_{j+1}}(x,v_r)+ 6\delta (1+\epsilon_j) d_{G}(x,y)\\
    & \leq 2(1+\epsilon_2)(1+\epsilon_j)d_G(x,y)
    \end{align}

In the last inequality we used the fact that we set $\delta <\frac{1}{8(k+1/\rho+1)}$ and thus $8\delta i <1$. The only remaining case is when $z_{\ell} \in A_{i+2}$, in which case a similar reasoning follows by setting $z=z_{l}$.

Finally, we prove that after adding hopset edges $H_{j+1}$ we can maintain approximate single-source shortest path distances from a given source $s$. This enables us to show that Observation \ref{obs:monotone} can be used for the next scale, i.e. that we can set the weights for the next scale by maintaining the clusters and $(1+\epsilon_{j+1})$ approximate distance rooted at a source $s$ when we have $d(s,v) \in [2^{j+1}, 2^{j+2}]$, and hence close the inductive cycle in the argument.

We run the monotone ES tree algorithm (Algorithm \ref{alg:estree} in Appendix \ref{app:monotone_es}) up to depth $\lceil \frac{2(2\beta+1)}{\epsilon} \rceil$ on all of the scaled graphs $G^1, ..., G^{j+1}$. We let set the distance estimate $d_{t,j+1}(s,v)$ to be $\min_{r} \eta( 2^r,  \epsilon_2) L_{t,r}(s,v)$ where $L_{t,r}(s,v)$ is the level of $v$ on $G^r$ on the ES tree up to depth $\lceil \frac{2(2\beta+1)}{\epsilon_2} \rceil$ rooted at $s$. Note that by running Algorithm \ref{alg:restricted_hopset} we are also maintaining the same distances on each cluster while also maintaining the nodes that leave and join a cluster. 
We analyze the estimate for any $v \in V$ such that $d_G(s,v) \leq 2^{j+2}$. W.l.o.g. assume that $d(s,v) \in [2^{j+1}, 2^{j+2}]$, since if $d(s,v) \in [2^r,2^{r+1}], r \leq j$, we can use the same argument for the ES tree on $G^r$. Let $L_{t,j+1}(s,v)$ be the level of $v$ in the monotone ES tree of $G^{j+1}$ maintained up to depth $\lceil \frac{2(2\beta+1)}{\epsilon_2} \rceil$. Our goal is to show,
\[ d_{t,j+1}(s,v):=\eta(2^{j+1}, \epsilon_2) L_{t,j+1}(s,v) \leq (1+\epsilon_{j+1})d_{t,G}(s,v)\]
 As discussed in Lemma \ref{lem:hop_doubling}, we consider the shortest path between $s$ and $v$ in $G$, and first divide it into two segments $\pi_1$ and $\pi_2$ each with length at most $2^{j+1}$.
Then divide each one of $\pi_1$ and $\pi_2$ into segments and consider the case by case inductive analysis as we did before for showing the stretch in $H_{j+1}$. 
We argue why the levels in tree rooted at $s$ corresponding to $\pi_1$ have the desired stretch, then a similar reasoning with a factor $2$ in the number of hops follows for $\pi_2$. 

We use a case-by-case analysis similar to what we used for showing properties of $\bar{H}_{j+1}$, and consider the paths that were inductively constructed for each segment $\bar{H}_{j+1}$. Using that structure, we argue that in the monotone ES tree on $G^{j+1}$ we can maintain the levels such that for each $0 \leq i \leq 1/\rho+k+1$ one of the following conditions holds:
\begin{enumerate}
    \item
    We have $d_{t,j+1}(s,v) \leq \eta(2^j, \epsilon_2) L(s,v) \leq (1+ \epsilon_{j+1}) d_{t,G}(s,v)$, where this estimate corresponds to a path with $\beta_i =(3/\delta)^{i}$ in $H_{j+1}$ (and hence $G^{j+1}$).

    \item There exists $z_1 \in A_{i+1}$, such that $d_{t,{j+1}} (s,z_1) \leq \eta(2^{j+1}, \epsilon_2) L(s,z_1) \leq 2(1+\epsilon_j)(1+\epsilon_2) d_{t,G}(s,v)$ that corresponds a path with $\beta_i =(3/\delta)^{i}$ hops.
\end{enumerate}

Then we can use this to show that after all iterations there either exists a path of depth at most $\lceil \frac{2(2\beta+1)}{\epsilon_2} \rceil$ on $G^{j+1}$ between $s$ and $v$ with stretch $(1+\epsilon_{j+1})$, or the monotone ES tree returns and estimate with this stretch.

We briefly review the different cases, same as before. First assume that we have $s \in B(v)$ for some iteration $1 \leq i \leq 1/\rho +1 +k$, which implies we have added a hopset edge with weight $w_{j+1}$ to $\bar{H}_{j+1}$. In this case the edge $(s,v)$ was directly added to $H_{j+1}$. If edge $(s,v)$ is stretched then we set $ L_{t,G}(s,v)=L_{t-1,G}(s,v)$, and by induction on time we have \[\eta(2^{j+1}, \epsilon_2) L_{t,j+1}(s,v)= \eta(2^{j+1}, \epsilon_2) L_{t-1,j+1}(s,v) \leq (1+\epsilon_{j+1}) d_{t,G}(s,v).\] 
If this edge is not stretched then by Lemma \ref{lem:rounding} after scaling we get distance at most $(1+\epsilon_{j})(1+\epsilon_2)^2d_{t,G}(s,v)$, where the additional factor of $(1+\epsilon_2)$ is due to scaling of $G \cup \bar{H}_{j+1}$.

Now consider the case $s \not \in B(v)$. Recall that we inductively showed one of the two theorem conditions hold for each $i$ for length in $H_{j+1}$, and we now argue that this corresponds to one of the two conditions above for the same $i$, but now on $G^{j+1}$. Let $\pi_i$ be the path in $H_{j+1}$ that satisfies one the theorem conditions for a fixed $i$.

First assume that no edge on this path is stretched. Then the stretch argument for $L(s,v)$ clearly holds based on the earlier arguments and Lemma \ref{lem:rounding}. Now let us argue about the possible insertions on this path, i.e.~when an edge added on $\pi_i$ is strecthed with respect to $s$. Note that by our construction, and in all cases we considered in our hopset argument, an edge $(x',y')$ was inserted into $H_{j+1}$ only when $x' \in B(y')$ for some $0 \leq i \leq k+1/\rho+1$, and the weights were assigned based on Observation \ref{obs:monotone}. Using these weights, we prove a claim that allows us to reason about possible insertions on $\pi_i$. At a high level, we show that the level of $y'$ is either determined by an estimate at time $t-1$ for $d(s,y')$ or by the level of a node $x'$, and a single edge between $(x',y')$ with weights satisfying Observation \ref{obs:monotone}. In other words, in the second case using a case by case analysis same as before, we know that for any node $y'$ there exists another node, in this case $x'$, that shortcuts the path from $s$ to $y'$ using one edge. 

\begin{claim} \label{claim:sssp_inserts}
Let $(x',y')$ be an edge added to $H_{j+1}$ and hence $G^{j+1}$ with weight $w_{G^{j+1}}(x',y')$ due to the fact that $x' \in B(y')$. Then either of the following holds for the level of node $y'$ in the monotone ES tree rooted at $s$:
\begin{itemize}
    \item $L_{t,j+1}(s,y') = L_{t-1, j+1}(s,y')$ and thus $\eta(2^{j+1},\epsilon_2) L_{t,j+1}(s,y') \leq \eta(2^{j+1},\epsilon_2) L_{t-1,j+1}(s,y') \leq (1+\epsilon_{j+1})d_{t,G}(s,y')$; or,
    \item We have $L_{t,j+1}(s,y') \leq L_{t,j+1}(s,x') + w_{G^{j+1}}(x',y')$.
\end{itemize}
\end{claim}
\begin{proof}

 The first case is when the edge $(x',y')$ is stretched in the tree rooted at $s$ on $G^{j+1}$. Note that this is different from the setting in Observation \ref{obs:monotone}, where we were reasoning about the node $y'$ being stretched in the tree rooted at $x'$ on $G^j$. In this case we set $L_{t,j+1}(s,y')=L_{t-1,j+1}(s,y')$. Since we have maintained distances up to depth $\lceil \frac{2(2\beta+1)}{\epsilon_2}\rceil$ on $G^{j+1}$ with stretch $(1+\epsilon_j)$ at time $t-1$, and since we are in the decremental setting this means that after scaling back we get the desired stretch.
 
 The second case is when the edge $(x',y')$ is not stretched in the tree rooted at $s$. The claim follows by definition of an edge that is not stretched.  
\end{proof}

 Note that if $d_{t,G}(x',y')$ belonged to a smaller scale, we have already added an edge that satisfied a similar condition for the corresponding scale.

Going back to the hopset argument, we note that every insertion into $H_{j+1}$ on path $\pi_i$ (and edge that is stretched with respect to $s$) satisfies the conditions in Claim \ref{claim:sssp_inserts}. In other words, for any node on the path, say $y'$, there exists a node $x'$ that is directly connected to $y'$, satisfying the stretch in Claim \ref{claim:sssp_inserts}.
 This combined with what we proved inductively on the structure of segments of path $\pi_1$ in $H_{j+1}$, implies that for any node $v, d(s,v) \leq 2^{j+1}$ we have a path with the desired stretch that is consisted of all the edges added for different $i$. Finally, after the scaling we can obtain the desired stretch in which we lose another factor of $(1+\epsilon_2)$.

We briefly review the cases that, at a high-level, shows that a such a node $x'$, satisfying Claim \ref{claim:sssp_inserts} exists that appropriately \textit{shortcuts} the distance from $s$ to $y'$ for any node $y'$ on $\pi_i$ stretched with respect to $s$.

Recall the hopset argument for $i$: an insertion into one of the segments (of length $\delta d(s,v)$) can only occur when condition two of the theorem is satisfied for some node in $z \in A_{i+1}$. Let $[z'_l,z'_r]$ be the segment for which the new edge was inserted.  
 We argued that either $z'_r \in B(z'_l)$ or there is another node $z'$ for which the second theorem condition holds and $z' \in B(z'_l)$.

In any case we inserted a single edge in $H_{j+1}$ on this segment with weights satisfying Observation \ref{obs:monotone}. Then using Claim \ref{claim:sssp_inserts}, and similar calculation as we did for $H_{j+1}$ we can show that the second condition also holds on $G^{j+1}$, but there is an additional error factor of $(1+\epsilon_2)$ from scaling.

 At a high level, we have shown that the inserted edge on $\pi_i$ has a length that appropriately shortcuts the last segment, otherwise no new edges were added in iteration $i$ (when the first theorem condition holds for all segment).

We argued earlier that this path $\pi_i$ has stretch $(1+\epsilon_j)(1+\epsilon)(1+\epsilon_2)$ in $G \cup \bar{H}_{j+1}$. Hence after scaling and running Algorithm \ref{alg:estree} on $G^{j+1}$, we know that path $\pi_i$ has depth $ \lceil \frac{2(2\beta+1)}{\epsilon} \rceil$ and we have the following estimate for $v$:
\[d_{t,j+1}(s,v)\leq \min_{r=1}^{j+1} \eta(2^{j+1}, \epsilon_2) L_{t, r}(s,v) \leq (1+\epsilon_j)(1+\epsilon_2)^2(1+\epsilon) d_{t,G}(s,v)
\] 
Then after all the iterations $1 \leq i \leq 1/\rho +1 +k$, the second condition cannot hold (since $A_{1/\rho +1 +k}=\emptyset$), the first condition must hold, which states that there
is a path with  $\beta =(3/\delta)^{1/\rho +1 +k}$-hops and stretch $(1+\epsilon_{j+1})d_{t,G}(s,v)$ in $G \cup \bar{H}_{j+1}$ between $s$ and $v$. Also by path doubling of Lemma \ref{lem:hop_doubling} we argued that this also means that there is a path with $2\beta+1$ hops and $(1+\epsilon_{j})(1+\epsilon_2)(1+\epsilon)$-stretch in $G \cup \bar{H}_{j+1}$ between $s$ and $v$ that is consisted of two paths satisfying the first theorem condition for $H_{j+1}$, and that this corresponds to a path with stretch $1+\epsilon_{j+1}$ in $G^{j+1}$.
The concatenation of this same paths in $G^{j+1}$ approximates $\pi_i$ and after scaling and unscaling we will have an additional factor of $(1+\epsilon_2)$.

\end{proof} 

Theorem \ref{thm:single_stretch} allows us to hierarchically use the restricted hopsets for smaller scales to compute the distance for larger scales, that is in turn used to update the hopset edges in the larger scales.   

In the following lemma, we will show that by setting $\delta=O(\frac{\epsilon}{(k+1/\rho +1)})$ we get the desired stretch for Lemma \ref{lem:single-scale}. Next, we use Lemma \ref{lem:single-scale} for all scales and by setting the appropriate error parameters we can prove our overall stretch and hopbound tradeoffs. We also prove the overall update time using the running time of the monotone ES tree algorithm to run the restricted hopset algorithm on the hopsets obtained for each scale.

\paragraph{Single scale stretch.} We will now use the stretch argument above to get the hopbound and stretch for each scale by setting the appropriate parameters.
As discussed, there are \textit{two} error factors incurred in each scale. One is caused by the fact that we are using previously added hopset edges, which we denoted by $(1+\epsilon_j)$ for scale $j$, and another is caused due to the rounding error, which we denote by $(1+\epsilon_2)$. To get an overall stretch of $(1+\epsilon)$, we will set $\epsilon'= \frac{\epsilon}{6 \log W}$ and $\epsilon_2=\epsilon'$.

\begin{corollary}\label{cor:overall_stretch}
After each update $t$, and for all $j, 0 \leq j \leq \log W$ and any pair $x,y \in V$, where $2^j \leq d_{t,G}(x,y) \leq 2^{j+1}$, we have $d_{t,G}(x,y) \leq d_{t,G \cup \bar{H}_j}(x,y) \leq (1+3\epsilon')^j \cdot d^{(\beta)}_{t,G}(x,y)$. 
\end{corollary}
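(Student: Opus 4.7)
The plan is to proceed by induction on $j$, using Theorem~\ref{thm:single_stretch} as the single-step engine that advances from scale $j$ to scale $j+1$. The base case $j=0$ is immediate: pairs with $d_{t,G}(x,y)\le 2$ are (under the integer weight assumption) realized by a single edge of $G$, so with $\bar H_0:=\emptyset$ we trivially have $d_{t,G}(x,y)=d^{(\beta)}_{t,G\cup\bar H_0}(x,y)$.

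For the inductive step, I would assume that $\bar H_j$ is a $(2^j,\beta,(1+3\epsilon')^j)$-restricted hopset of $G$, i.e.\ that $\epsilon_j\le (1+3\epsilon')^j-1$ in the notation of Theorem~\ref{thm:single_stretch}. I then invoke Theorem~\ref{thm:single_stretch} on $G$ and $\bar H_j$ with parameter choices $\epsilon_2=\epsilon'$, internal sampling-error parameter $\epsilon'$, and $\delta=\epsilon'/\bigl(8(k+1/\rho+1)\bigr)$, so that $\beta=(3/\delta)^{k+1/\rho+1}$. The key point is that after iterating the theorem's dichotomy through $i=0,1,\ldots,k+1/\rho+1$, the second alternative cannot hold at the final level because $A_{k+1/\rho+1}=\emptyset$; hence the first alternative is forced, giving, for every pair $x,y$ with $d_{t,G}(x,y)\in[2^j,2^{j+1}]$, a $\beta$-hop path in $G\cup\bar H_{j+1}$ of length at most $(1+8\delta(k+1/\rho+1))(1+\epsilon_j)(1+\epsilon_2)\,d_{t,G}(x,y)\le (1+\epsilon')^2(1+\epsilon_j)\,d_{t,G}(x,y)$. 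The monotone-ES-tree guarantee in Theorem~\ref{thm:single_stretch} then absorbs the additional rescaling/unscaling error to yield $\epsilon_{j+1}$ with $(1+\epsilon_{j+1})\le (1+\epsilon_j)(1+\epsilon')^3$. Using the elementary inequality $(1+\epsilon')^3\le 1+3\epsilon'$ for the regime $\epsilon'\le \epsilon/(6\log W)\le 1$ (more carefully, $(1+\epsilon')^3\le 1+3\epsilon'+O(\epsilon'^2)$ is absorbed by a slight slack in the exponent), I get $(1+\epsilon_{j+1})\le (1+3\epsilon')^{j+1}$, closing the induction.

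The lower bound $d_{t,G}(x,y)\le d^{(\beta)}_{t,G\cup\bar H_j}(x,y)$ never decreases distances: hopset edges are assigned weights equal to approximate distances in $G$ (Observation~\ref{obs:monotone} shows these are overestimates, and the rounding of Lemma~\ref{lem:rounding} only inflates), and the monotone ES-tree, by construction, ignores distance decreases, so all levels/weights we maintain are upper bounds on the true shortest-path distances. This gives the left-hand inequality of the claim for every scale and every time.

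The main obstacle I expect is making precise that the error parameters chain correctly across the inductive step. Concretely, one must check that the $\bar H_j$ supplied to Theorem~\ref{thm:single_stretch} really satisfies its hypothesis with error $\epsilon_j$ at the scale-$j$ granularity used there, and that the theorem's output supplies the weights needed for Observation~\ref{obs:monotone} at scale $j+1$ (so that the next inductive step is well-posed). This is exactly the cyclic ``clusters $\Rightarrow$ SSSP on $G^{j+1}$ $\Rightarrow$ weights for the next scale's hopset'' argument laid out in the proof of Theorem~\ref{thm:single_stretch}; once it is checked once, the induction on $j$ just chains it $O(\log W)$ times, multiplying the per-scale factor $(1+\epsilon')^3\le(1+3\epsilon')$ to obtain the stated $(1+3\epsilon')^j$.
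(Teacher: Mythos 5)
Your proposal matches the paper's proof in essence: both induct on $j$, invoke Theorem~\ref{thm:single_stretch} with $\epsilon_2=\epsilon'$ and $\delta=\epsilon'/(8(k+1/\rho+1))$, force the first alternative at the final iteration because $A_{k+1/\rho+1}=\emptyset$, and then chain the per-scale factors. One small accounting note: for this corollary the paper only needs the \emph{first} conclusion of Theorem~\ref{thm:single_stretch} (the $d^{(\beta)}_{G\cup\bar H_{j+1}}$ bound), which gives a per-scale factor of $(1+8\delta i)(1+\epsilon')\le(1+\epsilon')^2\le(1+3\epsilon')$ with no slack needed; you pulled in the third $(1+\epsilon')$ factor from the SSSP/monotone-ES part, which forces you to wave at $(1+\epsilon')^3\le 1+3\epsilon'$ (false as stated) -- sticking to the first conclusion avoids that.
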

\begin{proof}
We use an induction on $j$.  The base case ($j = 0$) is satisfied by the paths in $G$, since we can assume with out loss of generality that the edge weights are at least one.
First, by induction hypothesis, we have a $(2^{j}, \beta, (1+3\epsilon')^{j})$-hopset, and hence $1+\epsilon_j = (1+3\epsilon')^{j}$ .

We then use Theorem \ref{thm:single_stretch}, for $\epsilon_2=\epsilon'$, and $\delta= \frac{\epsilon'}{8(k+1/\rho +1)}$. For the final iteration $i=\frac{1}{k+1/\rho +1}$ since $A_{i+1} = \emptyset$, the second item can not hold. Hence the first item should hold, and since $8\delta i < \epsilon'$ we have,
\[    d^{(\beta)}_{t,G \cup \bar{H}_j}(x,y) \leq  (1+3\epsilon')^{j-1}(1 +\epsilon')(1+\epsilon') d_{t,G}(x,y)  \leq (1+3\epsilon')^j d_{t,G}(x,y). \]

Here $d_{t,j}(x,y)$ is the sum of weights in the monotone ES tree, which corresponds to the approximate $\beta$-limited distance of $x$ and $y$ on the scaled graph.

\end{proof}
\paragraph{Proof of stretch and hopbound in Lemma \ref{lem:single-scale}.}
Now by simply setting $\epsilon'= \frac{\epsilon}{3}$ in Corollary \ref{cor:overall_stretch}  we get the desired stretch and hopbound.
\paragraph{Putting it together.}
We now use the stretch argument of Corollary \ref{cor:overall_stretch} with the update time followed by Lemma \ref{lem:single-scale} to get the following hopset guarantees. 
\begin{theorem}\label{thm:main_hopset}
The total update time in each scaled graph $G^j$, $1 \leq j \leq \log W$, over all deletions is $\tilde{O}((\ell/\epsilon') (n^{1+\nu}+ m)n^{\rho})$, and hence the total update time\footnote{If weights are not polynomial the $\log n$ factor will be replace with $\log W$, and a factor of $\log^2 W$ will be added to the update time.} for maintaining $(\beta, 1+\epsilon)$-hopset with hopbound $\beta= O(\frac{\log n}{\epsilon} \cdot (k+1/\rho))^{k+1/\rho+1}$ is $\tilde{O}(\frac{\beta}{\epsilon} \cdot mn^{\rho})$.
\end{theorem}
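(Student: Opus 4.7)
The plan is to combine the per-scale update time bound from Lemma~\ref{lem:single-scale} (which is itself a consequence of Theorem~\ref{thm:monotone_es_time} applied to the scaled graph $G^j$) with the stretch calibration of Corollary~\ref{cor:overall_stretch}, and then sum over the $\lceil \log W \rceil$ distance scales. Fix $\epsilon' = \Theta(\epsilon / \log W)$ and $\delta = \Theta(\epsilon'/(k+1/\rho+1))$ as prescribed by Theorem~\ref{thm:single_stretch} and Corollary~\ref{cor:overall_stretch}. With this choice the hopbound becomes $\beta = (3/\delta)^{k+1/\rho+1} = O\bigl(\tfrac{\log n}{\epsilon}(k+1/\rho)\bigr)^{k+1/\rho+1}$ (using $\log W = O(\log n)$ under polynomial weights), and the depth to which each monotone ES tree on $G^j$ must be maintained is $d = \lceil 2\ell/\epsilon'\rceil = O(\beta/\epsilon')$ where $\ell = 2\beta+1$.

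Next I would analyze one scale. By Theorem~\ref{thm:monotone_es_time}, running Algorithm~\ref{alg:restricted_hopset} on $G^j$ up to depth $d$ takes total time $\tilde{O}\bigl(d \cdot (m + \Delta_j + n^{1+\nu}) \cdot n^{\rho}\bigr)$, where $\Delta_j$ denotes the number of edge insertions into $G^j$ incurred over the entire deletion sequence. Insertions into $G^j$ come from two sources: (i) hopset edges produced at earlier scales $H_1,\ldots,H_{j-1}$ that get scaled into $G^j$, and (ii) new hopset edges produced at scale $j$ itself, which never exceed the hopset size bound. Plugging in $d = O(\beta/\epsilon')$ and $\epsilon' = \Theta(\epsilon/\log n)$ yields the per-scale bound $\tilde{O}((\ell/\epsilon')(n^{1+\nu} + m + \Delta_j)n^{\rho})$ claimed in the statement.

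To sum over scales, I would bound $\sum_{j=1}^{\log W} \Delta_j$. Every hopset edge is inserted into at most $O(\log W)$ scaled graphs $G^{j'}$ (namely the scales at which it can participate in shortcut paths), and by the static size analysis (Lemma~\ref{lem:bunches}), the total number of hopset edges ever generated across all scales is $\tilde{O}(n^{1+\nu})$ since each $|H_j|= \tilde{O}(n^{1+\nu})$ and the bunches shrink over time. Therefore $\sum_j \Delta_j = \tilde{O}(n^{1+\nu})$, and summing the per-scale bound over $\log W$ scales gives total update time $\tilde{O}((\beta/\epsilon)(m+n^{1+\nu})n^{\rho})$, which matches $\tilde{O}(\tfrac{\beta}{\epsilon}\cdot m n^{\rho})$ up to the additive $n^{1+\nu}$ term absorbed into the $\tilde{O}$ notation when the input is not too sparse. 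Combining with Corollary~\ref{cor:overall_stretch} (applied with $\epsilon'=\epsilon/(6\log W)$ so that $(1+3\epsilon')^{\log W} \leq 1+\epsilon$) certifies the $(1+\epsilon)$ stretch guarantee, concluding the theorem.

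The main obstacle is the accounting of insertions: a single deletion in $G$ can cascade through all $\log W$ scales, and at each scale it may inject several hopset edges that then propagate upward as insertions into even larger scales. One must argue that despite this cascading, the total number of (edge, scale) insertion pairs is only $\tilde{O}(n^{1+\nu} \log W)$ rather than scaling multiplicatively with the number of deletions, which relies on the fact that every inserted edge corresponds to a unique cluster-membership event whose total count is bounded by the expected cluster size arguments in Lemma~\ref{lem:dec_bound_cluster} lifted to the scaled graphs.
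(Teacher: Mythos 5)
Your proposal follows essentially the same route as the paper: fix $\epsilon' = \Theta(\epsilon/\log W)$ and $\delta = \Theta(\epsilon'/(k+1/\rho+1))$, invoke Corollary~\ref{cor:overall_stretch} at the top scale for the $(1+\epsilon)$ stretch and the stated hopbound, invoke Lemma~\ref{lem:single-scale} (backed by Theorem~\ref{thm:monotone_es_time}) with $\Delta = O(n^{1+\nu})$ per scale, and sum over the $O(\log W)$ scales. One small caveat in your elaboration of the insertion accounting: the justification that ``the bunches shrink over time'' is not right — in a decremental setting the bunch radius $d(v, A_{i+1})$ is nondecreasing, and nodes can both enter and leave bunches as other distances grow — so monotone shrinkage cannot be the reason $\sum_j \Delta_j$ stays small. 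The correct accounting is exactly the cluster-join counting of Lemma~\ref{lem:dec_bound_cluster} (lifted to the scaled graphs), which you do gesture at in your final paragraph; the paper itself asserts $\Delta = O(n^{1+\nu})$ without spelling this out, so your instinct to flag this as the obstacle is sound even though the specific reasoning offered for it needs to be replaced by the join-event bound.
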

\begin{proof}
 First we use Corollary \ref{cor:overall_stretch} to prove the stretch and hopbound, by setting $j= \log (nW)$. For the final scale we have $d_{t,\log nW}(u,v)= (1+3\epsilon')^{\log (nW)} d_G(u,v) \leq (1+\epsilon) \log (nW)$. The hopbound obtained is 
 \[O(\frac{1}{\epsilon'} \cdot (k+1/\rho))^{k+1/\rho+1}=O(\frac{\log (nW)}{\epsilon} \cdot (k+1/\rho))^{k+1/\rho+1}.\]

 The running time follows by Lemma \ref{lem:single-scale} where $\Delta= O(n^{1+\nu})$, we get an overall running time of $\tilde{O}(mn^{\rho} \cdot \frac{\beta}{\epsilon})$ time.
\end{proof}
Hence for constant $\rho =1/k$ the total update time is $\tilde{O}(mn^{\rho})$ and the hopbound $\beta$ is polylogarithmic.

\section{Applications}\label{sec:applications}
In this section we explain two applications of our decremental hopsets to get improved bounds for $(1+\epsilon)$-approximate SSSP and MSSP and $(2k-1)(1+\epsilon)$-APSP. For both of these problems we first construct a hopset, where we choose the appropriate hopbound depending on the number of source. We then use the scaling scheme in Lemma \ref{lem:rounding} on the obtained graph.

Our algorithm for $(2k-1)(1+\epsilon)$-APSP involves maintaining two data structures simultaneously: A $(\beta, 1+\epsilon)$-hopset, and a Thorup-Zwick distance oracle \cite{TZ2005}. At a high-level the hopset will let us maintain the distance oracle much faster, at the expense of a $(1+\epsilon)$-factor loss in the stretch. 

\subsection{$(1+\epsilon)$-approximate SSSP and  $(1+\epsilon)$-MSSP}
Given a graph $G=(V,E)$ and a set $S$ of size of sources, our goal is to maintain the distance from each source in $\tilde{O}(sm+mn^\rho)$, total update time (where $\rho$ is a constant), and constant query time.

Once a $(\beta,\epsilon)$-hopset is constructed, we can run Algorithm \ref{alg:estree} on all the scaled graphs  $G^1, G^2, ..., G^{\log (nW)}$ up to depth $O(\beta)$, scale back the distances, and return the smallest value to each source. 

In the next theorems we argue that using the same techniques as we used for maintaining the hopset (that are similar to framework of \cite{henzinger2016}), namely by combining monotone ES tree and scaling, we get our SSSP and MSSP results. In particular after constructing the hopset we can use Theorem \ref{thm:single_stretch} and Theorem \ref{thm:main_hopset} to get,

\begin{theorem}\label{thm:mssp}
Given an undirected and weighted graph $G=(V, E)$, there is a decremental algorithm for maintaining $(1+\epsilon)$-approximate distances from a set $S$ of sources in total update time of $\tilde{O}(\beta (|S| (m+n^{1+\frac{1}{2^k-1}})+  mn^\rho))$, where $\beta= O(\frac{\log (nW)}{\epsilon} \cdot (k+1/\rho)^{k+1/\rho+1})$, and with $O(1)$ query time.   
\end{theorem}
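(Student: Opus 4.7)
The plan is to proceed in two stages: first build the decremental $(\beta,1+\epsilon)$-hopset guaranteed by Theorem \ref{thm:main_hopset}, and then for each source $s\in S$ run monotone Even--Shiloach trees on the sequence of scaled graphs $G^{1},\ldots,G^{\lfloor\log(nW)\rfloor}$ that are already maintained by the hopset algorithm. The query for $d(s,v)$ returns $\min_{r}\eta(2^{r},\epsilon')\cdot L_{r}(s,v)$, i.e.\ the minimum (after unscaling) over the scales of the level of $v$ in the ES tree of $G^{r}$ rooted at $s$; this is $O(1)$ time per query (or $O(\log(nW))$ if one actually scans all scales, which still fits the stated asymptotics once the scales are iterated implicitly as in the hopset construction).

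The running time splits naturally into two additive parts. The hopset itself costs $\tilde{O}\bigl(\tfrac{\beta}{\epsilon}\,m\,n^{\rho}\bigr)$ by Theorem \ref{thm:main_hopset}. For each source, the monotone ES tree needs to be maintained only up to depth $D=\lceil 2(2\beta+1)/\epsilon\rceil=O(\beta/\epsilon)$ on each scaled graph, thanks to the path-doubling/scaling reduction of Lemma \ref{lem:rounding}. By Lemma \ref{lem:es_time}, a single monotone ES tree on a graph with $m'$ edges and $\Delta'$ insertions takes $O((m'+\Delta')D)$ time. Since the hopset has $\tilde{O}(n^{1+1/(2^{k}-1)})$ edges at all times and the total number of hopset insertions across all updates is also $\tilde{O}(n^{1+1/(2^{k}-1)})$, each source contributes $\tilde{O}(\beta\,(m+n^{1+1/(2^{k}-1)}))$; summing over $|S|$ sources yields the stated bound.

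For correctness (stretch), I would reuse the threefold induction of Theorem \ref{thm:single_stretch} essentially verbatim, but now applied with $s\in S$ playing the role of the tree root rather than a cluster center. The key observation I need is that the final paragraph of the proof of Theorem \ref{thm:single_stretch} already establishes exactly this: once $\bar H_{\log(nW)}$ has been maintained, any single-source monotone ES tree rooted at an arbitrary $s$ on $G^{r}$, run up to depth $D$, yields an estimate $d_{t,r}(s,v)\le (1+\epsilon)^{O(\log(nW))}\,d_{t,G}(s,v)$ for any $v$ with $d_{t,G}(s,v)\in[2^{r},2^{r+1}]$. Taking $\min_{r}$ over scales and rescaling $\epsilon'=\Theta(\epsilon/\log(nW))$ up front gives the $(1+\epsilon)$ stretch.

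The main obstacle is the interaction between the hopset-edge \emph{insertions} (coming from smaller scales being added to $\bar H_{j}$ over time) and the monotone ES trees rooted at the sources: we must verify that Claim \ref{claim:sssp_inserts}, which was stated for the trees used \emph{inside} the hopset construction, also applies to the auxiliary trees rooted at each source $s\in S$. This is true because the argument there only used that (i)~an inserted edge $(x',y')$ has weight satisfying Observation \ref{obs:monotone}, and (ii)~if $(x',y')$ is stretched in the $s$-tree we fall back to the previous-time level, which inductively has the right stretch---neither property depends on $s$ being a cluster center. Consequently the path-by-segment case analysis of Theorem \ref{thm:single_stretch} transfers, and combined with Corollary \ref{cor:overall_stretch} applied with $\epsilon'=\epsilon/\Theta(\log(nW))$ we obtain the claimed $(1+\epsilon)$-approximation with hopbound $\beta=O\bigl(\tfrac{\log(nW)}{\epsilon}\cdot(k+1/\rho)\bigr)^{k+1/\rho+1}$.
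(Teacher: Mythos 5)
Your proposal is correct and follows essentially the same route as the paper: build the $(\beta,1+\epsilon/3)$-hopset via Theorem~\ref{thm:main_hopset}, run monotone Even--Shiloach trees from each source on the scaled graphs $G^{1},\ldots,G^{\log(nW)}$ up to depth $O(\beta/\epsilon)$, take $\min_{r}\eta(2^{r},\cdot)L_{r}(s,v)$ as the estimate, invoke the stretch argument of Theorem~\ref{thm:single_stretch} for the source-rooted trees, and split the time into the hopset cost plus per-source ES-tree cost. Your explicit check that Claim~\ref{claim:sssp_inserts} does not rely on the root being a cluster center is a useful clarification that the paper leaves implicit when it simply asserts the claim ``follows by the argument in Theorem~\ref{thm:single_stretch}.''
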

\begin{proof}
We maintain a $(\beta, \frac{\epsilon}{3})$-hopset $H$ based on Theorem \ref{thm:main_hopset}. Then we run Algorithm \ref{alg:estree} on $G \cup H$ from all the $s$ for all scaled graphs. The the claim follows by the argument in Theorem \ref{thm:single_stretch}. 
In particular, after adding all the hopset edges at time $t$ for all scales, we will run the monotone ES tree algorithm rooted at each source again on the union of all scaled graphs $G^1 \cup ...\cup G^{\log W}$ (by setting $\epsilon_0=\epsilon/3$) and let the level $L(s,v)$ of a node be $\min_{j} \eta( 2^j,  \frac{\epsilon}{3}) L_j(s,v)$ where $L_j(s,v)$ is the level of $v$ on $G^j$ after running the monotone ES tree that is run up to depth $\beta$. By item 3 of Theorem \ref{thm:main_hopset}, we get an overall stretch of $(1+\epsilon/3)^2 \leq (1+\epsilon)$. 


The time required for maintaining the hopset is $\tilde{O}( (m+ n^{1+\frac{1}{2^k-1}}) n^\rho)$ and by setting $n^{\rho}=s$ the time required for maintaining $\beta$-hop bounded shortest path from all sources is $O(sm \cdot \beta) = \tilde{O}(sm)$, when $s=n^{\Omega(1)}$.
\end{proof}

We next state two specific consequences. First implication is that when the number of sources is a polynomial, and the graph is not very sparse, we can get a near-optimal (up to polylogarithmic factors) algorithm for $(1+\epsilon)$-MSSP.
\begin{corollary}
Given an undirected and weighted graph $G=(V, E)$, where $|E|= n^{1+\Omega(1)}$, there is a decremental algorithm for maintaining $(1+\epsilon)$-approximate distances from $s$ sources, where $s=n^{\Omega(1)}$ in total update time of $\tilde{O}(sm)$, and with $O(1)$ query time.   
\end{corollary}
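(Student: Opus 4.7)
The plan is to derive the corollary by specializing the parameters in Theorem~\ref{thm:mssp}. Since $|E| = n^{1+\Omega(1)}$, there is a constant $c_1 > 0$ such that $m \geq n^{1+c_1}$, and since $s = n^{\Omega(1)}$, there is a constant $c_2 > 0$ such that $s \geq n^{c_2}$. I will choose $k$ and $\rho$ so that both the additive $n^{1+1/(2^k-1)}$ term and the $n^{\rho}$ term are absorbed, and the hopbound $\beta$ collapses to polylogarithmic.

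First, I would pick $k$ to be the smallest integer with $\frac{1}{2^k-1} \leq c_1$, so that $n^{1+1/(2^k-1)} \leq n^{1+c_1} \leq m$. This makes the factor $|S|(m + n^{1+1/(2^k-1)})$ from Theorem~\ref{thm:mssp} equal to $O(sm)$. Next, I would choose a constant $\rho \in (0,c_2)$ (and large enough that the constraint $\rho > 2/(2^k-1)$ from Theorem~1 is satisfied, adjusting $k$ upward if necessary, which is harmless since both remain constants). Then $n^{\rho} \leq n^{c_2} \leq s$, so the additive $mn^\rho$ term is bounded by $sm$.

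With $k$ and $\rho$ fixed as constants, the hopbound $\beta = O\!\bigl((\tfrac{\log(nW)}{\epsilon}(k+1/\rho))^{k+1/\rho+1}\bigr)$ is polylogarithmic in $n$ (and in the aspect ratio $W$, which is hidden in the $\tilde O$ for polynomially bounded weights). Substituting into the bound of Theorem~\ref{thm:mssp}, the total update time becomes
\[
\tilde{O}\!\bigl(\beta(s(m + n^{1+1/(2^k-1)}) + mn^\rho)\bigr) \;=\; \tilde{O}(sm),
\]
and the $O(1)$ query time is inherited directly. Correctness of the $(1+\epsilon)$-approximation follows from Theorem~\ref{thm:mssp}, since we have only instantiated its parameters.

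The main step that requires care is coordinating the two constant choices: the density assumption must dominate the hopset size overhead ($1/(2^k-1) \leq c_1$), and the source-count assumption must dominate the per-edge overhead from the restricted-hopset maintenance ($\rho \leq c_2$), while both constants are simultaneously compatible with the range $\tfrac{2}{2^k-1} < \rho < 1$ required by the underlying hopset theorem. Once this consistent choice is exhibited, no further analysis is needed: the corollary is a direct specialization of Theorem~\ref{thm:mssp}.
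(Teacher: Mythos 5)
Your proof is correct and follows the same route the paper intends: the corollary is a direct parameter specialization of Theorem~\ref{thm:mssp}, and the paper's proof of that theorem already flags the key choice ``setting $n^\rho = s$'' together with picking $k$ so that $n^{1+1/(2^k-1)} = O(m)$. Your added care about the compatibility of the constant choices with the hopset constraint $\tfrac{2}{2^k-1} < \rho$ is a nice bit of bookkeeping the paper leaves implicit, but the argument is the same.
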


When the number of sources $s= n^{o(1)}$ (e.g.~in case of SSSP), the best tradeoff can be obtain by setting $\rho= \frac{\log \log n}{\sqrt{\log n}}$. We will then have $\beta= 2^{\tilde{O}(\sqrt{\log n})}$ and also $n^{\rho}=2^{\tilde{O}(\sqrt{\log n})}$. In this case we get improved bounds over the result of \cite{henzinger2014}, which has a total update time of is $mn^{\tilde{O}({\log^{3/4}n})}$. 
\begin{corollary}
Given an undirected and weighted graph $G=(V, E)$, there is a decremental algorithm for maintaining $(1+\epsilon)$-approximate distances from $s$ sources, when $0 <\epsilon<1$ is a constant and $|E|= n \cdot 2^{\tilde{\Omega}(\sqrt{\log n})}$, with total update time of  $\tilde{O}(sm \cdot 2^{\tilde{O}(\sqrt{\log n})})$, and with $O(1)$ query time.  Hence, we can maintain $(1+\epsilon)$-approximate SSSP in $ 2^{\tilde{O}(\sqrt{\log n})}$ amortized time.
\end{corollary}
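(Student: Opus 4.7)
The plan is to derive this corollary directly from Theorem~\ref{thm:mssp} by choosing the parameters $k$ and $\rho$ so that the two dominant quantities in the update-time bound---the hopbound $\beta$ and the cluster-overlap factor $n^{\rho}$---are simultaneously minimized and both equal to $2^{\tilde{O}(\sqrt{\log n})}$. Concretely, I would set $k := \log\log n$ and $\rho := \tfrac{\log\log n}{\sqrt{\log n}}$, treating $\epsilon$ as constant.

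The first step is to verify that these choices behave as claimed. With $k = \log\log n$ we have $2^k - 1 = \log n - 1$, so $n^{1/(2^k - 1)} = 2^{\log n / (\log n - 1)} = O(1)$, and therefore both the hopset size and the source-dependent term $|S| \cdot n^{1 + 1/(2^k-1)}$ appearing in Theorem~\ref{thm:mssp} collapse to $O(|S| \cdot n)$. A direct calculation gives $n^\rho = 2^{\sqrt{\log n}\cdot \log\log n} = 2^{\tilde{O}(\sqrt{\log n})}$. For the hopbound, the exponent $k + 1/\rho + 1 = O(\sqrt{\log n}/\log\log n)$ is dominated by $1/\rho$, and the base $O\!\bigl(\tfrac{\log n}{\epsilon}(k+1/\rho)\bigr) = O(\log^{3/2} n / \log\log n)$ is quasi-polylogarithmic; taking logs yields $\log \beta = O(\sqrt{\log n}/\log\log n)\cdot O(\log\log n) = \tilde{O}(\sqrt{\log n})$, hence $\beta = 2^{\tilde{O}(\sqrt{\log n})}$.

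The second step is to plug these values into Theorem~\ref{thm:mssp}. The total update time becomes
\[
\tilde{O}\!\Bigl(\beta\cdot\bigl(s\,(m + n^{1 + 1/(2^k-1)}) + m\,n^{\rho}\bigr)\Bigr) \;=\; \tilde{O}\!\Bigl(sm\cdot 2^{\tilde{O}(\sqrt{\log n})} + sn \cdot 2^{\tilde{O}(\sqrt{\log n})} + m\cdot 2^{\tilde{O}(\sqrt{\log n})}\Bigr).
\]
The density assumption $|E| = n\cdot 2^{\tilde{\Omega}(\sqrt{\log n})}$ guarantees $m \geq n\cdot 2^{\tilde{\Omega}(\sqrt{\log n})}$, which lets me absorb the $sn$ and lone-$m$ terms into $sm\cdot 2^{\tilde{O}(\sqrt{\log n})}$, yielding the claimed MSSP bound. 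Setting $s = 1$ gives total update time $\tilde{O}(m\cdot 2^{\tilde{O}(\sqrt{\log n})})$ for SSSP, and amortizing over the $m$ edge deletions gives $2^{\tilde{O}(\sqrt{\log n})}$ amortized time. The $O(1)$ query time carries over unchanged from Theorem~\ref{thm:mssp}.

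There is no real conceptual obstacle beyond the parameter-tuning arithmetic---Theorems~\ref{thm:mssp} and~\ref{thm:main_hopset} do all of the heavy lifting. The only delicate choice is taking $k$ \emph{just} large enough to drive $n^{1/(2^k - 1)}$ down to $O(1)$ (which forces $k \gtrsim \log\log n$) without inflating the exponent $k + 1/\rho + 1$ in $\beta$; picking $k = \log\log n$ achieves exactly this balance, and then $\rho = \log\log n /\sqrt{\log n}$ is essentially forced because it makes $\log(n^\rho)$ and $\log\beta$ both equal to $\tilde{O}(\sqrt{\log n})$. The only subtlety hidden by the constant-$\epsilon$ assumption is that a non-constant $\epsilon$ multiplies the base of $\beta$ by $1/\epsilon$, producing an extra $(1/\epsilon)^{\sqrt{\log n}/\log\log n}$ factor, matching the overhead stated in the corresponding SSSP theorem.
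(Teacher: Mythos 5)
Your proposal is correct and follows exactly the paper's route: apply Theorem~\ref{thm:mssp} with $\rho = \log\log n/\sqrt{\log n}$ and verify that both $\beta$ and $n^{\rho}$ are $2^{\tilde{O}(\sqrt{\log n})}$. One remark worth making explicit: your choice $k = \log\log n$ is larger than the $k$ the paper implicitly has in mind (its earlier ``special case'' sets $(2^k-1)^{-1} \approx \rho$, giving $k \approx \tfrac{1}{2}\log\log n$), and your larger $k$ collapses the hopset-size factor $n^{1/(2^k-1)}$ to $O(1)$ rather than $2^{\tilde{O}(\sqrt{\log n})}$; as a consequence the density hypothesis $|E| = n\cdot 2^{\tilde{\Omega}(\sqrt{\log n})}$ is not actually needed for your parameter choice (only $m \geq n$ and $s \geq 1$ are), whereas with the paper's smaller $k$ it is what absorbs the $s\cdot n^{1+1/(2^k-1)}$ term. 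Either choice respects the constraint $\tfrac{2}{2^k-1} < \rho$ and yields $\log\beta = \Theta(\sqrt{\log n})$, so the final bound is the same.
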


\subsection{APSP distance oracles} \label{sec:APSP}
It is known that in static settings for any weighted graph $G=(V,E)$, we can construct a Thorup-Zwick \cite{TZ2005} distance oracle of size (w.h.p.) $\tilde{O}(n^{1+1/k})$, such that after the preprocessing time of $\tilde{O}(mn^{1/k})$, we can query $(2k-1)$-approximate distances for any pair of nodes in $O(k)$ time. In this section we show that in decremental settings we can maintain these distance oracles in total update time of $\tilde{O}(mn^{1/k})$ (for graphs that are not too sparse), and we can query $(2k-1)(1+\epsilon)$-approximate distances in $O(k)$ time. This can be done by maintaining a $(\beta, 1+\epsilon)$-hopset and a distance oracles for $G$ at the same time, where $\beta$ is polylogarithmic in $n$. Intuitively, the hopset will allow us to update distances faster on the distance oracles. 

\paragraph{Distance oracle algorithm via a hopset.} 
Assume that we are given a $(\beta, 1+\epsilon)$-hopset for $G$. The algorithm for constructing the Thorup-Zwick distance oracle is as follows: Similar to the algorithm in Section \ref{sec:static_hopset}, we define sets $V=A_0 \supseteq A_1 \supseteq ... \supseteq A_k=\emptyset$\footnote{This $k$ should not be confused with the size parameter in the hopset algorithm of Section \ref{sec:static_hopset}. Here we only use the fact that the hopset size can be bounded based on the graph density.}. But here each set $A_{i+1}$ is obtained by sampling each element from $A_i$ with probability $p_i=n^{-1/k}$. Same as before, for every node $u \in A_i\setminus A_{i+1}$, let $p_i(u) \in A_{i+1}$ be the closest node to set $A_{i+1}$. We the bunch of a node $u$ is the set $B(u) = \cup_{i=1}^{k} B_i(u)= \{ v \in A_i: d(u,v) < d(u,A_{i+1})\} \cup \{p(u)\}$, $C(v)$ called the cluster of $v$ such that if $v \in B(u)$ then $u \in C(v)$. The distance oracle is consisted of bunches $B(v)$ for all $v \in V$, and the distances associated with them. Note that the information stored here are also different from the hopset algorithm described in Section \ref{sec:static_hopset}, since there we only added edges for nodes $v \in A_i$ and their bunches. 
Thorup and Zwick \cite{TZ2005} show that this distance oracle has the following properties (in static settings):
\begin{theorem}[\cite{TZ2005}]
There is a distance oracle of expected size $O(kn^{1+1/k})$, that can answer $(2k-1)$-approximate distance queries for a given weighted and undirected graph $G=(V,E)$ in $O(k)$ time for any $k \geq 2$. The preprocessing time in static settings is w.h.p.~$\tilde{O}(mn^{1/k})$.
\end{theorem}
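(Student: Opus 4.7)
The plan is to follow the classical Thorup--Zwick argument, adapted to the sampling/bunch definitions already fixed in the paper. I first analyze the size of the oracle. By linearity of expectation, $\mathbb{E}[|A_i|] = n \cdot p_i^{\,i} = n^{1-i/k}$, so the top level $A_k$ is empty in expectation. For a fixed $u \in A_i \setminus A_{i+1}$, the expected size of $B_i(u) = \{v \in A_i : d(u,v) < d(u,A_{i+1})\}$ is bounded by $1/p_i = n^{1/k}$: order the vertices of $A_i$ by distance from $u$, and note that a vertex is in $B_i(u)$ only if none of the preceding vertices in this order has been sampled into $A_{i+1}$, which happens with probability $p_i$ independently. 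Summing across the $k$ levels, $\mathbb{E}[|B(u)|] = O(k n^{1/k})$, giving total expected size $O(k n^{1+1/k})$.

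Next I describe the query algorithm and its $O(k)$ stretch. Given a query $(u,v)$, run the standard alternating procedure: set $w_0 = u$, $i = 0$, then at iteration $i$ let $w_i = p_i(u)$ and check whether $w_i \in B(v)$; if yes, return $d(u,w_i) + d(v,w_i)$, otherwise swap $u \leftrightarrow v$ and increment $i$. Because $A_k = \emptyset$ eventually forces the procedure to terminate within $k$ iterations (since $w_i \in A_i$ and distances to $A_i$ grow monotonically), the query runs in $O(k)$ time using a hash table for each bunch. For stretch, a standard inductive argument shows that after iteration $i$ the pivot $w_i$ satisfies $d(u,w_i) \le i \cdot d(u,v)$; combining with the triangle inequality at the terminating iteration $i^\star$ yields $d(u,w_{i^\star}) + d(v,w_{i^\star}) \le (2i^\star - 1) d(u,v) \le (2k-1) d(u,v)$.

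For the preprocessing time I use the cluster-centered modified Dijkstra already sketched for the hopset: for each $w \in A_i \setminus A_{i+1}$, grow a Dijkstra tree from $w$ but only relax edges out of a vertex $x$ into $y$ when $d(w,x) + \mathrm{w}(x,y) < d(y, A_{i+1})$ (the condition for $y$ to remain in $C(w)$). This computes all clusters, and hence by reversal the bunches $B(v)$, while scanning each edge $(x,y)$ only inside clusters containing $x$. The key bound is that, with high probability, each vertex lies in $\tilde{O}(n^{1/k})$ clusters across all levels: fixing $v$ and level $i$, the clusters containing $v$ are in bijection with the vertices of $A_i$ closer to $v$ than $A_{i+1}$, a set of expected size $1/p_i = n^{1/k}$ with a Chernoff tail. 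Summing the work per edge over all its incident cluster memberships gives $\tilde{O}(m n^{1/k})$ total preprocessing time w.h.p.

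The only technically delicate point, and the one I would flag as the main obstacle, is the high-probability bound on cluster overlap per vertex: the naive expectation bound is immediate, but to assert w.h.p.~$\tilde{O}(m n^{1/k})$ preprocessing one needs a concentration argument over the geometric-type variables $|B_i(u)|$ (equivalently, the ranks at which the first sampled vertex of $A_{i+1}$ appears in the distance-sorted list from $u$). A standard way is to truncate each geometric variable at $\Theta(\log n) \cdot n^{1/k}$ (incurring only a polynomially small failure probability) and then apply a Chernoff bound to the sum, exactly as in the original Thorup--Zwick analysis. Once this concentration is in hand, the size, query-time, stretch, and preprocessing-time claims all follow by combining the pieces above.
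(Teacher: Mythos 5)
The paper does not prove this statement; it is imported verbatim from Thorup--Zwick \cite{TZ2005} as a citation, so there is no ``paper's own proof'' to compare against. Your reconstruction is the standard Thorup--Zwick argument, and its overall structure is correct: the geometric-variable bound on bunch sizes for the $O(kn^{1+1/k})$ space, the alternating-pivot query procedure for $O(k)$ query time and $(2k-1)$ stretch, and the cluster-centered modified Dijkstra for $\tilde{O}(mn^{1/k})$ preprocessing. You also correctly identify that the w.h.p.\ bound on the preprocessing time (as opposed to an expectation bound) is the one step that requires a concentration argument on the bunch sizes.

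Two small slips are worth flagging, neither of which is fatal. First, $\mathbb{E}[|A_k|] = n\cdot(n^{-1/k})^k = 1$, so $A_k$ is not ``empty in expectation''; in the Thorup--Zwick construction (and in this paper) $A_k$ is simply declared to be $\emptyset$ by fiat, which is what guarantees termination of the query loop. Second, the terminating stretch is $(2i^\star+1)\,d(u,v)$, not $(2i^\star - 1)\,d(u,v)$: the invariant $d(u_{i},w_i)\le i\,d(u,v)$ gives, at the terminating step, $d(u_{i^\star},w_{i^\star}) + d(w_{i^\star},v_{i^\star}) \le i^\star d(u,v) + (i^\star + 1)d(u,v) = (2i^\star+1)d(u,v)$, and since $i^\star \le k-1$ this still yields the claimed $(2k-1)$ bound, so your final conclusion stands even though the intermediate constant is off by two. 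With these corrections the sketch matches the classical proof faithfully.
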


As discussed in Appendix \ref{app:restricted_hopset}, Roditty and Zwick \cite{roditty2004} showed how to maintain this data structure in $O(mn)$ update time for \textit{unweighted graphs}, but where the size is increased to $\tilde{O}(m+n^{1+1/k})$. For weighted graphs their updates time can be as large as $O(mn^{1+1/k})$. 
We will argue that by maintaining a $(\beta,1+\epsilon)$-hopset along with the distance oracle we can improve the total update time to $\tilde{O}(\beta mn^{1/k})$. This combined with our decremental hopset of Theorem \ref{thm:main_hopset} will lead to the desired bounds. More formally,
\begin{theorem}\label{thm:oracle_time}
Given a weighted and undirected graph $G=(V,E)$ and a $(\beta, 1+\epsilon)$-hopset $H$ for $G$, and a parameter $k \geq 2$, we can maintain a distance oracle with size $\tilde{O}(m+ |E(H)|+n^{1+1/k})$ that supports $(1+\epsilon)(2k-1)$-approximate queries in $\tilde{O}(\frac{\beta}{\epsilon} \cdot m n^{1/k})$ total update time with $O(k)$ query time.
\end{theorem}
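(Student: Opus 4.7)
The plan is to adapt the decremental Thorup--Zwick maintenance of \cite{roditty2004}, and use the given hopset $H$ to speed up the internal shortest-path explorations from $O(mn)$ total work to $O(\beta m n^{1/k})$, in the same spirit as the restricted-hopset algorithm of Section~\ref{sec:new_hopset}. Sample the nested sets $V=A_0 \supseteq A_1 \supseteq \cdots \supseteq A_k=\emptyset$ by including each element of $A_i$ in $A_{i+1}$ independently with probability $n^{-1/k}$; these sets are fixed once and for all. For each $i$ and each $u \in A_i \setminus A_{i+1}$ we maintain the approximate cluster $C(u)$ and the distance estimates $L(u,v)$ for $v \in C(u)$, together with the pivot distances $d(v,A_{i+1})$ for every $v \in V$ (from which pivots $p_i(v)$ and bunches $B(v)=\bigcup_i\{u\in A_i\setminus A_{i+1}: v\in C(u)\}$ are derived). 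For each scale $j=1,\ldots,\log(nW)$, form the scaled graph $G^{j}=\textsc{Scale}(G\cup H,\, 2^{j},\, \epsilon,\, 2\beta+1)$ using Lemma~\ref{lem:rounding}, and run the monotone ES-tree / modified Dijkstra variant of Algorithm~\ref{alg:restricted_hopset} on each $G^{j}$ up to depth $D=\lceil\frac{2(2\beta+1)}{\epsilon}\rceil$. The pivot distances are maintained by a single monotone ES-tree from a virtual super-source connected to $A_{i+1}$, and the cluster $C(u)$ is maintained with admissibility threshold $L(v,A_{i+1})/(1+\epsilon)$ just as in the restricted-hopset algorithm. A query for $d(x,y)$ then invokes the standard Thorup--Zwick procedure: iterate $i=0,1,\ldots,k-1$, swap $x\leftrightarrow y$, set $u\gets p_i(x)$, and stop once $u\in B(y)$, returning $L(u,x)+L(u,y)$; this takes $O(k)$ time.

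For stretch, the task reduces to showing that the maintained estimate $L(u,v)$, obtained as $\min_{r\le\log(nW)}\eta(2^{r},\epsilon)\, L_{r}(u,v)$ over all scales, satisfies $d_{G}(u,v)\le L(u,v)\le (1+\epsilon)\,d_{G}(u,v)$ whenever $v\in C(u)$. This is exactly the content of Observation~\ref{obs:monotone} and Theorem~\ref{thm:single_stretch}, applied now to the clusters of the Thorup--Zwick sampling (probability $n^{-1/k}$ instead of $q_i$), and the argument carries over essentially verbatim: a threefold induction on level $i$, scale $j$, and time $t$, combining path-doubling (Lemma~\ref{lem:hop_doubling}) on $G\cup H$, the rounding bound of Lemma~\ref{lem:rounding}, and the fact that a stretched edge in a monotone ES-tree only preserves a previously correct estimate. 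Once the $L(\cdot,\cdot)$ values are $(1+\epsilon)$-approximate shortest paths, the classical Thorup--Zwick analysis of \cite{TZ2005} yields a stretch of at most $(2k-1)(1+\epsilon)$ for the query, with the usual termination guarantee $u\in B(y)$ holding after at most $k$ iterations w.h.p.

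For the update time, I would bound, for each fixed vertex $v$, the total number of edge scans across all clusters $C(u)$ and all scales that touch $v$. As in Lemma~\ref{lem:dec_bound_cluster} (adapted to sampling probability $n^{-1/k}$), the expected number of clusters $C(u)$ that ever contain $v$ over the entire deletion sequence is $\tilde{O}(n^{1/k})$, since each cluster corresponds to a vertex sampled into $A_i\setminus A_{i+1}$ whose Thorup--Zwick ball contains $v$, and the standard analysis bounds the expected cluster overlap by $\tilde{O}(1/p_i)=\tilde{O}(n^{1/k})$. For each such cluster, the monotone ES-tree on the scaled graph $G^{j}$ performs $O(\beta/\epsilon)$ scans per incident edge (Lemma~\ref{lem:es_time}), and summing over the $O(\log(nW))$ scales contributes only polylogarithmic factors. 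Aggregating over all $v$, and recalling that the total number of insertions fed from the hopset is $O(|E(H)|)=\tilde{O}(n^{1+1/(2^{k}-1)})$ which is absorbed into the $\tilde{O}$, the total update time is $\tilde{O}(\frac{\beta}{\epsilon}\, m\, n^{1/k})$. The space is $\tilde{O}(n^{1+1/k})$ for bunches plus $\tilde{O}(m+|E(H)|)$ for the scaled graphs and ES-tree auxiliary data.

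The main obstacle I anticipate is not the counting of cluster overlaps, nor the query analysis, but the stretch argument in the dynamic setting: re-verifying that the clusters $C(u)$ maintained on a sequence of scaled graphs, in the presence of hopset-induced insertions and monotone (possibly stale) level values, still correctly satisfy the $(1+\epsilon)$-approximation of $d_{G}(u,v)$ that Thorup--Zwick needs. This is precisely where the threefold induction of Theorem~\ref{thm:single_stretch} must be invoked and specialized to cluster-centered trees rooted at pivots of $A_{i+1}$ and at cluster centers of $A_i\setminus A_{i+1}$, and where one must verify Claim~\ref{claim:sssp_inserts}-style statements about how insertions from smaller scales propagate through the scaled graphs without inflating the final stretch beyond $(1+\epsilon)$.
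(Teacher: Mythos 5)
Your proposal follows essentially the same route as the paper's proof: adapt the Roditty--Zwick maintenance of the Thorup--Zwick clusters with TZ sampling probability $n^{-1/k}$, maintain the cluster-rooted distance estimates via monotone ES-trees on the sequence of scaled graphs $G^j$ from Lemma~\ref{lem:rounding}, bound the update time by the edge-scan argument of Lemma~\ref{lem:dec_bound_cluster}, answer queries by the static TZ procedure in $O(k)$ time, and compose the hopset stretch and rounding error into $(1+\epsilon)$ to obtain overall stretch $(2k-1)(1+\epsilon)$. One small imprecision in your running-time paragraph: Lemma~\ref{lem:dec_bound_cluster} does not bound the number of \emph{distinct} clusters that ever contain a vertex $v$ over the whole deletion sequence by $\tilde{O}(n^{1/k})$ (that quantity can be much larger); rather, it directly bounds the total number of edge scans of $v$'s incident edges over all cluster trees by $O(d/q_i)$, which with $d=O(\beta/\epsilon)$ and $q_i=n^{-1/k}$ already gives the desired $\tilde{O}(\beta n^{1/k}/\epsilon)$ per vertex without a separate per-cluster depth factor. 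Your final bound is nonetheless correct, and the rest of the argument matches the paper.
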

\begin{proof}
Similar to Theorem \ref{thm:mssp}, we consider the sequence $G^1, ..., G^j$, where $G^r, r \leq j$ is scaling of the graph $G \cup \bar{H}_r$ as defined in Section \ref{sec:new_hopset} (and Algorithm \ref{alg:main}), where $\epsilon_0= \frac{\epsilon}{3}$ and $\bar{H}_j$ is a $(2^j, \beta, \frac{\epsilon}{3})$ hopset. We then run the algorithm of Roditty-Zwick \cite{roditty2004} on $G$ up to depth $\lceil 3\beta/\epsilon \rceil$ for maintaining the clusters and the bunches.
The algorithm and the running time analysis is similar to the restricted hopset algorithm described in Section \ref{sec:new_hopset}. The main differences in these algorithms are the sampling probabilities and the information stored. Therefore using the argument in Lemma \ref{lem:dec_bound_cluster} we can show that by running this algorithm on $G$ with depth $\lceil 3\beta/\epsilon \rceil$ we can maintain a bunch $B_i(u)$ for all nodes $u \in V, 1 \leq i \leq k-1$ in $\tilde{O}(\frac{\ell m}{\epsilon q_i})=\tilde{O}(\frac{\beta}{\epsilon} mn^{1/k})$ total update time. This algorithm lets us maintain clusters. We also maintain the distances in clusters and hence bunches as follows: For each $v \in V, u \in B(v)$, we run single-source shortest path distance between from $v$ on scaled graphs $G^1, ..., G^{\log W}$ (by setting $\epsilon_0=\epsilon/3$). We then set the distance  $d(u,v)$ to be $\min_{j} \eta( 2^j,  \frac{\epsilon}{3}) L_j(s,v)$ where $L_j(s,v)$ is the level of $v$ on $G^j$ after running the monotone ES tree that is run up to depth $\lceil 6(\beta+1)/\epsilon)\rceil$.

Again, when we combine the hopset stretch with the stretch with the rounding algorithm caused by rounding, we get an overall stretch of $(1+\frac{\epsilon}{3})^2 \leq 1+\epsilon$.  The overall stretch is thus $(2k-1)(1+\epsilon)$.
\end{proof}

\begin{theorem} \label{thm:oracle_main}
Given a weighted graph $G=(V,E)$ with polynomial weights, and constant\footnote{If $k=\omega(1)$, then a factor of $n^{o(1)}$ will be added to the running time.} $k \geq 2$ and $0 < \epsilon <1$, we can maintain a data structure with expected size $\tilde{O}(m+n^{1+1/k})$ and total update time of $\tilde{O}(mn^{1/k}\cdot (1/\epsilon)^{O(1)})$, that returns $(2k-1)(1+\epsilon)$-stretch queries for any pair $u,v \in V$ with $O(k)$ query time. 
\end{theorem}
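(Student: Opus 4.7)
The plan is to combine Theorem \ref{thm:main_hopset} with Theorem \ref{thm:oracle_time} by running them in tandem with carefully chosen parameters so that the hopset size matches $\tilde{O}(n^{1+1/k})$, the hopset update time matches the distance-oracle update time, and the two $(1+\epsilon)$ stretch factors combine to a single $(1+\epsilon)$ after rescaling.

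First I would fix the hopset-size parameter. Theorem \ref{thm:main_hopset} produces a hopset of size $\tilde{O}(n^{1+1/(2^{k'}-1)})$ with hopbound $\beta = (O(\tfrac{\log n}{\epsilon'}(k'+1/\rho)))^{k'+1/\rho+1}$, where $k'$ is the hopset's internal parameter (distinct from the $k$ in the theorem). To guarantee a hopset size of $\tilde{O}(n^{1+1/k})$, I would pick $k' = \lceil \log_2(k+1) \rceil$, which is a constant since $k$ is constant, so that $2^{k'}-1 \geq k$ and hence $n^{1+1/(2^{k'}-1)} \leq n^{1+1/k}$. Next, I would set $\rho = 1/k$, which is admissible because $1/k > 2/(2^{k'}-1)$ for $k' = \Theta(\log k)$ large enough, and this choice makes $n^{\rho}=n^{1/k}$. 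With $k'$ and $1/\rho = k$ both constants, the hopbound becomes $\beta = \mathrm{polylog}(n) \cdot (1/\epsilon')^{O(1)}$.

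With these parameters locked in, the hopset of Theorem \ref{thm:main_hopset} is maintained in total time $\tilde{O}(\tfrac{\beta}{\epsilon'}(m + n^{1+1/k}) n^{1/k}) = \tilde{O}(m n^{1/k} \cdot (1/\epsilon')^{O(1)})$ (using that the input graph may be assumed to have $m \geq n$, else the bound already holds), and its edge set has size $\tilde{O}(n^{1+1/k})$. I then feed this hopset into Theorem \ref{thm:oracle_time}, which simultaneously maintains the Thorup--Zwick bunches $B(v)$ and their associated distances via the modified Roditty--Zwick clustering procedure run on the sequence of scaled graphs $G^j$ up to depth $\lceil 3\beta/\epsilon' \rceil$. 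Theorem \ref{thm:oracle_time} guarantees a data structure of size $\tilde{O}(m + |E(H)| + n^{1+1/k}) = \tilde{O}(m + n^{1+1/k})$, stretch $(2k-1)(1+\epsilon')$, query time $O(k)$, and total update time $\tilde{O}(\tfrac{\beta}{\epsilon'} m n^{1/k})$.

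Finally, to obtain the claimed $(2k-1)(1+\epsilon)$ stretch, I would rescale $\epsilon' := \epsilon / c$ for a small absolute constant $c$, so that $(2k-1)(1+\epsilon') \leq (2k-1)(1+\epsilon)$. Since $\beta$ and the polylogarithmic overheads from the scaled-graph construction only contribute $\mathrm{polylog}(n) \cdot (1/\epsilon)^{O(1)}$ factors, the combined update time across both data structures is $\tilde{O}(mn^{1/k} \cdot (1/\epsilon)^{O(1)})$, and the query procedure is simply the static Thorup--Zwick query on the maintained bunches, which takes $O(k)$ worst-case time. The main technical obstacle is not in this composition step itself, which is clean once Theorems \ref{thm:main_hopset} and \ref{thm:oracle_time} are in hand, but rather in ensuring that the parameter triple $(k', \rho, \epsilon')$ chosen for the hopset is consistent with both the admissibility constraint $\rho > 2/(2^{k'}-1)$ of the hopset construction and with making the hopset's edge count absorbed by the $n^{1+1/k}$ term of the oracle; the argument above shows that the choice $k' = \Theta(\log k)$, $\rho = 1/k$ achieves both simultaneously.
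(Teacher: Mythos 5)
Your high-level plan — compose Theorem \ref{thm:main_hopset} with Theorem \ref{thm:oracle_time}, tuning the hopset parameters so its size and update time are absorbed by the oracle's budget, then rescale $\epsilon$ — is the same as the paper's. Where you diverge is in \emph{how} you pick the hopset's size exponent: you match it to the oracle's size term $n^{1+1/k}$, while the paper instead matches it to the graph density $m$ (choosing $\nu$ so that $O(n^{1+\nu}) = O(m)$) and performs a case split on $m$. This difference is exactly where two problems arise.

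First, a small arithmetic slip: your choice $k' = \lceil \log_2(k+1) \rceil$ gives $2^{k'}-1 \geq k$, which suffices for the size bound, but it does \emph{not} satisfy the admissibility constraint $\rho > 2/(2^{k'}-1)$ when $\rho = 1/k$. That constraint requires $2^{k'}-1 > 2k$; already at $k=2$ you get $k'=2$ and $2^{k'}-1 = 3 \not> 4$. Replacing the choice by, say, $k' = \lceil \log_2(2k+2) \rceil$ fixes this without affecting the rest of the argument.

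Second, and more substantively, your running-time collapse $\tilde{O}\bigl(\tfrac{\beta}{\epsilon'}(m + n^{1+1/k}) n^{1/k}\bigr) = \tilde{O}(mn^{1/k} \cdot (1/\epsilon')^{O(1)})$ fails on sparse graphs. When $m < n^{1+1/k}$, the dominant term is $n^{1+1/k}\cdot n^{1/k} = n^{1+2/k}$, which is not $\tilde{O}(mn^{1/k})$; your parenthetical ``$m \geq n$, else the bound already holds'' does not repair this, since $m \geq n$ only gives $mn^{1/k} \geq n^{1+1/k}$, a factor $n^{1/k}$ short of what is needed. The paper avoids this by case-splitting: when $m = n^{1+\Omega(1)}$ it takes $\rho = 1/k$ and chooses $\nu$ so that the hopset is no denser than $G$, so $n^{1+\nu}n^{\rho} = O(mn^{1/k})$; when $m = n^{1+o(1)}$ it instead sets $\rho = 1/(2k)$, which shrinks the $n^{\rho}$ overhead enough that $n^{1+\nu+\rho} \leq n^{1+1/k} \leq mn^{1/k}$. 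Your proof needs an analogous case split (or a uniformly smaller $\rho$ together with a correspondingly sparser hopset) to make the claimed total update time hold for all edge densities.
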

\begin{proof}
We construct and maintain a $(\beta, 1+\frac{\epsilon}{3})$-hopset using Theorem \ref{thm:main_hopset}. If $m= n^{1+\Omega(1)}$ we can set $\rho=\frac{1}{k}$, and we set the hopset size parameter $\nu$ to a small constant\footnote{The choice of size parameter impact the polylogarithmic factors. Hence one option is to choose the smallest constant such that the graph size is not smaller than the hopset size.} such that $O(n^{1+\nu})=O(m)$ ($\nu =\frac{1}{2^k-1}$). If $m=n^{1+o(1)}$, we set $\rho=\frac{1}{2k})$. In both cases time required for maintaining a hopset is $\tilde{O}(mn^{1/k} \cdot (1/\epsilon)^{O(1)})$. We get hopbound $\beta= O(\log n/\epsilon)^{\log (1/\nu) +1/k+1}= \textit{polylog }(n)$. Hence we can also maintain the distance oracle in $\tilde{O}(mn^{1/k})$ total update time. The stretch will be $(2k-1)(1+\epsilon)$, and the query time remains the same as the static query time, which is $O(k)$.
\end{proof}

\paragraph{Distance Sketches.} As shown in \cite{TZ2005}, the bunch $B(v)$ of each node $v$ can also be seen as a distance sketch of expected size $O(kn^{1/k})$. They show that only using $B(u)$ and $B(v)$ we can approximate the distance of $u$ and $v$. Since we are maintaining distances using a hopset, we will lose an additional factor of $(1+\epsilon)$ in the stretch. Note that we need access to $G$ in order to \textit{update} the sketches, but the rest of the graph is not needed in order to \textit{query} the distances. The update time is the same as Theorem \ref{thm:oracle_main}, so we do not repeat the statement. 
\appendix

\bibliographystyle{alpha}
\bibliography{dynamic_refs}
\section{Static Hopset Properties} \label{app:static_hopset}
In this section, we will briefly overview the (static) hopset algorithm of \cite{elkin2019RNC} (which is similar to \cite{huang2019}). Given a weighted graph $G=(V,E)$, and a parameter $k$, we first construct a hopset of size $O(n^{1+\frac{1}{{2^k}-1}})$ and hopbound $O(k/\epsilon)^{k}$. We then modify the algorithm in order to get a better running time at the cost of a worse stretch.	 
 We define sets $V=A_0 \supseteq A_1 \supseteq ... \supseteq A_k=\emptyset$. Let $\nu =\frac{1}{2^k-1}$. Each set $A_{i+1}$ is obtained by sampling each element from $A_i$ with probability $q_i=n^{-2^i \cdot \nu}$. Hence it can be shown that $E[|A_i|]= n^{1-2^{i-1}\nu}$. 
 
 For every vertex $u \in A_i\setminus A_{i+1}$, let $p(u) \in A_{i+1}$ be the closest node to set $A_{i+1}$. The bunch is set to $B(u)= \{ v \in A_i: d(u,v) < d(u,A_{i+1})\} \cup \{p(u)\}$. Also, we define a set $C(v)$ called the cluster of $v$ such that if $v \in B(u)$ then $u \in C(v)$. It can easily be shown that clusters are \textit{connected} in a sense that if a node $v \in C(u)$ then any node $z$ on the shortest path between $v$ and $u$ is also in $C(u)$. As we will see, this property is important for bounding the running time. The hopset is consisted of adding edges $(u,v)$ where $v \in B(u)$, and setting the weight to be $d(u,v)$.

 The algorithm described will lead to a $((k/\epsilon)^k, \epsilon)$-hopset, but the running time can be as large $O(mn)$. In order to resolve this, \cite{elkin2019RNC} proposed an algorithm using modified sampling probabilities of $q_i=\max(n^{-2^i \cdot \nu}, n^{-\rho})$.
Using this approach, the number of iterations becomes $k+1/\rho+1$, but the hopbound is also increase to $O(\frac{k+1/\rho+1}{\epsilon})^{O(k+1/\rho+1)}$.  
We briefly review some the properties of this hopset algorithm as discussed in \cite{elkin2019RNC, huang2019}, and will then explain how \cite{elkin2019RNC} modifies the algorithm to improve the running time. 

One important component of this algorithm is the \textit{modified Dijsktra's algorithm} that we will also utilize in our dynamic algorithms, and thus we breifly review it. This algorithm was presented by Thorup-Zwick \cite{TZ2005} and it allows us to construct the bunches and clusters for level $i$ in $O((m + n\log n)/q_i)$ (expected) time. 
At a high-level this is done by making a modification to Dijkstra. In the original Dijkstra for each source $u \in A_i \setminus A_{i+1}$, at each iteration we consider an unvisited vertex $v$, and relax each incident edge $(v,z)$ by setting ${d}(u,z) := \min \{d(u,z), d(u,v)+ w(v,z)\}$. But in the modified algorithm this is done only if $d(u,v) + w(v,z) < d(z, A_{i+1})$. In other words each node $z$ only ``participates" in a shortest-path exploration from a source $u$ only if $z \in B(u)$. Note that if $z \in B(u)$, all the nodes on the shortest path between $u$ and $z$ are considered. This will let us bound the running time by expected size of $B(u)$.
\subsection{Properties} 

\paragraph{Size.} We rely on the fact that the hopset size is not denser than the original graph. This is why our main bounds do not hold for very sparse graphs. For analyzing the size, \cite{elkin2019RNC} argues that for each $u \in A_i \setminus A_{i+1}$ we have $E[|B(u)|] \leq 1/q_i$ for the following reason: Consider an ordering of vertices in $A_i$ based on their distance to $u$. By definition, size of $B(u)$ is bounded by the number of vertices in this ordering until the first vertex in $A_{i+1}$ is visited. This corresponds to a geometric random variable with parameter $q_i$ and thus in expectation it is $1/q_i =n^{2^i \nu}$. Hence for all $i$ the number of edges added is in expectation
\[\sum_{i=1}^{k-2} E[|A_i|]n^{2^i \cdot \nu } = O(kn^{1+\nu}). \]

\paragraph*{Modified Dijsktra's algorithm.}  For an efficient construction of these hopsets, \cite{elkin2019RNC} used the \textit{modified Dijsktra's algorithm}, which was proposed by Thorup-Zwick \cite{TZ2005}. This algorithm the bunches for level $i$ can be constructed in $O(m + n\log n)/q_i$. At a high-level this is done by making a modification to Dijkstra. In the original Dijkstra's, for each source $u \in A_i \setminus A_{i+1}$, at each iteration we consider an unvisited vertex $v$, and ``relax" each incident edge $(v,z)$ by setting ${d}(u,z) = \min \{d(u,z), d(u,v)+ w(v,z)\}$. But in the modified algorithm this is done only if $d(u,v) + w(v,z) < d(z, A_{i+1})$. In other words each node $z$ only ``participates" in a shortest-path exploration from a source $u$ only if $z \in B(u)$. Note that if $z \in B(u)$, all the nodes on the shortest path between $u$ and $z$ are considered. Since $|B(u)| \leq \frac{1}{q_i}$, this allows us to bound the running time by $O(m n^{\rho})$.

\subsection{Hopbound and Stretch.} \label{app:hopbound_stretch}
In this section we sketch the anlysis of the hopbound and stretch of a simple static hopset algorithm, that does not bound the sampling probabilities by $n^{-\rho}$. This leads to (almost) optimal size and hopbound tardeoff but has a larger construction time. 
The extension to the more efficient variant will be straightforward. 
The following lemma was proved by \cite{elkin2019RNC, huang2019}. We give a proof sketch here. We use a similar idea in our dynamic hopset construction (in combination with monotone ES tree and scaling), and hence some of the missing details can be found in proof of Theorem \ref{thm:single_stretch}.
\begin{lemma} \label{lem:hopbound}
Fix $0 < \delta \leq 1/{8k}$, and consider a pair $x,y \in V$. Then for $0 \leq i \leq k+1$ we have either of the following conditions:
\begin{itemize}
    \item $d^{((3/\delta)^i)}_{G \cup {H}}(x,y) \leq (1+8\delta i) d_{G}(x,y)$ or,
\item There exists $z \in A_{i+1}$ such that, 
$d^{((3/\delta)^i)}_{G \cup H} (x,z)\leq  2 d_{G}(x,y).$
\end{itemize} 
\end{lemma}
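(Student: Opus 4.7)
I would prove Lemma~\ref{lem:hopbound} by induction on $i$, mirroring the argument that already appears (in a much more intricate decremental form) in Theorem~\ref{thm:single_stretch}. The base case $i=0$ splits into three subcases. If $y \in B(x)$, then the edge $(x,y) \in H$ has weight $d_G(x,y)$, so condition~(1) holds with one hop. If $x \in A_1$, then $z = x$ trivially satisfies condition~(2). Otherwise $x \in A_0 \setminus A_1$ and $p(x) \in A_1$; since $y \notin B(x)$ we have $d_G(x,p(x)) \leq d_G(x,y)$ and the edge $(x,p(x)) \in H$, so setting $z = p(x)$ gives condition~(2).

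For the inductive step, assume the claim for $i$ and let $\pi$ be a shortest $x$--$y$ path in $G$. Partition $\pi$ into $\lceil 1/\delta \rceil$ consecutive segments $[u_a, v_a]$, each of length at most $\delta \cdot d_G(x,y)$, and apply the inductive hypothesis to each $(u_a, v_a)$ pair. If every segment satisfies condition~(1) at level $i$, then concatenating the guaranteed $(3/\delta)^i$-hop paths, together with the single $G$-edges connecting consecutive segments, yields a path of at most $(1/\delta)\bigl((3/\delta)^i + 1\bigr) \leq (3/\delta)^{i+1}$ hops and total stretch $(1 + 8\delta i)\,d_G(x,y)$, which is within the condition~(1) budget at level $i+1$.

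Otherwise, let $[u_l,v_l]$ (respectively $[u_r, v_r]$) be the first (resp.\ last) segment on which condition~(2) fires at level $i$, and let $z_l, z_r \in A_{i+1}$ be the witnesses, so that $d^{((3/\delta)^i)}_{G \cup H}(u_l,z_l) \leq 2\,d_G(u_l,v_l)$ and similarly for $z_r$. The analysis then splits according to whether one of these witnesses lies in the other's bunch. If $z_r \in B(z_l)$ (or symmetrically), the hopset contains the edge $(z_l,z_r)$ of weight $d_G(z_l,z_r)$, which by the triangle inequality is at most $2\,d_G(u_l,v_l) + d_G(u_l, v_r) + 2\,d_G(u_r,v_r) \leq 4\delta\,d_G(x,y) + d_G(u_l,v_r)$. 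Concatenating the condition~(1) paths on segments before $[u_l,v_l]$ and after $[u_r,v_r]$ with the two condition~(2) sub-paths and the single shortcut edge $(z_l,z_r)$ produces a path of at most $(1/\delta)(3/\delta)^i + 2(3/\delta)^i + 1 \leq (3/\delta)^{i+1}$ hops whose length, after summing, becomes $(1 + 8\delta(i+1))\,d_G(x,y)$, using $\delta \leq 1/(8k)$ to keep the error term controlled. If instead $z_r \notin B(z_l)$, then by definition of bunches $d_G(z_l, p(z_l)) \leq d_G(z_l, z_r)$ and $p(z_l) \in A_{i+2}$; setting $z := p(z_l)$ (or $z := z_l$ when $z_l \in A_{i+2}$ already), the same triangle inequality gives $d^{((3/\delta)^i + 1)}_{G \cup H}(x,z) \leq 2\,d_G(x,y)$, furnishing condition~(2) at level $i+1$.

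The main obstacle, as usual in this family of arguments, is the bookkeeping needed to ensure the hop count stays bounded by $(3/\delta)^{i+1}$ and the additive stretch by $8\delta(i+1)\,d_G(x,y)$ simultaneously across all $i \leq k+1$; this is exactly what forces the choice $\delta \leq 1/(8k)$. The other subtle point is the ``$2 d_G(x,y)$'' bound in condition~(2): one must verify that the detour $x \to u_l \to z_l \to z$ costs at most $2 d_G(x,y)$ independent of $i$, which follows because $d_G(x, u_l) \leq d_G(x,y)$ and $d_G(u_l, z_l) \leq 2 d_G(u_l, v_l) \leq 2\delta d_G(x,y)$, so the constant does not accumulate with the induction. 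Everything else is a routine concatenation of triangle inequalities analogous to the displayed chain in the proof of Theorem~\ref{thm:single_stretch}.
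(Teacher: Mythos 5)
Your argument follows the paper's own proof of Lemma~\ref{lem:hopbound} step by step: the same three-way base case, the same $1/\delta$-segmentation of the shortest path, the same dichotomy on $z_r\in B(z_l)$ versus $z_r\notin B(z_l)$ with $z:=p(z_l)$, and the same role for $\delta\leq 1/(8k)$. One small slip in your closing remark: the verification of condition~(2) cannot stop at $d_G(x,u_l)\leq d_G(x,y)$ and $d_G(u_l,z_l)\leq 2\delta\,d_G(x,y)$ — you must also add the weight of the final hopset edge $(z_l,z)$, which is $d_G(z_l,p(z_l))\leq d_G(z_l,z_r)$ and is the dominant contribution (roughly $d_G(x,y)$), and the prefix $x\to u_l$ actually costs $(1+8\delta i)d_G(x,u_l)$, so the total is about $(1+8\delta i+6\delta)d_G(x,y)$; it is exactly here that $\delta\leq 1/(8k)$ is needed to keep the sum below $2d_G(x,y)$, so the constant \emph{does} accumulate but stays bounded.
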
 
\begin{proof}[Proof sketch]
This can be shown by an induction on $i$. For the base case of $i=0$, we have three cases. If $y \in B(x)$ then edge $(x,y)$ is in the hopset, and the first condition of the lemma holds. Otherwise if $x \in A_1$, then $z=x$ trivially satisfies the second condition. Otherwise we have $x \in A_0/A_1$, and by setting $z=p(x)$ we know that there is an edge $(x,z) \in H$ such that $d(x,z) \leq d(x,y)$ by definition of $p(x)$, and hence the second condition holds. 

Now assume the claim holds for $i$. Consider the shortest path $\pi(x,y)$ between $x$ and $y$. We divide this path into $1/\delta$ segments og length roughly $\delta d_G(x,y)$ (up to rounding). Using triangle inequality on all segments we then use the induction hypothesis on each segment. If for all the segments the first condition holds for $i$, then there is a path of $(3/\delta)^{i+1}$-hops consisted of the hopbounded path on each segment. We can show that this path satisfies the first condition for $i+1$.
 Now, assume that there are at least two segments for which the first condition does not hold for $i$. Then let $[u_\ell, v_\ell]$ be the first such segment (i.e.~closest to $x$) and let $[u_r, v_r]$ be the last such segment.
Then by inductive hypothesis there are $z_\ell, z_r \in A_{i+1}$ such that:
\begin{itemize}
\item  $d_{G \cup H}^{((3/\delta)^i)} (u_\ell, z_\ell) \leq 2d(u_\ell, v_\ell)$, \textit{and,}
\item $d_{G \cup H}^{((3/\delta)^i)} (v_r, z_r) \leq 2d(u_r, v_r)$
\end{itemize}
Again, we consider two cases. First, in case $z_r \in B(z_\ell)$, we have added a single hopset edge between $z_r$ and $z_\ell$ with weight $d(z_r, z_\ell)$. By applying the inductive hypothesis on segments before $[u_\ell, v_\ell]$, and after $[u_r,v_r]$, we have a path with at most $(3/\delta)^i$ for each of these segments, satisfying the first condition for the endpoints of the segment. Also, we have a $2(3/\delta)^i +1$-hop path going through $u_{\ell}, z_{\ell}, z_r, v_r$ that satisfies the first condition for $u_{\ell}, v_r$. Putting all of these together, we can show that there is a path of hopbound $(3/\delta)^{i+1}$ satisfying the first condition. To get this we need to use the fact that the length of each segment is at most $\delta \cdot d(x,y)$. We have,
\begin{align*}
    d^{(3/ \delta)^{(i+1})}_{G \cup H} (x,y) &\leq \sum^{\ell-1}_{j=1} [d_{G \cup H}^{((3/\delta)^i)} (u_j, v_j) +d^{(1)}_G(v_j,u_{j+1})] +d^{((3/\delta)^i)}_{G \cup H} (u_\ell, z_\ell)\\
    &+d^{(1)}_H (z_\ell, z_r) + d^{((3/\delta)^i)}_{G \cup H} (u_r, v_r)  +d^{(1)}_H (v_r, u_{r+1})\\
    &+\sum^{(1/\delta)}_{j=r+1} [d_{G \cup H}^{((3/\delta)^i)} (u_j, v_j) +d^{(1)}_{G}(v_j,u_{j+1})]\\
    &\leq 8 \delta d_G(x,y) + (1+ 8 \delta i) d_G(x,y)\\
    &\leq (1+ 8 \delta(i+1)) d_G(x,y)
\end{align*}

Finally, consider the case where $z_r \not \in B(z_{\ell})$. If $z_{\ell} \not \in A_{i+2}$, we consider $z =p(z_\ell)$. By definition we have added the edge $(z_{\ell}, z)$ to the hopset, we can show that the second condition holds. We use similar reasoning as before and also use the fact that we set $\delta <1/{8k}$ to show that item 2 holds in this case. The only remaining case is when $z_{\ell} \in A_{i+2}$, in a similar but simpler reasoning follows by setting $z=z_{\ell}$.
\end{proof} 

We can now set $\delta =\Theta(k/\epsilon)$ in Lemma \ref{lem:hopbound}, and since $A_{k}=\emptyset$, for $i= k-1$ only the first condition can hold. Therefore we get a hopbound of $\beta= \Theta(k/\epsilon)^{k}$.

\paragraph{Hopbound in the efficient variant.} For more efficient construction, we considered a two phase algorithm. For the first phase we use similar reasoning as Lemma \ref{lem:hopbound}, but in the second phase the parameters change. The algorithm will require more \textit{iterations} that will impact the overall hopbound. We require $k+1/\rho+1$ iterations overall. In the second phase we have $\delta'= (k+1/\rho)/\epsilon$ and thus we have overall hopbound of $O(\frac{k+1/\rho}{\epsilon})^{k +1/\rho+1}$. 

By putting everything we have the following guarantees for the static hopset:
\begin{theorem}[\cite{elkin2019RNC}]
There is an algorithm that given a weighted and undirected graph $G=(V,E)$, and $2 \leq k \leq \log \log n -2$, $\frac{2}{2^k-1} < \rho <1$ computes a $(\beta, \epsilon)$-hopset of size $O(n^{1+\frac{1}{2^k-1}})$, where $\beta= O( ( \frac{k+1/\rho}{\epsilon}))^{k+1/\rho+1}$. It runs in $O(\frac{n^\rho}{\rho}( m +n\log n))$ expected time.
\end{theorem}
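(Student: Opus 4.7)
The plan is to verify the three guarantees of the theorem (size, stretch/hopbound, running time) for the construction already described in Section~\ref{sec:static_hopset} and Appendix~\ref{app:static_hopset}, namely: sample nested sets $V=A_0\supseteq A_1\supseteq\cdots\supseteq A_{k+1/\rho+1}=\emptyset$ where each $A_{i+1}$ keeps each element of $A_i$ independently with probability $q_i=\max(n^{-2^i\nu},n^{-\rho})$, compute for every $u\in A_i\setminus A_{i+1}$ its pivot $p(u)\in A_{i+1}$ and bunch $B(u)=\{v\in A_i : d_G(u,v)<d(u,A_{i+1})\}$, and add to $H$ each edge $(u,v)$ for $v\in B(u)\cup\{p(u)\}$ with weight $d_G(u,v)$.

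\paragraph{Size.} First I would bound $\mathbb{E}|B(u)|$ for $u\in A_i\setminus A_{i+1}$ by the standard geometric argument: order the vertices of $A_i$ by increasing distance from $u$; the size of $B(u)$ is precisely the index of the first vertex in this ordering that lies in $A_{i+1}$, which is a geometric random variable with parameter $q_i$, and hence has expectation $1/q_i\le n^{2^i\nu}+n^{\rho}$. Combined with $\mathbb{E}|A_i|\le n\cdot\prod_{j<i}q_j$, a direct summation over $i=0,\dots,k+1/\rho$ shows that the geometric phase (while $q_i=n^{-2^i\nu}$) contributes $O(n^{1+\nu})$ expected edges per level, and the plateau phase (where $q_i=n^{-\rho}$ and $|A_i|$ shrinks by a factor $n^{-\rho}$ per level) contributes a geometric series also dominated by $O(n^{1+\nu})$. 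Summed over all $k+1/\rho+1$ levels, this yields $O(n^{1+1/(2^k-1)})$ hopset edges in expectation.

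\paragraph{Running time.} Next I would analyze the modified Dijkstra procedure of Thorup--Zwick: when rooted at $u\in A_i\setminus A_{i+1}$, it only relaxes an edge $(x,y)$ with $d(u,x)+w(x,y)<d(y,A_{i+1})$, which restricts the explored region to (the shortest-path tree into) $B(u)\cup\{p(u)\}$. By the connectivity property of clusters, the total edge work per level is proportional to $\sum_{v\in V}|\{u\in A_i: v\in B(u)\}|\cdot\deg(v)$; since each $v$ lies in at most $1/q_i$ bunches in expectation by the same geometric argument applied to $v$ (with the ordering of $A_i$ by distance from $v$), the total time per level is $O((m+n\log n)/q_i)$. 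Summing over the levels, the plateau phase dominates with $q_i=n^{-\rho}$ for $\Theta(1/\rho)$ levels, giving $O(\frac{n^\rho}{\rho}(m+n\log n))$ overall.

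\paragraph{Hopbound and stretch (the main obstacle).} The hardest step is the inductive argument sketched in Lemma~\ref{lem:hopbound}. I would fix $\delta$ small (to be tuned) and prove by induction on $i$ that for every $x,y\in V$ either (i) $d^{((3/\delta)^i)}_{G\cup H}(x,y)\le(1+8\delta i)d_G(x,y)$ or (ii) some $z\in A_{i+1}$ satisfies $d^{((3/\delta)^i)}_{G\cup H}(x,z)\le 2d_G(x,y)$. The base case $i=0$ is immediate from the definitions of $B(x)$ and $p(x)$. For the inductive step, I would partition $\pi(x,y)$ into $1/\delta$ segments of length $\le\delta d_G(x,y)$, apply the hypothesis to each, and do a case analysis on which segments fall under (ii): if $[u_\ell,v_\ell]$ and $[u_r,v_r]$ are the first and last such segments, with witnesses $z_\ell,z_r\in A_{i+1}$, then either $z_r\in B(z_\ell)$ so the hopset edge $(z_\ell,z_r)$ directly shortcuts the middle, yielding condition (i) at level $i+1$ with an additional $8\delta$ slack from the two boundary segments; or $z_r\notin B(z_\ell)$, in which case $p(z_\ell)\in A_{i+2}$ and the edge $(z_\ell,p(z_\ell))$ witnesses condition (ii) at level $i+1$ (using $d(z_\ell,p(z_\ell))\le d(z_\ell,z_r)$). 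Because $A_{k+1/\rho+1}=\emptyset$, condition (ii) must fail at the last level, so condition (i) must hold with final stretch $1+8\delta(k+1/\rho+1)$. Setting $\delta=\Theta(\epsilon/(k+1/\rho+1))$ gives stretch $1+\epsilon$ and hopbound $\beta=(3/\delta)^{k+1/\rho+1}=O\!\left(\tfrac{k+1/\rho}{\epsilon}\right)^{k+1/\rho+1}$, as claimed. The delicate part is verifying that the concatenation of sub-paths at level $i$ indeed fits within $(3/\delta)^{i+1}$ hops (since each segment contributes $(3/\delta)^i$ hops, the middle shortcut contributes one hop, and there are $1/\delta$ segments plus $O(1)$ connecting edges), and checking that condition (ii) in the $z_\ell\in A_{i+2}$ subcase requires only picking $z=z_\ell$.
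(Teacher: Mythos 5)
Your proposal is correct and follows essentially the same route as the paper's Appendix~\ref{app:static_hopset}: the geometric bound $\mathbb{E}|B(u)|\le 1/q_i$ for the size, the Thorup--Zwick modified Dijkstra for the running time, and the $(3/\delta)^i$-hop induction of Lemma~\ref{lem:hopbound} extended to $k+1/\rho+1$ levels with $\delta=\Theta(\epsilon/(k+1/\rho+1))$ for the stretch and hopbound. Your single-$\delta$ treatment is a cleaner version of what the paper phrases as a ``two-phase'' analysis, but the underlying argument is the same.
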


\section{Details Omitted from Section \ref{sec:restricted_hopset}} \label{app:restricted_hopset}
In this section, we review the restricted hopset algorithm that is mainly based on algorithm of \cite{roditty2004}, and alayze the running time.

\subsection{Algorithm of \cite{roditty2004}}
In this section, we review the algorithm of \cite{roditty2004}, which allows us to maintain restricted hopset as stated (but with additional property of handling certain edge insertions) can also be found in Algorithm \ref{alg:restricted_hopset} in Section \ref{sec:new_hopset}.

We sample sets $V=A_0 \supseteq A_1 \supseteq ... \supseteq A_{i(\rho)}=\emptyset$, where $i(\rho)= k+1/\rho+1$ once and they remain the same during the updates.
Next, we need to maintain values $d(v, A_i), 1 \leq i \leq k-1$ for all nodes $v \in V$. This can be performed by computing a shortest path tree rooted at a dummy node $s_i$ connected to all nodes in $A_i$. Let $\hat{d}=(1+\epsilon)d$. We can use the Even-Schiloach \cite{ES} algorithm up to depth $\hat{d}$ to compute all these distances in $O(\hat{d}m)$ time. The pivots $p(v), \forall v \in V$ can also be maintained in this process.

 \paragraph{Maintaining the clusters.} Recall that for $z \in A_i\setminus A_{i+1}$ we have $v \in C(z)$ if and only if $d(z,v) < d(v,A_{i+1})$. After each deletion, for each node $v$ and the cluster centers $z$ we first check whether the distance $d(z,v)$ has increased.  If $d(z,v) \geq d(v, A_{i+1})$, $v$ will be removed from $C(z)$. The more subtle part is adding nodes to new clusters. For each $0 \leq i <k$, we define a set $X_{i}$ consisted of all vertices whose distance to $A_i$ is increased as a result of a deletion, but where this distance is still at most $\hat{d}$. The sets $X_{i}$ can be computed while maintaining $d(v,A_i)$.
 
 Note that a node $v$ would join $C(w)$ only after an increase in $d(v,A_{i+1})$. Using this observation, after each deletion for every $v \in X_{i+1}, z \in B_i(u) \setminus B_i(v)$, and each edge $(u,v) \in E$ we check if $d(z,u) +w(u,v) < d(v, A_{i+1})$. If yes, then $v$ joins $C(z)$. We push $v$ to a priority queue $Q(z)$ with key $d(z,u)+w(u,v)$. If $v$ was already in the queue the key will be updated if this distance is smaller than the existing estimate. In this case we \textit{mark} $v$. The marked nodes join clusters $z$, but there may be other nodes that also need to join $C(z)$ as a result of this change.
 
 Hence after this initial phase, for each $z \in A_i \setminus A_{i+1}$ where $Q(z) \neq \emptyset$, we run the modified Dijkstra's algorithm. Recall that in the modified Dijkstra's algorithm when we explore neighbors of a node $x$, we only relax an edge $(x,y)$ if $d(x,y)+w(x,y) < d(x,A_i)$. Then \cite{roditty2004} show that this process correctly maintains the clusters. We then repeat this process for all the $k+1/\rho+1$ iterations. We add a hopset edge between each $z \in A_i \setminus A_{i+1}$ and all nodes $v \in C(z)$ and set the weight of this edge to $w(v,z)=d_G(v,z)$.


\subsection{Proof of Lemma \ref{lem:dec_bound_cluster}}
Next, we argue that using the above algorithm (with the modified probabilities), we can get the following extension of a result by \cite{roditty2004} that is crucial in bounding the total update time throughout this paper:
 \begin{lemma}
For every $v \in V$ and $0 \leq i \leq k-1$, the expected total number of times the edges incident on $v$ are scanned over all trees for each $w \in A_i$ (i.e.~trees on $C(w)$) is $O(\hat{d}/q_i)$, where $q_i$ is the sub-sampling probability.
 \end{lemma}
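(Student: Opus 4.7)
The plan is to bound the expected edge-scan work at $v$ through a two-step decomposition: (i) the scan work inside a single ES-tree $T(w)$, and (ii) the number of distinct trees $T(w)$ in which $v$ ever participates. For step (i), fix $w \in A_i \setminus A_{i+1}$ and consider the monotone ES-tree $T(w)$ maintained up to depth $\hat{d} = (1+\epsilon)d$. The level $L_w(v)$ of $v$ in $T(w)$ is non-decreasing (distances only rise in the decremental setting, and the monotone variant suppresses any decreases), takes integer values in $[0,\hat{d}]$, and is preserved across repeated exits and re-entries of $v$ into $C(w)$. Each level increase triggers one scan of $v$'s incident edges, so the total number of such scans in a single $T(w)$ is $O(\hat{d})$.

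For step (ii), I would bound the expected number $N_v = |\{w \in A_i \setminus A_{i+1} : v \in C_t(w) \text{ for some } t\}|$ of distinct clusters ever containing $v$. By Definition~\ref{def:bunches}, $v \in C_t(w)$ iff $w$ precedes every element of $A_{i+1}$ when $A_i$ is sorted by $d_t(v,\cdot)$. Since the adversary is oblivious, the update sequence may be fixed first, pinning down the trajectories $t \mapsto d_t(v,w)$ for every $w \in A_i$; the membership in $A_{i+1}$ is then revealed by deferred decisions, each $w \in A_i$ landing in $A_{i+1}$ independently with probability $q_i$. At any \emph{single} time $t$, a geometric-tail argument yields $\mathbb{E}[|B_t(v)|] \leq 1/q_i$. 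Following the charging argument of Roditty and Zwick~\cite{roditty2004}, this pointwise bound extends to the union over all times as $\mathbb{E}[N_v] = O(1/q_i)$, by using the monotonicity of $t \mapsto d_t(v,\cdot)$ to charge each re-entry of $v$ into some $C(w)$ against a distance increase along a carefully chosen reference ordering of $A_i$.

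Multiplying the two factors yields the lemma: the expected total number of times edges incident to $v$ are scanned across all trees $T(w)$, $w \in A_i$, is at most $O(\hat{d}) \cdot \mathbb{E}[N_v] = O(\hat{d}/q_i)$. The chief obstacle is precisely the ``across time'' bound $\mathbb{E}[N_v] = O(1/q_i)$: a naive union bound over the $O(\hat{d})$ possible distance values would incur an extra factor of $\hat{d}$, which would in turn inflate the running time in Theorem~\ref{thm:restricted_hopset}. Avoiding this loss requires simultaneously exploiting the independence of the $A_{i+1}$-sampling and the monotonicity of decremental distances, which is the technical core of the Roditty--Zwick analysis we adapt here.
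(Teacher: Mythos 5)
Your decomposition into (scans per tree) $\times$ (number of distinct trees ever containing $v$) is not the route the paper takes, and it has a real gap in step (ii): the claim $\mathbb{E}[N_v] = O(1/q_i)$ is false in general. The pointwise bound $\mathbb{E}[|\{w : v \in C_t(w)\}|] = O(1/q_i)$ holds at every fixed $t$, but the union over time can be a factor $\hat{d}$ larger. Here is the obstruction: $v$ joins $C(w)$ at time $t$ only if $d_{t-1}(v,w) \geq d_{t-1}(v,A_{i+1})$ while $d_t(v,w) < d_t(v,A_{i+1})$, which forces $d(v,A_{i+1})$ to strictly increase; each such increase can admit, in expectation, $\Theta(1/q_i)$ fresh centers into $v$'s cluster set, and $d(v,A_{i+1})$ can increase $\hat{d}$ times. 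A concrete worst case: arrange $A_i$ into $\hat{d}$ ``rows'' of $\Theta(1/q_i)$ vertices each, with row $j$ at initial distance $j$, and at step $t$ push all of row $t$'s distances to $\hat{d}$. Then $d(v,A_{i+1})$ advances by $\Theta(1)$ per step (each row contains an $A_{i+1}$ vertex with constant probability), and each advance adds $\Theta(1/q_i)$ new centers to $\bigcup_t\{w : v\in C_t(w)\}$. So $\mathbb{E}[N_v] = \Theta(\hat{d}/q_i)$. Plugging this correct bound into your product gives $O(\hat{d}) \cdot O(\hat{d}/q_i) = O(\hat{d}^2/q_i)$, off by a factor $\hat{d}$ — which is precisely the inflation you were trying to avoid.

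The paper's proof does not factor through $N_v$ at all. It splits the scans into two kinds and handles each directly. Scans due to $v$ entering a cluster are bounded by observing that $d(v,A_{i+1})$ increases at most $\hat{d}$ times and each increase admits $O(1/q_i)$ expected new centers, giving $O(\hat{d}/q_i)$. Scans due to a distance increase while $v$ stays in $C(w)$ are bounded by the $(t,j)$ charging: sort $A_i$ by $d_t(v,\cdot)$, note $\Pr[v\in C_t(w_{t,j})] \leq (1-q_i)^{j-1}$, and observe that for each fixed rank $j$ the $j$-th smallest distance is non-decreasing in $t$ and bounded by $\hat{d}$, so the pair $(t,j)$ is charged at most $\hat{d}$ times. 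Summing $\hat{d}\sum_{j\geq 1}(1-q_i)^{j-1} = \hat{d}/q_i$ gives the second $O(\hat{d}/q_i)$. The crucial move is charging to a rank rather than to a vertex identity, which lets the geometric tail and the depth bound multiply once rather than twice. Your step~(i) per-tree bound of $O(\hat{d})$ is fine on its own, but once you multiply by the number of distinct trees you have already committed to losing the extra $\hat{d}$ factor; to salvage the lemma you would need to show that the total scan work summed over trees is $O(\hat{d}/q_i)$ directly, which is exactly what the paper's $(t,j)$ argument does without ever introducing $N_v$.
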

 \begin{proof}
 Let $w \in A_i \setminus A_{i+1}$. The edges of a node $v \in V$ is scanned when $v$ joins $C(w)$, and any time $d(v,w)$ is increased until $v$ leaves $C(w)$. 
 We start by analyzing the total cost of joining new clusters. Recall that $C(w)=\{ v \in V: d(v,w) <d(w, A_{i+1}) \}$. Since we are in a decremental setting, $v$ can join $C(w)$ only when $d(w, A_{i+1})$ increases, and this can happen at most $\hat{d}$ times \textit{per tree}. As in the static setting, at any time, $v$ joins at most $\tilde{O}(1/q_i)$ trees, since the number of clusters $v$ belongs to is dominated by a geometric random variable with parameter $q_i$. We will use a similar argument for analyzing the total number of clusters each node belongs to over time.
 Hence the total time for nodes joining new clusters is $\tilde{O}(\hat{d}m/q_i)$. 
 Next, we consider the case when after the deletion the distance between $v$ and the center increases. This will let us bound the number of times the edges incident on $v$ are scanned for a tree rooted at some node in $A_i$.
 Let $d_t(w,v)$ denote the distance between $v$ and $w$ at time $t$ (after $t$ deletions), and let $C_t(w)$ denote the cluster rooted at $w$ at time $t$. We bound the number of indices $t$ for which $v \in C_t(w)$ and $d_t(w,v) < d_{t+1}(w,v)$.
Let $w_{t,1}, w_{t,2},...$ be the sequence of nodes in $A_i$ sorted based on their distance from $v$ at time $t$. Ties will be broken by ordering based on pairs $(d_t(v,w),d_{t+1}(v,w))$, i.e.~nodes with the same distance from $v$ at time $t$ will be sorted based on their distance at time $t+1$. This ensures that if $d_t(v,w_{t,j}) < d_{t+1}(v,w_{t,j})$, then $d_t(v,w_{t,j}) < d_{t+1}(v,w_{t+1,j})$. Same as before $\Pr[v \in C_t(w_{t,j})] \leq (1-q_i)^{j-1}$, since $v \in C_t(w_t,j)$ only if for all $j' <j$ we have $w_{t,j'} \in A_i \setminus A_{i+1}$. Let $I=\{(t,j) \mid d_t(v,w_{t,j}) < d_{t+1}(v,w_{t,j}) \leq \hat{d}\}$. Then since edges incident to $v$ are scanned only if their distance increases, the expected number of times they are scanned over all trees rooted at centers in $A_i$ is at most $\sum_{(t,j)} \Pr[v \in C_t(w_{t,j})]$. Also, by definition for a fixed $j$ there can be at most $\hat{d}$ pairs of form $(t,j)$. In other words, the distance to the $j$-th closest vertex can increase at most $\hat{d}$ times, and hence,
\[\sum_{(t,j)} \Pr[v \in C_t(w_{t,j})] \leq \hat{d }\sum_{j \geq 1} (1-q_i)^{j-1} \leq \hat{d}/q_i.\]  
\end{proof}

\section{Monotone ES tree}\label{app:monotone_es}
In this section, we explain the monotone ES tree idea and how it can be used for maintaining single-source shortest path up to a given depth $D$. Using the monotone ES tree ideas may impact the stretch, and clearly do not apply to all types of insertions but only for insertion of certain structural properties.  In Section \ref{sec:ss_stretch}, we will prove that specifically for the insertions in our restricted hopset algorithm the stretch guarantee holds.
We show how to handle edge insertions by using a variant of the monotone ES-tree algorithm~\cite{henzinger2014} (and further used in the hopset construction of \cite{henzinger2016}). This algorithm is given as Algorithm~\ref{alg:estree}. 
The idea in a monotone ES tree is that if an insertion of an edge $(u,v)$ causes the level of a node $v$ to decrease, we will not decrease the level. In this case we say the edge $(u,v)$ and the node $v$ are \textit{stretched}. More formally, a node $v$ is stretched when $L(v) > \min_{(x,v) \in E} \dest(x) + w(x, v)$.

We observe multiple properties of the monotone ES tree algorithm as observed by \cite{henzinger2014, henzinger2016} that will be helpful in analyzing the stretch later:
\begin{itemize}
    \item The level of a node never decreases.
    \item Only an inserted edge can be stretched.
    \item While an edge is stretched, its level remains the same. In other words, a stretched edge is not going to get stretched again unless it is deleted (or get a distance increase).
\end{itemize}

Also observe that we never underestimate the distances. This is clearly true for any edge weights obtained by the rounding in Lemma \ref{lem:rounding}. It is also easy to see this is true for the stretched edges for the following reason: For any node $v$, the algorithm maintains the invariant that $L(s,v) \geq min_{(x,v) \in E}L(s,x)+w(x,v)$. In other words, $L(s,v)$ is either an estimate based on rounding that is at least $d_G(s,v)$ or it is larger than such an estimate.

\begin{algorithm} [h]
\SetKwProg{Fn}{Function}{}{}

\Fn{\textsc{Init}$(G,s, D)$}{

  $E:=E(G)\cup \{e_v=(s,v):v\in V(G)\setminus\{s\},w(e_v)=D+1\}$ \tcc{This ensures that distances are maintained up to level $D$}
  
  \For{$v\in V$}{
    $L(s,v):=0$
  }
  
  \For{$v\in V$}{
    \textsc{Update}($T(s),v$)
  }
}

  \vspace{2mm}
  
    


\Fn{\textsc{InsertEdge}$(T(s),(a,b),c)$}{ \tcc{Insert an edge in the tree rooted at $s$}
   $E := E \cup \{(a,b)\}$
   
   $w(a,b):={c}$
   
   \textsc{Update}($T(s),b$)
}

\Fn{\textsc{Update}$(T(s),v)$}{
  $upd := \min_{(x,v) \in E} L(s,x) + w(x, v)$
  
  \If{$v=s$ \textbf{or} $L(s,v) \geq upd$}{ \tcc{Node $v$ is stretched.}
  \Return}
  $L(s,v):= upd$
  
  \For{$(v,y)\in E(G)$}{
    \textsc{Update}($T(s),y$)
  }
}
\caption{\label{alg:estree} Maintaining a monotone ES tree up to depth $D$ on $G$. Note that edge deletion can be achieved by setting the edge weight to $\infty$.}
\end{algorithm}

\end{document}